\numberwithin{equation}{section}	
\newtheorem{thm}{Theorem}[section]
\newtheorem{defn}[thm]{Definition}
\newtheorem{example}[thm]{Example}
\newtheorem{prob}[thm]{Problem}
\newcommand{\Query}{Q}
\newcommand{\QSet}{\mathbf{\Query}}
\newcommand{\DAG}{G_\QSet}
\newcommand{\Conf}{\phi}
\newcommand{\ConfSet}{\Phi}
\newcommand{\pdf}{f_T(t)}
\newcommand{\Dataset}{D}
\newcommand{\Price}{\pi}
\newcommand{\StatsD}{s_\Dataset}
\newcommand{\Cost}{C}
\newcommand{\Contract}{\mathscr{C}}
\newcommand{\Util}{U}
\newcommand{\Itv}{I}
\newcommand{\Demand}{M}
\newcommand{\ProfitAll}{\mathscr{P}}
\newcommand{\Loss}{L}
\newcommand{\Risk}{R}
\newcommand{\Rate}{\alpha_\Cost}
\newcommand{\Ratej}{\alpha_{\Cost j}}
\newcommand{\ParaDemandT}{{\alpha_\Demand}}
\newcommand{\ParaDemandP}{{\beta_\Demand}}
\newcommand{\ParaDemandU}{{\lambda_\Demand}}
\newcommand{\ParaDemand}{{\gamma_\Demand}}
\newcommand{\ParaUtilT}{{\alpha_\Util}}
\newcommand{\ParaUtilP}{{\beta_\Util}}
\newcommand{\Target}{\mathcal{T}}
\newcommand{\target}{\tau}
\newcommand{\Expect}{\hat{T}}
\newcommand{\expect}{\hat{t}}
\newcommand{\manyBucket}{Contract\xspace}
\begin{document}

\title{A Consumer-Centric Market for Database Computation in the Cloud}

\numberofauthors{1}
\author{
	\phantom{aa}Yue Wang\phantom{aaa}\qquad Alexandra Meliou\qquad Gerome Miklau \and \\
	\alignauthor
	    \affaddr{College of Information and Computer Sciences}\\
	    \affaddr{University of Massachusetts Amherst\\}
	    \affaddr{ \{yuewang,ameli,miklau\}@cs.umass.edu}
}

\makeatletter
\def\@copyrightspace{\relax}
\maketitle

\begin{abstract}

The availability of public computing resources in the cloud has revolutionized
data analysis, but requesting cloud resources often involves complex decisions
for consumers. Under the current pricing mechanisms, cloud service providers
offer several service options and charge consumers based on the resources they
use. Before they can decide which cloud resources to request, consumers have
to estimate the completion time and cost of their computational tasks for
different service options and possibly for different service providers. This
estimation is challenging even for expert cloud users.

We propose a new market-based framework for pricing computational tasks in the
cloud. Our framework introduces an \emph{agent} between consumers and cloud
providers. The agent takes data and computational tasks from users, estimates
time and cost for evaluating the tasks, and returns to consumers
\emph{contracts} that specify the price and completion time. Our framework can
be applied directly to existing cloud markets without altering the way cloud
providers offer and price services. In addition, it simplifies cloud use for
consumers by allowing them to compare contracts, rather than choose resources
directly. We present design, analytical, and algorithmic contributions focusing
on pricing computation contracts, analyzing their properties, and optimizing
them in complex workflows.
We conduct an experimental evaluation of our market framework over a
real-world cloud service and demonstrate empirically that our market
ensures three key properties: (a) that consumers benefit from using
the market due to competitiveness among agents, (b) that agents have an
incentive to price contracts fairly, and (c) that inaccuracies in
estimates do not pose a significant risk to agents' profits. Finally,
we present a fine-grained pricing mechanism for complex workflows and
show that it can increase agent profits by more than an order of
magnitude in some cases.

\end{abstract}


\section{Introduction} \label{sec:intro}

The availability of public computing resources in the cloud has revolutionized
data analysis. Users no longer need to purchase and maintain dedicated
hardware to perform large-scale computing tasks. Instead, they can execute
their tasks in the cloud with the appealing opportunity to pay for just
what they need. They can choose virtual machines with a wide variety of
computational capabilities, they can easily form large clusters of virtual
machines to parallelize their tasks, and they can use software that is already
installed and configured.

Yet, taking advantage of this newly-available computing infrastructure
often requires significant expertise. The common pricing mechanism of the
public cloud requires that users think about low-level resources (e.g. memory,
number of cores, CPU speed, IO rates) and how those resources will translate
into efficiency of the user's task. Ultimately, users with a well-defined
computational task in mind care most about two key factors: the task's
completion time and its financial cost. Unfortunately, many users lack the
sophistication to navigate the complex options available in the cloud and
to choose a configuration\footnote{A configuration here means a set of system resources and its settings, provided by the cloud provider. It includes the number of virtual instances of a cluster, the buffer size of a cloud database, and so on.}
that meets their preferences.

As a simple example, imagine users who need to execute a workload of
relational queries using the Amazon Relational Database Service (RDS).
They need to select a machine type from a list of more than 20 possible
options, including ``db.m3.xlarge'' (4 virtual CPUs, 15GB
of memory, costing \$0.370 per hour) and ``db.r3.xlarge'' 
(4 virtual CPUs, 30.5GB of memory, costing \$0.475 per hour). The query
workload may run more quickly using db.r3.xlarge, because it has more memory,
however the hourly rate of db.r3.xlarge is also more expensive, which may
result in higher overall cost. Which machine type should the users choose if
they are interested in the cheapest execution? Which machine type should they
choose if they are interested in the cheapest execution completing within 10
minutes? Typical users do not have enough information to make this choice, as
they are often not familiar with configuration parameters or cost models.

The reality of users' choices is even more complex since they may
choose one of five data management systems through RDS, or other query engines using EC2, including parallel processing engines, and different configuration options for each. They might also be tempted to compare multiple service providers, in which case they would have to deal with different pricing mechanisms in addition to different configuration options. Amazon RDS 
charges based on the capacity and number of computational nodes per hour; Google BigQuery 
charges based on the size of data processed; Microsoft Azure SQL Database 
charges based on the capacities of service tiers like database size limit and transaction rate.

As a result of this complexity, many users of public cloud resources make na\"ive, suboptimal choices that result in overpayment, 
and/or performance that is contrary to their preferences (e.g., it exceeds their desired deadline or exceeds their budget). Thus, instead of paying only for what they need, the reality is that they pay for what they do not need and, even worse, they pay more than they have to for it.

\paragraph{A market for database computations}\smallskip
To ease the burden on users we propose a new market-based framework for pricing computational tasks in the cloud. Our framework introduces an entity called an \emph{agent}, who acts as a broker between consumers and cloud service providers. The agent accepts data and computational tasks from users, estimates the time and cost for evaluating the tasks, and returns to consumers \emph{contracts} that specify the price and
completion time for each task.

Our market can operate in conjunction with existing cloud markets, as it does not alter the way cloud providers offer and price services. It simplifies
cloud use for consumers by allowing them to compare contracts, rather than choose resources directly. The market also allows users to extract more value from public cloud resources, achieving cheaper and faster query processing than naive configurations. At the same time, a portion of the value an agent helps extract from the cloud will be earned by the agent as profit.

Agents are conceptually distinct from cloud service providers in the sense that they have their intelligent models to estimate time and cost given consumers queries. In other words, agents take the risk of estimation, while service providers simply charge based on resource consumption, which guarantees profit. In practice, an agent could be a service provider (who provides estimation as a service in addition to cloud resources), a piece of software sold to consumers, or a separate third party who provides service across multiple providers.

\smallskip
\noindent
\textbf{Scope.}
Our goal in this paper is \emph{not} to develop a new technical approach for
estimating completion time or deriving an optimal configuration for a cloud-based computation.  Prior work has considered these challenges, but, in our view, has not provided a suitable solution to the complexity of cloud provisioning. The reason is that estimation, even for relatively well-defined tasks like relational workloads, is difficult.  Proposed methods require complicated profiling tasks to generate models and specialize to one type of workload (e.g., Relational database \cite{Karampaglis:2014} or MapReduce \cite{Herodotou:2011:VLDB}).  In addition, there is inherent uncertainty in prediction, caused by multi-tenancy common in the cloud~\cite{Wang:2010:HotCloud, Schad:2010, Farley:2012, Tudoran:2012, Kiefer:2014}. Lastly, users' preferences are complex, involving both completion time  \cite{Ramakrishnan:2011} and cost \cite{Pandey:2010, Yi:2012, Kokkinos:2013, Zhou:2014, Li:2015}, which have been considered as separate goals~\cite{Herodotou:2011:SOCC, Malawski:2012, Marcus:2016}, but have not been successfully integrated.  

Our market-based framework incentivizes expert agents to employ combinations
of existing estimation techniques to provide this functionality as a service
to non-expert consumers. Users can express preferences in terms of their
\emph{utility}, which includes both time and cost considerations. Uncertainty
in prediction becomes a risk managed by agents, and included in the price of
contracts, rather than a problem for users. Ultimately our work complements
research into better cost estimation in the cloud~\cite{Xiong:2011, Chi:2013,
Herodotou:2011:SOCC}. In fact, our market will function more effectively
as such research advances and agents can exploit new techniques for better
estimation.

Our work makes several contributions:

\begin{itemize}[leftmargin=5mm, itemsep=0mm, topsep=0mm]

\item We define a novel market for database computations,
including flexible contracts reflecting user preferences.  

\item We formalize the agent's task of pricing contracts and propose an efficient algorithm for optimizing contracts. 

\item We perform extensive evaluation on Amazon's public cloud, using benchmark queries and real-world scientific workflows.  
We show that our market is practical and effective, and satisfies key properties ensuring that both consumers and agents benefit from the market.

\end{itemize}

The paper is organized as follows. We present an overview of the market and
main actors in Section~\ref{sec:overview}. We formally define contracts and
optimal pricing of contracts in Section~\ref{sec:contracts} and \ref{sec:pricing}. 
We extend our framework to support fine-grained pricing to further optimize contracts in Section~\ref{sec:fineGrained}. In Section~\ref{sec:baseline}, we introduce several alternatives.
In Section~\ref{sec:evaluation}, we
present a thorough evaluation of our proposed market, and demonstrate that it guarantees several important properties. Finally, we discuss related work and extension and summarize our contributions
in Sections~\ref{sec:related}, \ref{sec:discussion} and~\ref{sec:conclusion}, respectively.


\section{Computation market overview} \label{sec:overview}

In this section, we discuss the high-level architectural components of our
computation market: three types of participants and their interactions through
computation contracts. Our computation market exhibits several desirable
properties, which we mention in Section~\ref{sec:prop_overview}.

\subsection{Market participants} \label{sec:participants}

Our goal is to model the interactions that occur in a computation market, and
design the roles and framework in a way that ensures that the market functions
effectively. Our computation market involves three types of participants:

\begin{figure}[t]
\begin{center}
\includegraphics[width=.95\columnwidth]{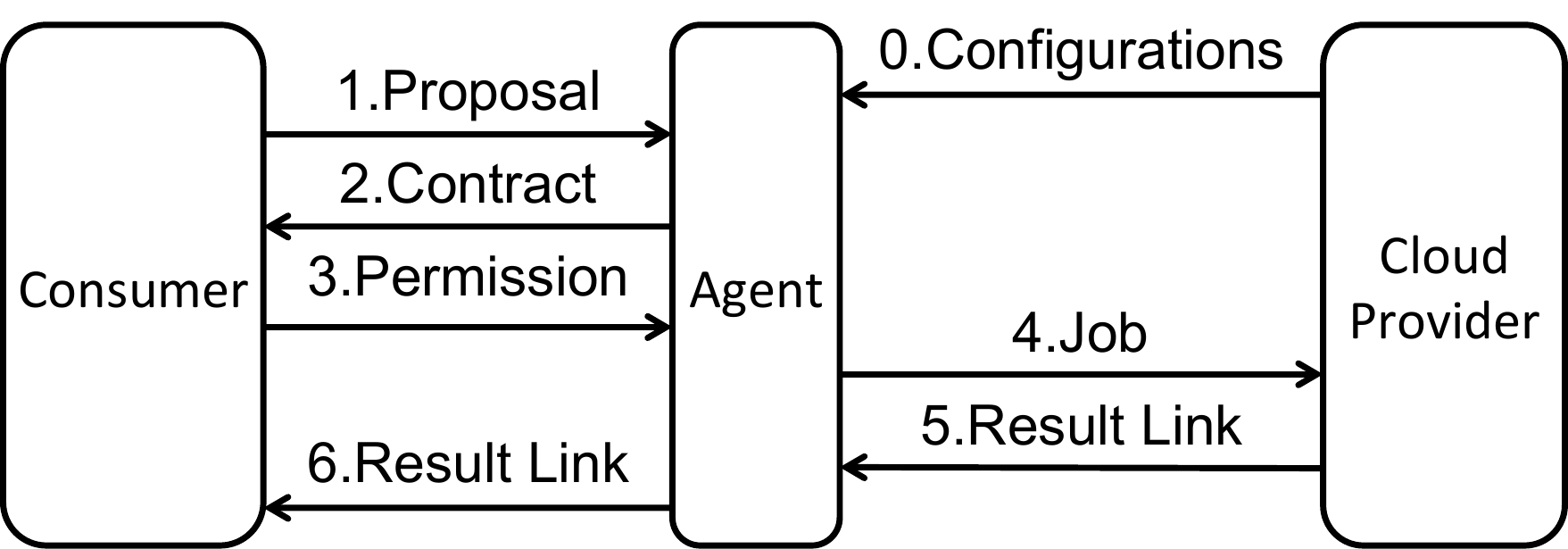}

\caption{An overview of interactions of the main participants in the computation market: the consumer, the agent, and the cloud provider.}
\label{fig:framework}
\end{center}
\end{figure}

\begin{description}[leftmargin=5mm]

\item[Cloud provider.] 
Cloud providers are public entities that offer computational resources as a
service, on a pay-as-you-go basis.  These resources are often presented as virtual machine
types and providers charge fees based on the capabilities of the virtual
machines and the duration of their use. Our framework does not enforce any
assumptions on the types, quantity, or quality of resources that a cloud
provider offers.

\item[Consumer.] \looseness -1
A consumer is a participant in our computational market who needs to complete
a computational task over a dataset $\Dataset$. We assume the computational
task is a set of queries or MapReduce jobs\footnote{For simplicity of terminology we use ``query'' to refer to either a query or MapReduce job.}, denoted as $\QSet = \{Q_1, \allowbreak Q_2, ..., \allowbreak Q_n\}$.
We assume that the consumer does not own the computational resources needed to
complete $\QSet$, and thus needs to use cloud resources. However, the consumer
may not have the expertise to determine which cloud provider to use, which
resources to lease, or how to configure them. In our framework, the consumer wishes to retrieve the task
results $\QSet(\Dataset) = \{Q_1(\Dataset), \allowbreak Q_2(\Dataset),
\allowbreak ... , \allowbreak Q_n(\Dataset)\}$ within a specified timeframe,
and pay for these results directly. Therefore, the consumer's goal is to
complete the task efficiently and for low cost. Different consumers have different time and cost preferences. They will describe these preferences precisely using a utility function, as described later in Section \ref{sec:utility}.

\item[Agent.] \looseness -1
Consumers' needs are task-centric (time and price to complete a given task),
whereas cloud providers' abilities are resource-centric (time and price for a
type of resource). Due to this disparity, consumers and providers do not
interact directly in our framework. Rather, a semantically separate entity,
the \emph{agent}, is tasked with handling the interactions between consumers
and cloud providers. The agent receives a task request from a consumer and, in
response, calculates a price to complete the task, providing the consumer with a
formal contract. We review contracts in Section~\ref{sec:contract_overview},
and describe them in detail in Section~\ref{sec:contracts}.
The agent executes accepted contracts using public cloud resources, and earns
a profit whenever the contract price is greater than the actual cost of executing the
task. The agent's goals are to attract business by pricing contracts competitively
and to earn a profit with each transaction. One of the main challenges for the
agent is to assign accurate prices to consumer requests, which requires
knowledge of cloud resources, their capabilities and costs, and
expertise in tuning and query prediction.
\end{description}

Figure \ref{fig:framework} illustrates the interactions among the three market
participants. 
In step 0, the agent collects details on available
configurations from the cloud provider to derive later price quotes on consumers' requests. 
This step may only need to be initiated once, and reused afterwards.
In steps 1 through 3, the agent receives a proposal including $\QSet$ and statistics
about dataset $\Dataset$, denoted $\StatsD$, which are sufficient for pricing. 
For example, $\StatsD$ can be the number of input records in each table \cite{Akdere:2012}, histograms on key columns or sets of columns \cite{Wu:2013:ICDE, Wu:2013:VLDB, Wu:2014}, a small sample of data \cite{Herodotou:2011:VLDB}, and other standard statistics relevant to the task.
The agent reasons about possible configurations and estimates the completion time and
financial cost of the queries, returning a priced contract to the consumer. If
the consumer accepts the contract, in steps 4 through 6, the agent executes a
job in the cloud according to the contract, computes the result, and returns a
link to the consumer. The link can be, for example, an URL pointing to Amazon
S3 or any other cloud storage service. Finally the agent receives payment
based on the accepted contract and the actual completion time.  We will see in Section~\ref{sec:contractDefinition} that contracts can involve complex prices that depend on the actual completion time.

\subsection{Contracts}\label{sec:contract_overview}

The contract is the core component of our framework, describing the terms of a
computational task the agents will perform and the price they will receive upon
completion of the task. The design of our market framework is intended to cope with the inevitable uncertainty of completion time.  Therefore, our contracts support variable pricing based on the actual completion time when the answer is delivered.  

We also formally model the time/cost preferences of the consumer using a \emph{utility function} that we assume is shared with the agent. The main technical challenge for the agent is to price a contract of interest to a consumer.  Pricing relies on the agent's model of expected completion time for the task as well as the consumer's utility.  From the consumers' side, they may receive and compare contracts from multiple agents in order to
choose the one that maximizes their utility. 

In this paper, we consider contracts and computational tasks that only
involve analytic workloads. These analytic workloads are different
from long-running services in the sense that their evaluation takes
limited amount of time, even though this time can be several hours or
days. Given this focus, we can assume that cloud resources do not
change during the execution of task. This means, for example, that the
capacity of virtual machines and their rate remain the same during the
execution of a contract. We discuss relaxing these factors in
Section~\ref{sec:discussion}.

\subsection{Properties and assumptions}\label{sec:prop_overview}

Our framework is designed to support three important properties:
competitiveness, fairness, and resilience. \emph{Competitiveness} guarantees
that agents have an incentive to reduce runtime and/or cost for consumers. 
\emph{Fairness}
guarantees that agents have an incentive to present accurate estimates to
consumers, and that they do not benefit by lying about expected completion
times. \emph{Resilience} means that an agent can profit in the marketplace
even when their estimates of completion time are imprecise and possibly
erroneous. We demonstrate empirically in Section~\ref{sec:evaluation} that our framework satisfies these crucial properties.

Our framework assumes honest participants; we defer the study of malicious consumers and agents to future work. Accepting an agent's contract means the consumer's data will be shared with the agent for evaluation of their task, however requesting contract prices from a set of agents reveals only the consumer's statistics and task description. 

Monopoly is not possible in this framework, and collusion among agents
is unlikely.\footnote{In fact, the agents and the existing cloud
service providers naturally form a monopolistic competition
\cite{varian2010intermediate, Mankiw2014}.} First, an agent cannot
constitute a monopoly, since consumers may always choose to use a
cloud service provider directly. A service provider cannot constitute
a monopoly either, as any agent with a valid estimation model can
enter the market. Second, collusion becomes unlikely as the number of
agents in the market increases. Any agent who does not collude with
others can offer a lower price and draw consumers, making any
collusion unstable.


\section{The consumer's point-of-view} \label{sec:contracts}

In this section, we describe the consumer's interactions with the market. A
transaction begins with a consumer who submits a request. This request
reflects their \emph{utility}, which is a precise description of
their preferences. Later, given multiple priced contracts, the consumers can
formally evaluate them according to the likely utility they will offer.

\subsection{Consumer utility} \label{sec:utility}

One of our goals is to avoid simplistic definitions of contracts in which a
task is carried out by a deadline for a single price. For one, many consumers
have preferences far more complex than individual deadlines: they can tolerate
a range of completion times, assuming they are priced appropriately. In
addition, we want agents to compete to offer contracts that best meet the
preferences of consumers.

A consumer's preferences are somewhat complex because they involve tradeoffs
between both completion time and price. We adopt the standard economic notion
of consumer \emph{utility}~\cite{varian2010intermediate} and model it
explicitly in our framework.
A utility function precisely describes a consumer's preferences by associating
a utility value with every (time, price) pair. A utility function can encode,
e.g., the fact that the consumer is indifferent to receiving their query
answer in 10 minutes at a cost of \$2.30 or 20 minutes at a cost of \$1.90
(when these two cases have equal utility values) or that receiving an
answer in 30 minutes at a cost of \$0.75 is preferable to both of the above
(when it has strictly greater utility).

\begin{defn}[Utility] 
\emph{Utility} $\Util(t, \Price)$ is a real-valued function of time and price,
which measures consumer satisfaction when a task is evaluated in time $t$ with
price $\Price$. 
\end{defn}

Larger values for $\Util(t, \Price)$ mean greater utility and a preferred
setting of $t$ and $\Price$. For a fixed completion time $t_0$, a consumer
always prefers a lower price, so $\Util(t_0, \Price)$ increases as $\Price$
decreases. Similarly, for a fixed price, $\Price_0$, a consumer always prefers
a shorter completion time, so $\Util(t, \Price_0)$ increases with decreasing
$t$.

To simplify the representation of a consumer's utility, we will restrict our
attention to utility functions that are piecewise linear. That is, we assume
the range of completion times $[0,\infty)$ is divided into a fixed set of
intervals, and that utility on each interval is defined by a linear function
of $t$ and $\Price$. This means that for each interval, the consumer has a
(potentially different) rate at which she/he is willing to trade more time for lower price, and vice versa. 

\begin{defn}[Utility -- piecewise] 
    A piece-wise utility function consists of a list of target times $\target_0,
    \dots, \target_n$, where $0 = \target_0 < \target_1 < \dots < \target_{n-1} < \target_n = \infty$, and linear functions $u_1(\pi,t), \dots, u_n(\pi,t)$.
The utility is $u_{i}(\pi,t)$ for $t \in [\target_{i-1},\target_i)$. 
\end{defn}
Such utility functions can express conventional deadlines, but also much more
subtle preferences concerning the completion time and price of a computation.

\begin{example} \label{ex:utility}
Consumer Carol has two target completion times for her computation: 10 minutes
and 20 minutes. Results returned in less than 10 minutes are welcome, but she
doesn't wish to pay more to speed up the task. When results are returned
between 10 minutes and 20 minutes, every minute saved is worth 1 cent to her.
She does not want result returned after 20 minutes. Her piecewise utility
function is:
\begin{equation*}
\Util(t, \Price) =
\left\{
\begin{aligned} 
      u_1(t, \pi) = & -\Price & (t < 10) \\
      u_2(t, \pi) = & -t - \Price + 10 & (10 \leq t < 20) \\
      u_3(t, \pi) = & -50 & (t \geq 20) \\
\end{aligned}
\right.
\end{equation*}
Figure~\ref{fig:utility} depicts $\Util(t, \Price)$ when $t < 20$.
\end{example}

In practice, users can construct the utility function by defining
several critical points on a graphical user interface, or answering a
few simple pair-wise preference questions.

\begin{figure}
        \centering
                \includegraphics[width=0.4\textwidth]{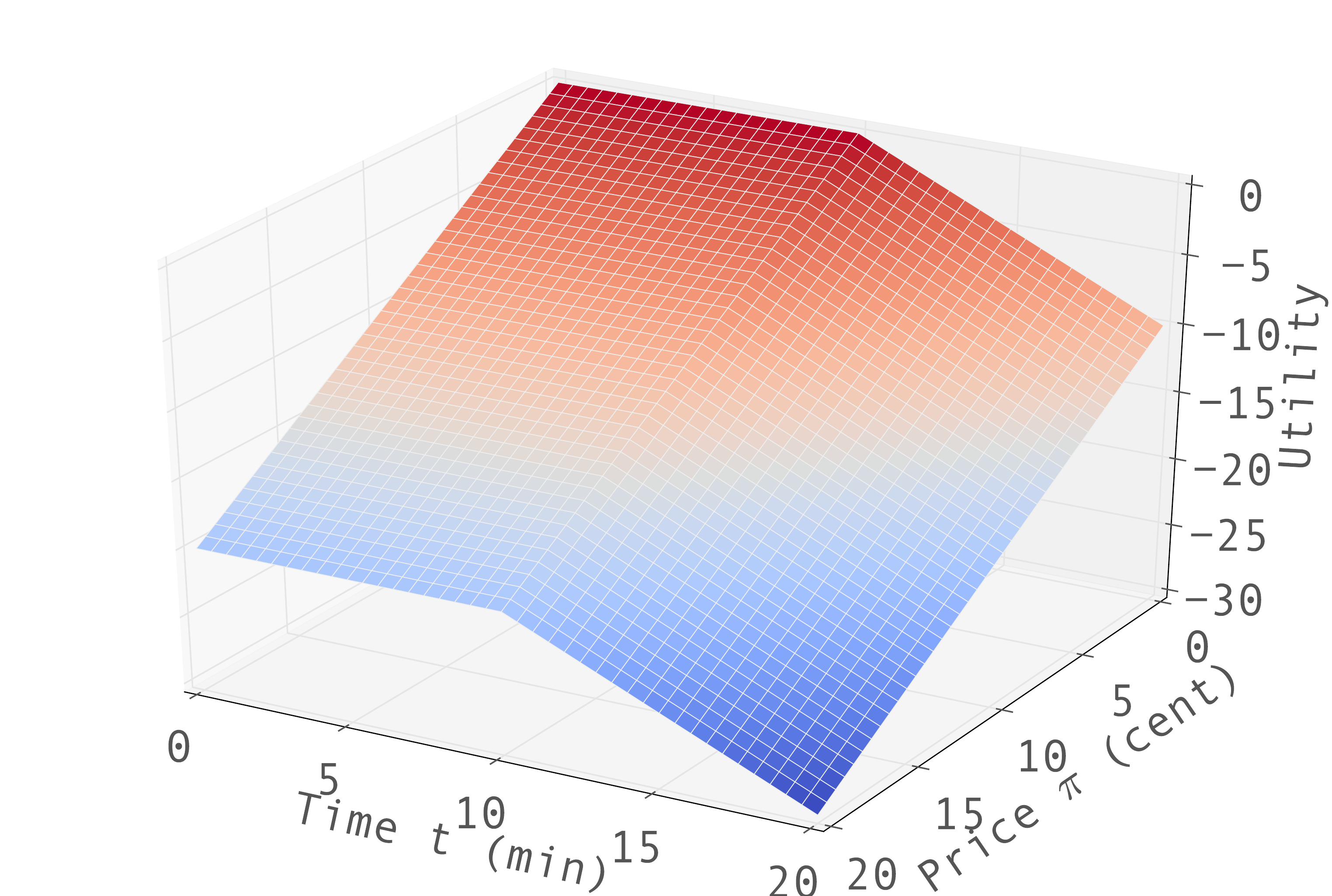}
                \label{fig:utilityTwoDeadline}
        \caption{Utility function for Example \ref{ex:utility} when $t<20$.}\label{fig:utility}
\end{figure}

\subsection{Consumer contract proposal} \label{sec:contractDefinition}

The process of agreeing on a contract starts with the consumer advertising to
agents the basic terms of a contract: the task $\QSet$, the statistics of the
database $\StatsD$, and their piecewise utility function $U$.

The terms of the contract are structured around the target times in the
utility function. Agents use the utility function to choose a suitable
configuration and pricing to match the preferences of the consumer. A
complete, priced contract is returned to the consumer, which is defined as
follows:

\begin{defn}[\manyBucket] \label{def:manyBucket}
A contract is a six-tuple $\Contract=(\QSet, \allowbreak \StatsD, \allowbreak
\Target, \allowbreak P, \allowbreak \Expect, \allowbreak \Pi)$, where $\QSet$ is a task,
$\StatsD$ consists of statistics about the input data, $\Target=(\target_0, \target_1, \dots ,
\target_n)$ is an ordered list of target completion times, $P=(p_1, \dots, p_n)$ is an
ordered list of probabilities, $\sum_i p_i=1$, $\Expect=(\expect_1, \dots, \expect_n)$ is an
ordered list of expected completion times, and $\Pi=(\pi_1(t), \dots, \pi_n(t))$
is a list of price functions where $\pi_i$ is defined on $[\target_{i-1},\target_i)$. 
\end{defn}

When a consumer
and agent agree on a contract $\Contract$, it means that the agent has
promised to deliver the answer to task $\QSet$ on $\Dataset$ after time
$t\in[0,\infty)$, where the likelihood that $t$ falls in interval
$[\target_{i-1},\target_{i})$ is $p_{i}$. Accordingly, if the answer is delivered in the
time interval $[\target_{i-1},\target_{i})$ the consumer agrees to pay the specified
price, $\pi_{i}(t)$. $\Expect$ is used for computing expected utility as we will see in Section~\ref{sec:consumerContractEvaluation}.
The data statistics $s_D$ are given to the agent by the consumer; the agent includes them in the contract because the pricing calculation relies on these statistics.

The contract is an agreement to run the task once. The probabilities provided
by the agent are a claim that if the task were run many times, a fraction of
roughly $p_i$ of the time, the completion time would be in the interval
$[\target_{i-1},\target_{i})$. Without this information, the consumer has no way to
effectively evaluate the alternative completion times that could occur in a
contract. For example, all alternatives but one could be very unlikely and
this would change the meaning of the contract. We will see in
Section~\ref{sec:pricing} how the agent generates these probabilities.

\begin{figure}
        \centering
       
        \begin{subfigure}[b]{0.15\textwidth}
            \includegraphics[width=\textwidth]{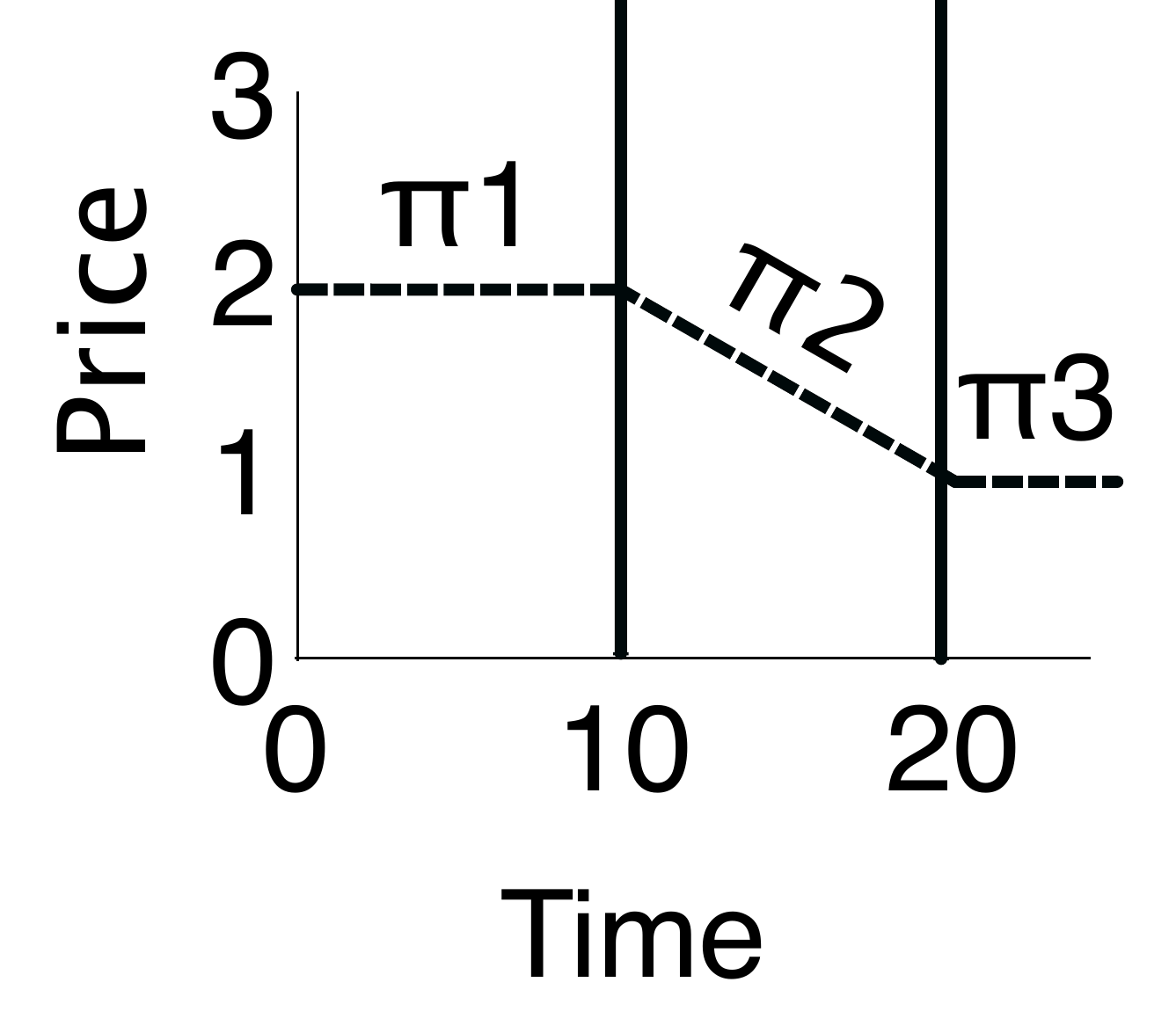}
            \caption{} \label{fig:contracts}
        \end{subfigure}
        \hfill
        \begin{subfigure}[b]{0.31\textwidth}
            \begin{center}
            \begin{tabular}{lcc}
             \toprule
            \textbf{Contracts} & $\mathbf{\Contract_1}$ & $\mathbf{\Contract_2}$ \\
             \midrule
             $\Expect$ & $(9,15,21)$ & $(9,15,21)$\\
             $P$ & $(0.2,0.5,0.3)$ & $(0.1,0.8,0.1)$ \\
             \midrule
             Utility & -18.65 & -10.4 \\
             \bottomrule 
            \end{tabular}
            \end{center}
            \caption{}
            \label{tbl:utilityComsumer}
        \end{subfigure}
        \caption{
        (a) Price function for Example~\ref{ex:priceFunction}.
        (b) Comparison of two contracts.
        }
        \label{fig:contractComparison}
\end{figure}

\begin{example} \label{ex:priceFunction}
An example contract based on the utility function of Example~\ref{ex:utility} is defined by $\Target=(0,10,20,\infty)$, probabilities $P=(0.2,0.5,0.3)$, expectations $\Expect=(9,15,21)$, and prices $\Pi$ (also illustrated in Figure~\ref{fig:contracts}) defined as:
\begin{equation*}
\Pi(t) = 
\left\{
\begin{aligned}
     \pi_1(t) = & \: 2 & (t < 10) \\
      \pi_2(t) = & \: 3 - 0.1 t  & (10 \leq t < 20) \\
      \pi_3(t) = & \: 1  & (t \geq 20) 
\end{aligned}
\right.
\end{equation*}
\end{example}

\subsection{Consumer's contract evaluation} \label{sec:consumerContractEvaluation}

In response to a proposed contract, a consumer hopes to receive a number of priced versions of the contract from agents.  Each contract may offer the consumer a different range of utility values over the probability-weighted completion times.  The consumer's goal is to maximize their utility, so to choose between contracts, the consumer should compute the expected utility of each contract and choose the one with greatest expected utility.
All contracts based on 1 utility request should share the same target completion times.

\begin{defn}[Expected utility of a contract] \label{defn:expectedUtility}
The expected utility of a contract $\Contract=(\QSet, \allowbreak \StatsD, \allowbreak \Target, \allowbreak P, \allowbreak \Expect, \allowbreak \Pi)$ with respect to utility function $\Util(t,\Price)$ is 
$$
\sum_{i=1}^n  p_i  u_i(\expect_i,\pi_i(\expect_i)) 
$$
when $u_i(t, \pi)$ and $\pi_i(t)$ are linear functions. 
\end{defn}

\begin{example}
Suppose the consumer uses the utility function in Example \ref{ex:utility}, and two agents return two contracts $\Contract_1$ and $\Contract_2$. Further assume both agents return the same price function $\Pi$ in Example \ref{ex:priceFunction}, and the expected time $\Expect$ 
are also the same. Only the probabilities $P$ differ as illustrated in Figure \ref{tbl:utilityComsumer}. The consumer computes the expected utility according to Definition \ref{defn:expectedUtility} and chooses $\Contract_2$ as it has greater utility.
\end{example}


\section{The agent's point-of-view} \label{sec:pricing}

We now explain the agent's interactions in the market. The agent's main
challenge is to assign prices to a contract, coping with the uncertainty of
completion time, while taking into account the consumer's utility and the
market demand. We formalize two variants of pricing (risk-aware and
risk-agnostic) and formulate both as optimization problems.

\subsection{Pricing preliminaries}

Upon receipt of the terms of a contract and the utility function of a
consumer, the agent must complete the contract by computing prices for each
interval and assigning probabilities to each interval.

For each configuration, we assume the financial cost $\Cost$ borne by the
agent is a function of $t$: $\Cost(t) = \Rate \cdot t$, where $\Rate$ is the
unit rate of the configuration, and can be different across configurations. Thus, the pricing of a contract depends
critically on the estimate of the completion time for $\QSet$. Since estimates
of completion time are uncertain, we model completion time $T$ as a
probability distribution over $[0, \infty)$ with probability density function
$\pdf$. The true $\pdf$ is unlikely to be known and, in practice, must be
estimated by the agent with respect to a selected configuration. Based on
$\pdf$ and $\Cost(t)$, the agent proposes a price function $\Price(t)$,
which means the consumer should pay $\Price(t)$ when the completion time is
$t$.

The agent has three goals when pricing a contract: (i) to maintain
profitability, (ii) to offer the consumer appealing utility, and (iii) to
compete with the offerings of other agents. We discuss each of these goals
below.

\paragraph{(i) Profitability}\smallskip
Naturally the agents would like to price the contract higher than their cost of
execution so that they can earn a profit. Profit is uncertain for an agent
because it is difficult to predict completion time in the cloud. We say a
contract is \emph{profitable in expectation} if its expected profit, with
respect to the distribution $\pdf$, is greater than zero.
\begin{equation}
\label{eqa:expectedProfit}
\begin{aligned}
E[\mbox{\it profit}]  = &\sum_{i=1}^np_i\left(\Price_i(\expect_i)-C(\expect_i)\right) \\
\end{aligned}
\end{equation}
We call a contract profitable (for the agent) as long as it is profitable in
expectation. The agents should always price contracts so that they are
profitable, but it is possible that a particular contract ends up being
unprofitable.

\begin{defn}[Profitable contract] 
A profitable contract is a contract with $E[\mbox{profit}] > 0$.
\end{defn}

\paragraph{(ii) Prioritizing consumer utility}\smallskip
Since the agents knows the consumer's utility function $\Util(t, \Price)$
they can (and should) take it into account when choosing a configuration and
pricing. To the extent that the agents can match the consumer's utility, their
pricing of the contract will be more appealing to the consumer. The agents can
evaluate the expected utility $E[\Util]$ over the distribution of 
time $T$ based on their estimates and price function $\Price(t)$:
\begin{equation} \label{eq:expectedUtility}
E[U]=\sum_{i=1}^n p_iu_i(\expect_i,\Price_i(\expect_i))
\end{equation}

Profitability for the agent and utility for the consumer are conflicting
objectives: a contract that offers greater profit to the agent will offer
lower utility to the consumer. We will see that the agent will attempt to
maximize the consumer's utility, subject to constraints on their
profitability.

\paragraph{(iii) Market competitiveness and demand}\smallskip 
In all markets, including ours, market forces and competition prevent agents
from raising prices without bound. In economics, a market demand function
describes how these forces impact the pricing of
goods~\cite{varian2010intermediate}. 

When the agents decrease the price of a contract, the expected profit of the
contract is reduced but they increase the utility of the contract to
consumers. In a marketplace, when utility for the consumer increases, a
greater number of consumers will accept the contract. Thus, the agents must
balance the profit made from an individual contract with the overall profit
they make from selling more contracts. To model this, we must make an
assumption about the relationship between utility and the number of contracts
that will be accepted by consumers in the market. This relationship is
represented by the \emph{demand function} which is defined as a function of
utility. A linear demand curve is common in
practice~\cite{varian2010intermediate}, so we focus on demand functions of the
form $\Demand(\Util) = a + b\Util$. Our framework can support demand functions
of different forms, but we do not discuss these in detail.

In a real market, agents would learn about demand through repeated
interactions with consumers. An agent's demand function could depend on, for
example, customer loyalty, the best contracts competitors can offer, and
other factors. These factors are beyond our scope. In order to simulate the
functioning of a realistic market, we must assume a demand function and, for
simplicity, we assume the demand functions of all agents are the same in the rest of
this paper.

\subsection{Contract pricing}

We start from the simplest case in which the consumer has a task $\QSet$ and a
single configuration $\Conf$. So the cost function $\Cost(t)$ and the pdf of
the distribution of completion time $\pdf$ are fixed. The agent needs to
define the price function $\Price(t)$ to present a competitive contract to the
consumer. Let the overall profit be $\ProfitAll$, which equals the unit profit
\emph{profit} multiplied by the sales $\Demand(\Util)$. Notice that
\emph{profit} is the profit of a single contract while $\ProfitAll$ is the
overall profit of all contracts that the agent returns to all consumers in the
market. The agent wants to find the price function that leads to the greatest
total profit while satisfying the profitability constraint. This results in
the following optimization problem:

\begin{prob}[Contract pricing] \label{prob:Pricing}
Given a contract $\Contract=(\QSet, \allowbreak \StatsD, \allowbreak \Target,
\allowbreak P, \allowbreak \Expect, \allowbreak \Pi)$, utility function
$\Util$, and demand function $\Demand$, the optimal price for $\Contract$ is:
\begin{equation*}
\begin{aligned}
\mbox{maximize}: &\quad \ProfitAll =  E[\mbox{profit}] \cdot E[\Demand(\Util)] \\
\mbox{subject to}: &\quad E[\mbox{profit}]  > 0\\ 
\end{aligned}
\end{equation*}
\end{prob}

Let $\Itv_i$ be the interval $(t_i, t_{i+1})$, and recall that $p_i$ is the
probability that the completion time falls in $\Itv_i$:
\begin{equation}
\label{eq:discreteP}
p_i = \int_{t=t_i}^{t_{i+1}} f_T(t) dt  
\end{equation}
Let $T_i$ be a random variable of completion time in interval $\Itv_i$. 
It is
a truncated distribution with probability density function $f_T(t | t \in
\Itv_i)$.
Let $C_i$ be a random variable of cost in interval $\Itv_i$.  $C_i = \Cost(T_i)$.
So expectation $\hat{t}_i$ and expectation $c_i$ is:
\begin{gather} \label{eq:discreteT}
\hat{t}_i = E[T_i] = \int_{t \in \Itv_i} t f_T(t | t \in \Itv_i) dt \\
\label{eq:discreteC}
c_i = E[C_i] = \int_{t \in \Itv_i} C(t) f_T(t | t \in \Itv_i) dt 
\end{gather}
Therefore the expected unit profit and expected demand are:
\begin{gather} \label{eq:discreteProfit}
E[\mbox{\it profit}] = \sum_{i=1}^{|I|} (\Price_i - c_i)p_i \\
 \label{eq:discreteDemand}
E[\Demand(\Util)] = \sum_{i=1}^{|I|} \Demand \left( \Util(\expect_i, \Price_i)\right) p_i 
\end{gather}

\paragraph{Linear case}\smallskip
\label{sec:linearCaseSolution}
When $\Util$ and $\Demand$ are linear functions, this problem becomes a convex quadratic programming problem. It has an analytical solution.
More details can be found in the appendix. Here we describe the conclusion only, under the following assumptions:  

\begin{itemize}[leftmargin=5mm]
\item The consumer specifies a linear utility function $\Util(t, \Price) =
-\ParaUtilT \cdot t - \ParaUtilP \cdot \Price$, which means they are always
willing to pay $\ParaUtilT$ units of cost to save $\ParaUtilP$ units of time.

\item The demand function is linear: $\Demand(\Util) = \ParaDemand +
\ParaDemandU \cdot \Util$. Thus, when $\Util$ increased by
$1/\ParaDemandU$, $1$ more contract would be accepted. Since $\Util(t,
\Price)$ is linear, the demand function can be written as $\Demand(\Util) =
\ParaDemand - \ParaDemandT t - \ParaDemandP \Price$.

\end{itemize}

Applying Equations~\ref{eq:discreteProfit} and~\ref{eq:discreteDemand} to Problem~\ref{prob:Pricing}, we compute the overall profit $\ProfitAll$. $\ProfitAll$ is maximized when

{\small
\begin{equation*}
\bold{\Price}^T\bold{p} =
\left\{
\begin{aligned}
\frac{\ParaDemand - \ParaDemandT \bold{\expect}^T\bold{p} + \ParaDemandP \bold{c}^T\bold{p}}{2\ParaDemandP},\; &  \ParaDemand - \ParaDemandT \bold{\expect}^T\bold{p} - \ParaDemandP \bold{c}^T\bold{p} \geq 0 \\
\bold{c}^T\bold{p} + \epsilon,\; & otherwise
\end{aligned}
\right.
\end{equation*}
}
where $\epsilon$ is a small positive value, and 4 vectors $\bold{\Price}=[\Price_1, \Price_2, ...]^T$, $\bold{p}=[p_1, p_2, ...]^T$, $\bold{\hat{t}}=[\hat{t}_1, \hat{t}_2, ...]^T$ and $\bold{c}=[c_1, c_2, ...]^T$.

Furthermore, when $\ParaDemand - \ParaDemandT \bold{\expect}^T\bold{p} - \ParaDemandP \bold{c}^T\bold{p} \geq 0$, 
\begin{equation} \label{eq:linearProfit}
\ProfitAll =  \frac{(\ParaDemand - \ParaDemandT \bold{\expect}^T\bold{p} - \ParaDemandP \bold{c}^T\bold{p})^2}{4 \ParaDemandP}
\end{equation}

\paragraph{Selecting a configuration}\smallskip
An agent typically has many available configurations for evaluating $\QSet$.
We denote the set of configurations by $\ConfSet=\{\Conf_1, \allowbreak \Conf_2,
\allowbreak ...\}$. Every configuration $\Conf_j$ has its own cost function
$\Cost_j(t) = \Ratej \cdot t$, where $\Ratej$ is the unit rate for $\Conf_j$.

The agent will select the configuration that results in the most
profit. The distribution of time $T$ and its corresponding $p_i$, $\expect_i$, and $c_i$ then become variables in
Problem~\ref{prob:Pricing}. A na\"ive agent can select and enumerate a small $\ConfSet$ to find the best possible solution. A smarter agent will use an analytic model to solve the problem \cite{Wu:2014, Herodotou:2011:SOCC}.

\subsection{Risk-aware pricing} \label{sec:riskAware}

Pricing contracts involves some risk for the agents: if their estimated
distributions of time and cost are different from the actual ones, they can
lose profit or even suffer losses.
Next, we formally define risk based on loss and add it as part of the objective.

\begin{defn}[Loss]
Let the actual distribution of completion time be $T^*$ and the optimal price
function be $\Price^*$. When the agent generates a contract with price
function $\Price$, the loss of revenue $\Loss$ is: $\Loss_{T^*}(\Price) =
\ProfitAll(\Price^*, T^*) - \ProfitAll(\Price, T^*) = \ProfitAll(\Price^*,
p^*, t^*, c^*) - \ProfitAll(\Price, p^*, t^*, c^*)$, where $p^*$ is the actual probabilities, $t^*$ is the actual expected completion times, and $c^*$ is the actual costs.
\end{defn}

There is always inherent uncertainty in the prediction of the distributions of
completion of time and cost, so it is generally not possible for the agents to
achieve the theoretically optimal profits based on the actual distributions.
However, they can plan for this risk, and assess how much such risk they are
willing to assume.
We proceed to define risk as the worst-case possible loss that an agent can suffer.
\begin{defn}[Risk]\label{def:risk}
    The risk of the agent is a function of price $\Price$, and is defined as the maximum loss over possible distributions of completion time:
$\Risk(\Price) = \max_{T^*} \Loss_{T^*}(\Price)$.
\end{defn}
We incorporate risk into the agent's optimization problem by adding it to the objective function:
\begin{equation}
\label{eq:riskAware}
\begin{aligned}
\mbox{maximize}:\; & \ProfitAll(\Price, p, t, c) - \lambda\Risk(\Price) \\
\mbox{subject to}:\; & E[profit]  > 0\\ 
\end{aligned}
\end{equation}
The parameter $\lambda$ in the objective is a parameter of risk that the agent
is willing to assume. Larger values of $\lambda$ reduce the worst-case losses
(conservative agent), while smaller values of $\lambda$ increase the assumed
risk (aggressive agent).
The agent can estimate the risk $\Risk(\Price)$ by solving the following
optimization problem, with variables $\Price^*$, $p^*$, $t^*$, and $c^*$:
\begin{equation*}
\begin{aligned}
\mbox{maximize}:\; & \Loss_{T^*}(\Price) = \ProfitAll(\Price^*, p^*, t^*, c^*) - \ProfitAll(\Price, p^*, t^*, c^*) \\
\mbox{subject to}:\; & E[profit^*]  =  \sum(\Price_i^* - c_i^*) p_i^* > 0\\
&LBound_t \leq t^*_i - \hat{t}_i \leq UBound_{t} \\ 
&LBound_c \leq c^*_i - c_i \leq UBound_{c} \\ 
&0 \leq p^*_i  \leq 1 \\ 
&\sum p^*_i = 1 \\ 
\end{aligned}
\end{equation*}
where $LBound$ and $UBound$ are empirical values set by the agent. For instance, an agent's analytic model reports estimated $\hat{t_1}=1$ min. However, the agent has executed 10 contracts and the actual mean of the time is $t_1=1.1$ min. The agent can set $LBound_t = 0$ and $UBound_t = 0.1$.


\section{Fine-grained contract pricing} \label{sec:fineGrained}

Our treatment of pricing in Section~\ref{sec:pricing} assumes that agents
select a single configuration for the execution of a consumer contract.
However, computational tasks often contain well-separated, distinct subtasks
(e.g., operators in a query plan or components in a workflow). These subtasks
may have vastly different resource needs.  For example, Juve et al. \cite{Juve:2013} profile multiple scientific workflow systems including Montage~\cite{Jacob:2009} 
and SIPHT~\cite{Livny:2008} and find that their components have dramatically different I/O, memory and computational requirements.

We now extend our pricing framework to support \emph{fine-grained} pricing, which allows agents to optimally assign separate configurations to each subtask of a computational task.
It provides more candidate contracts without changing the pricing Problem~\ref{prob:Pricing}. 
Fine-grained pricing has two benefits. First, by assigning a configuration for each subtask, instead of the entire task, agents can achieve improved time and cost, resulting in higher overall utility and/or higher profit. Second, considering subtasks separately gives agents the flexibility
to outsource some computation to other agents. While outsourcing computation
across agents is not a focus of our work, it is a natural fit for our
fine-grained pricing mechanism. Agents can choose to outsource subtasks to
other agents based on their specialization and capabilities, or for load
balancing. However, some challenges of outsourcing, such as
utility and forms of contracts that agents need to exchange are beyond the scope of our current work.

We model a computational task $\QSet$ as a directed acyclic graph (DAG)
$\DAG$. Every node in $\DAG$ is a subtask $\Query_i$. An edge between
subtasks $(\Query_i, \Query_j)$ means that the output of $\Query_i$ is an
input to $\Query_j$.  When subtasks are independent of one another, the DAG may be disconnected.  Our model assumes no pipelining in subtask
evaluation. Therefore, a subtask $\Query_j$ cannot be evaluated until all
subtasks $\Query_i$, such that $(\Query_i, \Query_j)\in \DAG$, have
completed their execution.

\begin{figure}[t]
\begin{center}
\includegraphics[width=0.7\columnwidth]{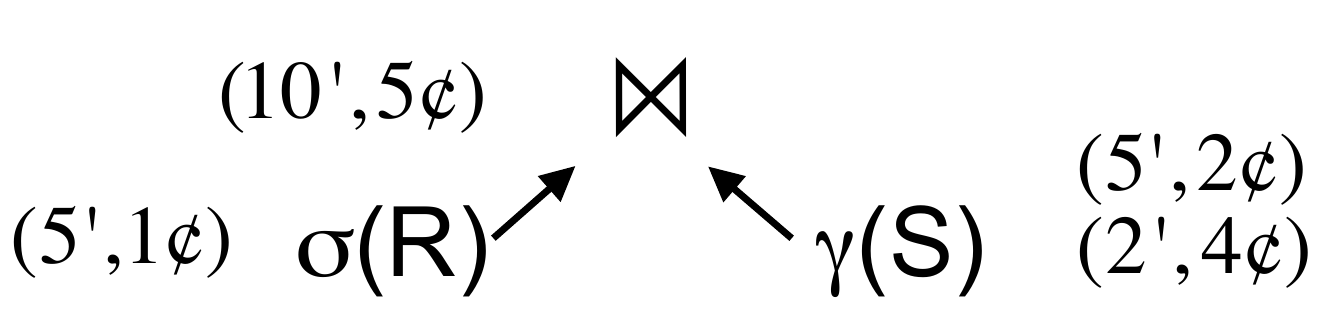}
\caption{An example of a simple relational query that can be broken into 3 subtasks, corresponding to different operators.
} 
\label{fig:contractOptEx}
\end{center}
\end{figure}

Given the graph representation $\DAG$ of a computational task, an agent
needs to determine a configuration $\Conf_i\in\ConfSet$ for each subtask $\Query_i\in
\QSet$. This is in contrast with coarse-grained pricing
(Section~\ref{prob:Pricing}), where the agent had to select a single 
configuration from $\ConfSet$ to be used for each subtask of $\QSet$. 
When the agent chooses $\Conf_i$, the time and cost of $\Query_i$ is $T_i(\Conf_i)$ and $C_i(\Conf_i)$.
A set of selected configurations results in total cost $C_\QSet=\sum_{\Query_i}C_i(\Conf_i)$, i.e., the sum of the costs of
all subtasks. The completion time of $\QSet$ is determined by the
\emph{longest} path ($P$) in the task graph: $T_\QSet=\max_{P\in\DAG}\sum_{\Query_i\in
P}T_i(\Conf_i)$. 
Given demand $\Demand$ and contract utility $\Util$, $T_\QSet$ and
$C_\QSet$ determine the agent's profit $\ProfitAll$. The goal of the agent is to select the set of configurations that maximizes
$\ProfitAll$.

\begin{prob}[Fine-grained contract pricing] \label{prob:Optimize}
Given  graph $\DAG$ representing a task $\QSet$, and possible configurations $\ConfSet$, the agent needs to specify a configuration $\Conf_i\in\ConfSet$ for each $\Query_i \in \QSet$, so that the time $T_\QSet=    \max_{P\in\DAG}\sum_{\Query_i\in
    P}T_i(\Conf_i)$ and cost $C_\QSet=\sum_{\Query_i}C_i(\Conf_i)$ maximize the overall profit $\ProfitAll$.
\end{prob}
Our problem definition does not model data storage and transfer time and costs
explicitly. Rather, we assume that these are incorporated in the time and cost
of a subtask ($T_{\Query_i}$ and $C_{\Query_i}$). 
This
simplifies the model and offers an upper bound on time and cost. In practice,
when two subsequent tasks share the same configuration, it is possible to
reduce the costs of data passing, but these optimizations are beyond the scope
of this work.

We demonstrate the intricacies of the fine-grained pricing problem through a
simple example. Figure~\ref{fig:contractOptEx} shows a relational query with
three distinct subtasks (operators): (1)~select tuples from relation R,
(2)~aggregate on relation S, and (3)~join of the results. We assume
deterministic times and costs to evaluate each subtask, denoted next to each
node in Figure~\ref{fig:contractOptEx}. The select and join subtasks have only
a single possible configuration each, but the aggregate subtask has two.
Assume the utility function is $\Util(t,\Price)=-t-\Price$, which means every
one minute is worth $1$ cent for the consumer. Therefore, the configuration
$(2', 4\cent)$ is better for the aggregate subtask, since it has higher
utility than the configuration $(5',2\cent)$. However, following a greedy
strategy that picks the configuration that is optimal for each subtask can
result in sub-optimal utility for the overall task. In this example, the join
subtask has to wait 5 minutes for the select subtask to complete. Therefore,
there is no benefit in paying a higher price to complete the aggregate subtask
sooner, making $(5',2\cent)$ a better configuration choice.

\begin{thm}
Fine-Grained Contract Pricing is NP-hard.  
\end{thm}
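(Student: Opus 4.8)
The key structural feature of the Fine-Grained Contract Pricing problem is the interaction between the per-subtask cost $C_\QSet=\sum_{\Query_i}C_i(\Conf_i)$ and the makespan-style completion time $T_\QSet=\max_{P\in\DAG}\sum_{\Query_i\in P}T_i(\Conf_i)$, where each subtask's $(t,c)$ pair is chosen from a discrete configuration set $\ConfSet$. Because the profit $\ProfitAll$ is (in the linear case) a monotone function of a weighted combination of $T_\QSet$ and $C_\QSet$, optimizing $\ProfitAll$ amounts to trading off total cost against the length of the critical path. This is exactly the combinatorial core of a time/cost tradeoff scheduling problem. The plan is therefore to reduce from a well-known NP-hard problem of this type --- the discrete time/cost tradeoff problem on a DAG (equivalently the deadline-constrained project crashing problem), or, if a cleaner gadget is desired, to embed a classic problem such as \textbf{Knapsack} or \textbf{3-SAT} directly into the subtask/configuration structure.

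First I would fix the demand and utility functions to simple linear forms (e.g.\ $\Util(t,\Price)=-t-\Price$ as in the worked example, with a linear demand), so that by Equation~\ref{eq:linearProfit} the optimal profit becomes a deterministic, monotone decreasing function of the scalar quantity $\ParaDemandT\, T_\QSet + \ParaDemandP\, C_\QSet$. Maximizing $\ProfitAll$ then reduces to minimizing this single linear combination of makespan and total cost, eliminating the probabilistic machinery and reducing the problem to a purely combinatorial configuration-selection problem. Second, given an instance of the source problem (say, a discrete time/cost tradeoff DAG with a target budget $B$ and deadline $\Delta$), I would construct a $\DAG$ instance whose nodes and configuration sets copy the tradeoff options, and argue that a configuration assignment achieving profit above a threshold exists if and only if the source instance admits an assignment with $C_\QSet\le B$ and $T_\QSet\le\Delta$. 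The threshold is chosen using the monotone relationship established in the first step. I would also verify the reduction is polynomial in the size of the source instance, which is immediate since nodes, edges, and configurations map over directly.

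The main obstacle is arranging the gadget so that a \emph{single} scalar objective $\ParaDemandT\, T_\QSet + \ParaDemandP\, C_\QSet$ faithfully encodes the two \emph{separate} combinatorial constraints (budget on cost \emph{and} deadline on makespan) of the decision version of the source problem. A naive linear combination conflates the two, so I expect to need a scaling gadget: choose the time and cost units on a wide range of magnitudes (e.g.\ make all makespan contributions multiples of a large constant relative to costs, or vice versa) so that the lexicographic priority separates, and add a dummy deadline-enforcing chain or a dummy high-cost alternative that becomes active precisely when a constraint is violated. An alternative route that sidesteps this difficulty is to reduce from a problem that already presents a single objective of exactly this mixed form; embedding \textbf{Partition} into the aggregate-style choices (each subtask picks between a ``cheap-but-slow'' and ``expensive-but-fast'' configuration whose time and cost values encode a subset-sum element) would make the equivalence cleaner. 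Either way, once the scalar objective is shown to be monotone in $\ParaDemandT\, T_\QSet + \ParaDemandP\, C_\QSet$, the correctness argument reduces to the standard equivalence between the constructed threshold and the source decision instance, and the remaining steps are routine.
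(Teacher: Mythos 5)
Your route is genuinely different from the paper's, and mostly viable, but it is heavier than necessary and one of your fallback gadgets is broken. The paper does \emph{not} linearize the profit: it reduces directly from 0--1 Knapsack using a simple chain of $n$ subtasks, where the $i$th subtask has two deterministic configurations $(w_i,\, v_0-v_i)$ and $(0,\, v_0)$ with $v_0=\max_i v_i$, and a \emph{hard-deadline} profit function $\ProfitAll(T,C)=n\,v_0-C$ if $T\le W$ and $0$ otherwise. Since $\ProfitAll(T,C)$ is an input to the pricing problem (and such a kinked profit is realizable from the framework's piecewise utilities, e.g.\ deadline contracts), the equivalence with Knapsack is immediate: value $V$ is achievable within weight $W$ iff profit $V$ is achievable. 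Your plan instead fixes linear utility and demand so that maximizing $\ProfitAll$ becomes minimizing the scalar $\ParaDemandT T_\QSet+\ParaDemandP C_\QSet$, and reduces from the deadline-constrained discrete time/cost tradeoff problem with a magnitude-separation gadget and a dummy deadline-enforcing parallel chain. That primary route can be made to work on general DAGs (the parallel padding chain forces the makespan to sit at the deadline, and a large time weight makes any deadline violation dominate all cost differences), and it even buys something the paper's proof does not: hardness already in the purely linear case, where the paper's kinked profit function is unavailable.

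However, note a concrete pitfall that you partially anticipate but do not resolve: under a linear scalarized objective, a \emph{chain} instance decomposes completely, since $T_\QSet+\kappa C_\QSet=\sum_i\bigl(t_i(\Conf_i)+\kappa\, c_i(\Conf_i)\bigr)$ is minimized greedily per subtask in polynomial time. Consequently your ``cleaner alternative'' --- embedding Partition via cheap-slow versus expensive-fast configurations and then invoking monotonicity in $\ParaDemandT T_\QSet+\ParaDemandP C_\QSet$ --- fails as stated unless the gadget places the items on parallel paths so that the $\max$ in $T_\QSet=\max_{P\in\DAG}\sum_{\Query_i\in P}T_i(\Conf_i)$ couples the choices; on any series composition the two objectives never conflict under linear scalarization. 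This is precisely why the paper's chain reduction needs the non-linear deadline profit: the kink at $T=W$ is what re-couples time and cost on a chain. So either commit to the DTCTP reduction with the parallel padding chain (and cite its NP-hardness), or adopt the paper's approach and exploit a non-linear $\ProfitAll(T,C)$ directly --- but do not mix the linear-profit normalization with a chain gadget.
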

Our reduction follows from the discrete Knapsack problem (Appendix \ref{apx:fineGrained}). We
next introduce a pseudo-polynomial dynamic programming algorithm for this problem, and show that
it is both efficient and effective in real-world task workflows
(Section~\ref{sec:fineExperiments}). Without loss of generality, we assume that time and cost are deterministic, but the algorithm can be extended to the probabilistic case in a straightforward way. 

\begin{algorithm}[t] 

\caption{Fine-Grained Contract Pricing}
\label{alg:fineGrained}
\begin{algorithmic}[1]
    \REQUIRE $\QSet, \DAG, \ConfSet, \ProfitAll(T, C)$
    \ENSURE $maximum~\ProfitAll$
    \STATE Add node $\Query_{terminal}$ with $0$ time and cost to $\DAG$
    \FORALL{$\Query_i \in \QSet$}
      \STATE Add edge $(\Query_i, \Query_{terminal})$ to $\DAG$
    \ENDFOR
    \STATE $\QSet_{order} \Leftarrow TopologicalSort(\DAG)$
    \STATE $boundT \Leftarrow$ longest time to evaluate $\DAG$
    \FORALL{$\Query_i \in \QSet_{order}$}
      \STATE $f(\Query_i, 0) \Leftarrow \infty$ \label{alg:fineGrained_lineFStart}
      \FOR{$t \Leftarrow 1~\TO~boundT$}
        \STATE $f(\Query_i, t) \Leftarrow f(\Query_i, t - 1)$
        \FORALL{$\Conf \in \ConfSet$}
          \STATE $cost_{\Conf} \Leftarrow \underset{ q \in pred(\Query_i)}{Combine}\left(f(q,t - T_i(\Conf))\right) + C_i(\Conf)$ \label{alg:fineGrained_lineCost}
          \IF{$cost_{\Conf} < f(\Query_i, t)$}
            \STATE $f(\Query_i, t) \Leftarrow cost_{\Conf}$
          \ENDIF
        \ENDFOR
      \ENDFOR \label{alg:fineGrained_lineFEnd}
    \ENDFOR
    \RETURN $\max_{t} \ProfitAll(t, f(\Query_{terminal}, t))$
\end{algorithmic}
\end{algorithm}

Algorithm~\ref{alg:fineGrained} uses dynamic programming to compute the optimal profit for task graphs.  The algorithm derives the exact optimal solution for cases where $\DAG$ is a tree (e.g., relational query operators) and computes an approximation of the optimum for task graphs that are DAGs.

In Algorithm~\ref{alg:fineGrained}, $f(\Query_i,t)$ represents the minimum
cost for evaluating the subgraph terminated at subtask $\Query_i$ when it
takes at most time $t$.\footnote{We turn the continuous space of time $t$ into
discrete space by choosing an appropriate granularity (e.g., minute).} 
Then, $f(\Query_i,t)$ can be computed based on a combination of the costs of
the direct predecessors of $\Query_i$ ($pred(\Query_i)$) in the task workflow
(lines~\ref{alg:fineGrained_lineFStart}--\ref{alg:fineGrained_lineFEnd}).
When $\DAG$ is a tree, the $Combine$ function
(line~\ref{alg:fineGrained_lineCost}) is simply the sum of the costs of the
predecessors ($\sum_{q \in pred(\Query_i)}f(q, t-T_i(\Conf))$), and Algorithm~\ref{alg:fineGrained} results in the optimal
profit.

If $\DAG$ is not a tree, predecessors of a subtask $\Query_i$ can share common indirect predecessors, which introduces complex dependencies in the choice of configurations across different subtrees.  For example, let 
$q_1, q_2 \in pred(\Query_i)$, and $q_0\in pred(q_1)\cap pred(q_2)$.  Therefore, $q_0$ affects both subgraphs terminated at $q_1$ and $q_2$, respectively. This impacts the $Combine$ function in two ways.  First, the cost of $q_0$ should be counted only once.  Second, there may be discrepancies in the configuration choice for $q_0$ by the different subgraphs.
There are three strategies to resolve the discrepancy:
(1)~use the configurations with minimum time $T$;
(2)~use the configurations with minimum cost $C$;
(3)~use the configurations with maximum $\ProfitAll(T,C)$.
The $Combine$ function applies the above strategies one by one, computes the time $T_{\Query_i}$ and cost $C_{\Query_i}$ of the subgraph terminated at $\Query_i$, and updates $f(\Query_i, t)$ if $T_{\Query_i} \leq t$ and $C_{\Query_i}$ is better.  Note that Strategy 1 guarantees a feasible solution whenever one exists.


\section{Alternative Approaches} \label{sec:baseline}

\paragraph*{Benchmark-Based Approach} ~

Floratou et al.~\cite{Floratou:2011} propose a Benchmark as a Service (BaaS) approach to help consumers select configurations. This new BaaS benchmarks user's workload and use the optimal configuration to execute the workload repetitively. As they mention in the paper, changes such as growth of input data make BaaS complicated. So a BaaS provider need to monitor and react to these changes. In our approach, we do not make assumptions about the repetitiveness of workloads. The disadvantage is that consumers may pay more for repetitive workloads even when they are very similar. The advantage is that consumers do not need to worry about the change of the workloads.

We compare with the benchmark-based approach in Section~\ref{sec:extraEval}.

\paragraph*{VCG-Auction-Based Approach} ~

VCG auction is a pricing mechanism. Its strategy-proof property makes it popular in many studies. We develop a VCG auction model and compare it with our approach. In this VCG model, a customer opens a bidding and agents bid on prices. Notice that this model defines consumers payment according to the utility instead of the pure price, which is different from the canonical VCG mode.

Assume agent $i$ proposes its contract with utility $Util_i$. The consumer takes the contract with the highest utility $Util_i$ but pays based on the second highest utility $\Util^*=\max_{j \neq i}(\Util_j)$. The payment is a piecewise function $\Omega(t)=\omega_k(t)$. From agent i's perspective, its cost function is $\Cost_i(t)$, so its payoff is: 
\begin{equation*}
\text{\emph{payoff}}_i =
\left\{
\begin{aligned} 
      E[\text{\emph{profit}}] = \sum_{k=1}^n p_k(\omega_k(\hat{t_k}) - \Cost_i(\hat{t_k})) & & \text{if}~\Util_i > \max_{j \neq i} (\Util_j) \\
      0 & & otherwise \\
\end{aligned}
\right.
\end{equation*}

Here is the definition of $\Omega(t)$ when $\Util(t,\Price)= -\ParaUtilT t - \ParaUtilP \Price$ is linear: Let $\Delta=\Util_i- \Util^*$. The inverse function of $\Util(t,\Price)$ is $\Pi(t,u)=(-\ParaUtilT t - u)/\ParaUtilP$. We define $\Omega(t)=\Pi(t,u-\Delta)=\Pi(t,u)+\Delta/\ParaUtilP$.

\begin{example}
Suppose the utility function is $\Util(t,\Price)=-t-2\Price~~~(0<t<\infty)$. So $\Pi(t,u)=(-t-u)/2$. When $\Delta=\Util_i-\Util^*=10$, $\Omega(t)=\Pi(t,u-\Delta)=(-t-(u-10))/2=(-t-u)/2+5$.
\end{example}

\begin{thm}
When $\Util(t,\Price)$ is linear, the above $\Omega(t)$ satisfies:

1) $\Util_i > \Util^* > \Util^C_i \Rightarrow E[\text{\emph{profit}}] < 0$;

2) $\Util^C_i > \Util^* > \Util_i \Rightarrow E[\text{\emph{profit}}] > 0$.
\end{thm}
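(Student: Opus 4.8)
The plan is to collapse the expected profit into a single linear expression in the two utilities $\Util^C_i$ and $\Util^*$, after which both implications reduce to sign checks. The lever that makes this work is linearity: since $\Util(t,\Price)=-\ParaUtilT t-\ParaUtilP\Price$, its inverse price map $\Pi(t,u)=(-\ParaUtilT t-u)/\ParaUtilP$ is affine in $u$ with constant slope $-1/\ParaUtilP$, so any gap in delivered utility translates directly into a proportional gap in price, and hence in profit. First I would pin down the quantities. I read $\Util^C_i$ as the expected utility agent $i$ would deliver if it priced exactly at its own cost, i.e. $\Util^C_i=\sum_{k=1}^n p_k\,\Util(\expect_k,\Cost_i(\expect_k))$, and $\Util^*=\max_{j\neq i}\Util_j$ as the runner-up utility to which the VCG payment is pegged.

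Next I would simplify the payment. Because $\Delta=\Util_i-\Util^*$, the definition $\Omega(t)=\Pi(t,u-\Delta)$ taken at $u=\Util_i$ gives $\Omega(t)=\Pi(t,\Util_i-\Delta)=\Pi(t,\Util^*)$: the bidder's own report $\Util_i$ cancels, so the payment is pinned to the second-highest utility, which is precisely the VCG property. Hence $\Omega(\expect_k)=(-\ParaUtilT\expect_k-\Util^*)/\ParaUtilP$, independent of $\Util_i$.

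Then comes the key identity. Since $\Pi(t,\cdot)$ inverts $\Util(t,\cdot)$, pricing at cost satisfies $\Cost_i(\expect_k)=\Pi(\expect_k,\Util(\expect_k,\Cost_i(\expect_k)))$, and affinity of $\Pi(t,\cdot)$ turns the per-interval margin into a pure utility gap,
\[
\Omega(\expect_k)-\Cost_i(\expect_k)=\frac{\Util(\expect_k,\Cost_i(\expect_k))-\Util^*}{\ParaUtilP}.
\]
Weighting by $p_k$, summing over intervals, and using $\sum_k p_k=1$ collapses the payoff formula to
\[
E[\mbox{\it profit}]=\frac{\Util^C_i-\Util^*}{\ParaUtilP}.
\]
The earlier assumption that utility strictly decreases in price forces $\ParaUtilP>0$, so $E[\mbox{\it profit}]$ carries the sign of $\Util^C_i-\Util^*$. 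In case (1) the hypothesis gives $\Util^*>\Util^C_i$, so $E[\mbox{\it profit}]<0$; in case (2) it gives $\Util^C_i>\Util^*$, so $E[\mbox{\it profit}]>0$. The side conditions on $\Util_i$ only record whether agent $i$ wins the auction and do not enter the sign computation, but they supply the incentive reading—overbidding relative to cost wins unprofitable contracts while underbidding forfeits profitable ones—which is the strategy-proofness content of the result.

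I do not expect a deep computational hurdle here; the real work is interpretive. The main obstacle is fixing the implicit definition of $\Util^C_i$ and recognizing the cancellation $\Omega(t)=\Pi(t,\Util^*)$, so that each interval's margin becomes a utility gap scaled by $1/\ParaUtilP$. The only subtleties I anticipate are justifying $\ParaUtilP>0$ from the monotonicity assumption and keeping the expectation bookkeeping ($\sum_k p_k=1$, evaluation at the $\expect_k$) consistent across the piecewise structure.
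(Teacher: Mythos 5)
Your proof is correct and follows essentially the same route as the paper's: both hinge on the fact that, for linear $\Util$, the inverse price map $\Pi(t,u)$ is affine in $u$ with slope $-1/\ParaUtilP$ (with $\ParaUtilP>0$), so utility gaps convert to price margins, and on the bookkeeping $\sum_k p_k=1$ with $\Util^C_i=\sum_k p_k\,\Util(\expect_k,\Cost_i(\expect_k))$, which matches the paper's implicit definition. The only difference is packaging: you derive the single exact identity $E[\text{\emph{profit}}]=(\Util^C_i-\Util^*)/\ParaUtilP$ and read off both cases by sign (a formulation slightly stronger than the stated theorem, and one that correctly exposes that the hypotheses on $\Util_i$ only determine who wins), whereas the paper runs the same algebra as two separate inequality chains starting from $\Delta<\Util_i-\Util^C_i$ and its mirror image.
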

The proof is in Appendix~\ref{apx:vcg}. Given the payment function, we can show that our developed VCG auction is strategyproof.

\begin{thm} 
Every agent truthfully revealing its cost is a weakly-dominant strategy. 
\end{thm}

In other words, every rational agent will make its cost be the price  in its contract. This proof is very similar to the proof for the canonical VCG auction. Please refer to Appendix~\ref{apx:vcg} for detailed proof.

Our posted-price model and VCG auction model both exist in the real world market. We discussed in the related work section that neither of them dominates the other. They have 2 main differences in our case:
1) Posted-price model requires the agent to better understand the demand function of the market. Then an agent can set the price actively to gain more profit. In contrast, agents in VCG auction only needs to truthfully reveal their costs, then the price is passively decided based on the second best utility. 
2) VCG auction requires a centralized auctioneer who ensures the consumers pay according to the second best utility. It makes a cross-platform market more difficult to form. Posted-price model does not have such requirement.

We quantitatively compare with the VCG-auction-based approach in Section~\ref{sec:extraEval}.

\paragraph*{Differentiated Bertrand} ~

Multiple agents competing in the market is a typical differentiated Bertrand model. Specifically, Bertrand model solves the equilibrium of optimal prices based on all agents' demand functions. But in practice, an individual agent can hardly know other agents' demand function when pricing in the market. 
So our model assumes that each agent observes a demand function of its own price. Such demand function is valid given the other agents' current prices. An agent will change its price when others change. The prices will converge to the equilibrium in the long term.
Now we show the connection through an example.

In a differentiated Bertrand model, suppose there are 2 agents numbered 1 and 2. Their prices are $\mu_1$ and $\mu_2$. Agent 1's demand function is $M_1(\mu_1,\mu_2)=\gamma-\alpha \mu_1+\beta \mu_2$ where $\alpha,\beta,\gamma$ are positive parameters. Higher $\mu_1$ means lower $M_1$ but higher $\mu_2$ means higher $M_1$ due to competition. Similarly, $M_2(\mu_1,\mu_2)=\gamma-\alpha\mu_2+\beta\mu_1$. So the overall profit $\ProfitAll_1=M_1(\mu_1,\mu_2)\cdot\mu_1$ \footnote{The Bertrand model in~\cite{rasmusen:2007} assumes marginal cost $c=0$ for simplicity. So the price in~\cite{rasmusen:2007} corresponds to the profit in our approach. i.e. $\mu+c=\Price$.}, 
$\ProfitAll_2=M_2(\mu_1,\mu_2)\cdot\mu_2$. Given a fixed $\mu_1$, the best $\mu_2$ that maximizes $\ProfitAll_2$ is:
\begin{equation} \label{eq:bertrand1}
\mu_2^*=(\gamma+\beta\mu_1)/2\alpha.
\end{equation}
Similarly, the best $\mu_1$ is:
\begin{equation} \label{eq:bertrand2}
\mu_1^*=(\gamma+\beta\mu_2)/2\alpha.
\end{equation}
So one can solve these two equations to get a Nash Equilibrium: $\mu_1^*=\mu_2^*=\gamma/(2\alpha-\beta)$.

In the real world, agents' demand functions cannot be exactly the same. Thus we should use different demand functions $M_1(\mu_1,\mu_2)=\gamma_1-\alpha_1 \mu_1+\beta_1 \mu_2$ and $M_2(\mu_1,\mu_2)=\gamma_2-\alpha_2 \mu_1+\beta_2 \mu_2$.  So $\mu_1^*=(2\alpha_2\gamma_1+\beta_1\gamma_2)/(4\alpha_1\alpha_2-\beta_1\beta_2)$ and $\mu_2^*=(2\alpha_1\gamma_2+\beta_2\gamma_1)/(4\alpha_1\alpha_2-\beta_1\beta_2)$. 

The above calculation, however, requires that both agents know both demand functions $M_1$ and $M_2$, which may not be a realistic assumption; one may easily obtain one's own demand function by fitting historical data, but it may not be possible to know the other party's demand function. Without knowledge of the other party's demand function, one generally cannot settle for the NE $(\mu_1^*, \mu_2^*)$ in one shot, but has to adjust one's price dynamically according to the observed demand function. Thus we have a repeated game here, and rational agents will follow the best response functions~\ref{eq:bertrand1} and \ref{eq:bertrand2}. Our approach uses exactly the same response functions, where the impact of the other agent's price is absorbed into the intercept. More precisely, in our model, Agent 1 tries to optimize $M_1(\mu_1)=\gamma_1'-\alpha_1\mu_1$ where $\gamma_1'=\gamma_1+\beta_1\mu_2$, so it sets $\mu_1=\gamma_1'/2\alpha_1=(\gamma_1+\beta_1\mu_2)/2\alpha_1$. Similarly, Agent 2 sets $\mu_2=(\gamma_2+\beta_2\mu_1)/2\alpha_2$. If both agents keep updating their prices in this manner, their prices will eventually converge to the NE $(\mu_1^*, \mu_2^*)$. 
Please refer to Appendix~\ref{apx:bertrand} for more detail.


\section{Experimental evaluation}
\label{sec:evaluation}

In this section we evaluate our market using a real-world cloud computing platform: Amazon Web Services (AWS).
Our experiments collect real-world data from a variety of relational and
MapReduce task workloads, and use this data to simulate the behavior of our
market entities on the AWS cloud. Our results demonstrate that our market
framework offers incentives to consumers, who can execute their tasks more
cost-effectively, and to agents, who make profit from providing fair and
competitive contracts.

We proceed to describe our experimental setup, including 
computational tasks, consumer parameters, and contracts.

\begin{figure}[tb]
\begin{center}
\begin{tabular}{lcrc}
\toprule
 {\textbf{Type}} & {\textbf{CPU (virtual)}} & \textbf{Memory} & \textbf{\$/hour}  \\ 
 \midrule
 db.m3.Medium & 1 & 3.75GB & \$0.095  \\
 db.m3.Large & 2 &  7.5GB & \$0.195  \\
 db.m3.xLarge & 4 & 15GB & \$0.390  \\
 db.m3.2xLarge & 8 & 30GB & \$0.775  \\
 db.r3.Large & 2 &  15GB & \$0.250  \\
 db.r3.xLarge & 4 & 30.5GB & \$0.500  \\
 db.r3.2xLarge & 8 & 61GB & \$0.995  \\
 m1.Medium & 1 & 3.75GB & \$0.109  \\
 m1.Large & 2 & 7.5GB & \$0.219  \\
 m1.xLarge & 4 & 15GB & \$0.438  \\
 \bottomrule
\end{tabular}
\end{center}
\caption{Types of Amazon machines and associated features and costs (in January 2015). The first 7 types (db.*) are RDS configurations, whereas the last 3 (m1.*) are EMR configurations. The prefixes (db and m1) are omitted from some figures for brevity.}
\label{tbl:amazonMachines}
\end{figure}%

\paragraph{Data and configurations}\smallskip 
We spent 8,106 machine hours and \$3,118 in obtaining the distributions of time and cost for two types of computational tasks: relational query
workloads, and MapReduce jobs. 

\smallskip
\noindent
\textbf{Relational query tasks:} 
We use the queries and data of the TPC-H benchmark 
to evaluate
relational query workloads. We use all 22 queries of the benchmark on a 5GB
dataset (scale factor 5). We use the Amazon Relational Database Service 
(RDS) to evaluate the
workloads on 7 machine configurations,
each of which has 200GB of Provisioned IOPS SSD storage, and runs PostgreSQL
9.3.5. Figure~\ref{tbl:amazonMachines} lists the capacity and hourly rate of
each configuration.

\smallskip
\noindent
\textbf{MapReduce tasks:}
We evaluate MapReduce workloads using three job types (WordCount, Sort, and Join) over 5GB of randomly
generated input data. We use the Amazon Elastic
MapReduce service 
(EMR) to test our framework on these workloads. We select
3 machine configuration types.
Figure~\ref{tbl:amazonMachines} lists the capacities and hourly rates of these
configurations. We experimented with 4 different sizes of clusters for each
machine configuration: 1, 5, 10, and 20 slave nodes.
 
\smallskip
\noindent
\textbf{Scientific workflows:}
We use real-world scientific workflows that represent computational tasks with
multiple subtasks, to evaluate fine-grained pricing
(Section~\ref{sec:fineGrained}). We retrieved 1,454 workflows from
MyExperiment~\cite{DeRoure:2009}, one of the most popular scientific workflow
repositories. These workflows were developed using Taverna
2~\cite{Wolstencroft:2013}, and comprise the majority of workflows in the
repository. The size of workflows ranges from 1 to 154 subtasks.

\paragraph{Consumer models}\smallskip 
We simulate the consumer behavior in our framework using the utility and demand functions.

\smallskip
\noindent
\textbf{Utility:}  
In our evaluation, utility is a linear function $\Util(t, \Price) = -\ParaUtilT
t - \ParaUtilP \Price$ modeling consumer preferences. 
 $\ParaUtilT$ represents the unit cost that the consumer is willing to pay to save
$\ParaUtilP$ unit time. For our experiments, we assume $\Util(t, \Price) = - t -
\Price$, where $t$ is measured in minutes and $\Price$ is measured in cents,
which means every minute is worth 1 cent to the consumer.

\smallskip
\noindent
\textbf{Demand:}
In our evaluation, the demand function is linear: $\Demand(\Util) =
\ParaDemand + \lambda_\Demand \Util$, which means that when $\Util$
increases by $1/\lambda_\Demand$, $1$ more contract would be accepted.
We use $\Demand(\Util) = 100 + 50 \Util$ for RDS,
and $\Demand(\Util) = 100 + 5 \Util$ for EMR. 
$\lambda_\Demand$ is smaller for EMR
because the times and costs for MapReduce jobs are much larger than
those of relational queries.

\paragraph{Contracts}\smallskip
All our experiments involve contracts with a deadline, 
which means that every
consumer request specifies one target completion time. We execute each task 100 times using
every configuration and set the deadline of each query as the average completion time across all configurations.

\subsection{Consumer incentives}

In this section, we evaluate whether our market framework offers sufficient
incentive for consumers to participate in the market. Our first set of
experiments simulates several na\"ive cloud users 
who select one of the
default configurations for their computational tasks: 7 configurations for
RDS, and 12 configurations for EMR.
Then we simulate a baseline user who intuitively chooses configurations based on a simple feature of a task. Specifically, the user chooses a configuration with the best CPU performance for a CPU-intensive task, or a configuration with the best IO performance for an IO-intensive task. Predicting whether a task is CPU-intensive or IO-intensive is a difficult task for a user. However, we unfairly bias toward this baseline by indeed executing the task to measure its CPU time and IO time. We consider a task is CPU-intensive if its CPU time is greater, otherwise IO-intensive.
Finally we
simulate an expert agent, who, for every task, selects the
configuration that maximizes the consumer's utility function.
Figure~\ref{fig:usabilityRDS_UScatter} presents the price and time achieved by
each of the 7 default configurations for RDS, as well as the price and time offered by
the expert agent. The line in the graph is the utility indifference curve for
the agent's configuration, representing points with the same utility value.
Points on the curve are equally good, from the consumer's perspective, as the
one achieved by the expert agent. Points above the curve have worse utility
values (less preferable than the agent's offer), while points below the curve
have better utility values (more preferable than the agent's offer).

Our experiments show that the expert agent provides more utility to 4 out of 8
na\"ive cloud users with relational query tasks on RDS. This means
that, even though the agent makes a profit, a good portion of the users
would still benefit from using the market instead of relying on default
settings. This effect is even more pronounced for EMR workloads.
Figure~\ref{fig:usabilityEMR_UScatter} shows that the expert agent offers
better utility to \emph{all} simulated na\"ive users. This means that, in
every single case, the consumers get better utility by using the agent's
services, instead of selecting a default configuration.
It is noteworthy that the heuristic-based baseline approach is 189\% and 67\% worse in utility than our approach for RDS and EMR workloads, respectively.

\begin{figure*}[t]
        \centering        
        \begin{subfigure}[b]{0.33\textwidth}
                \includegraphics[width=\textwidth]{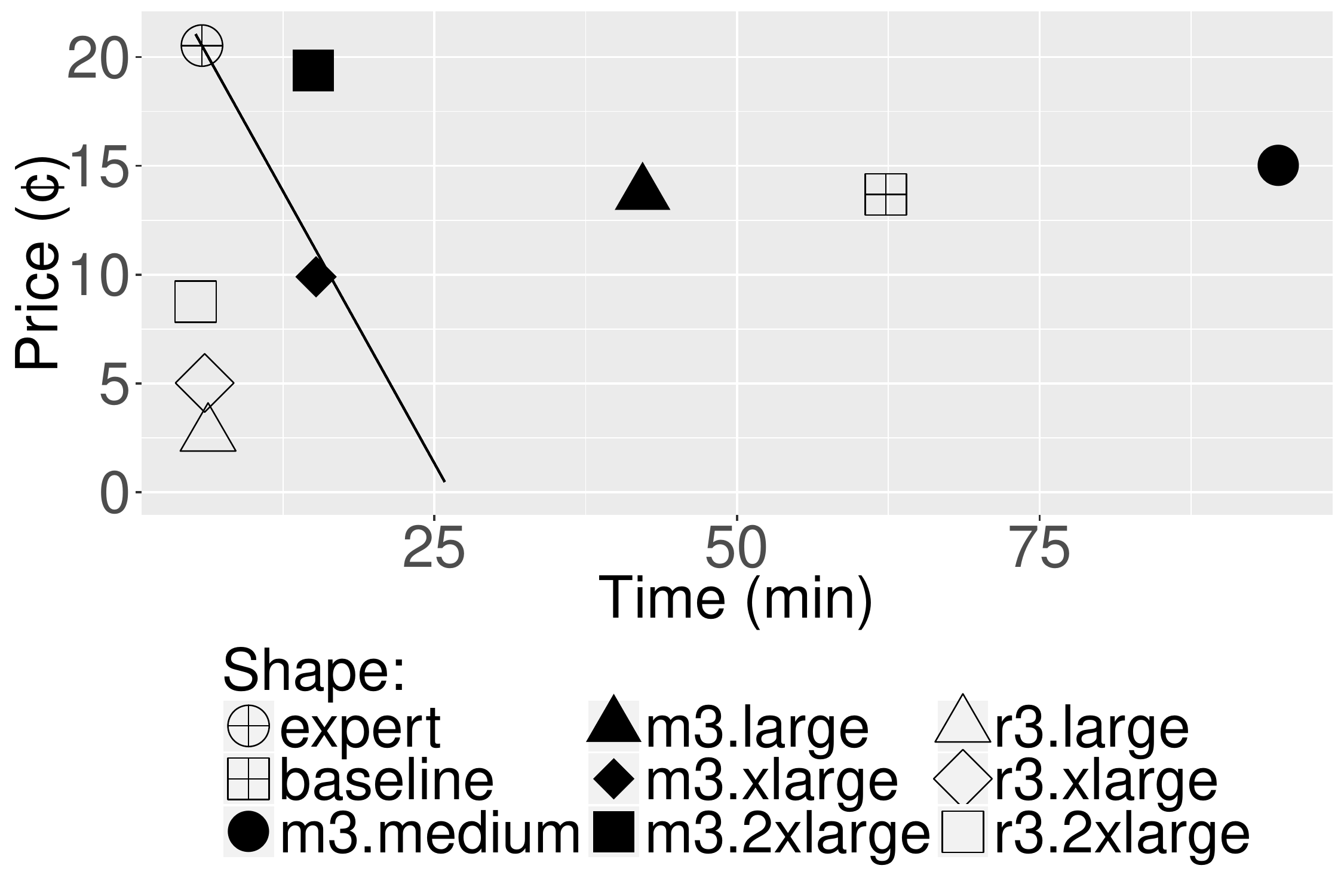}
                \caption{Utility trade-off (RDS)}
                \label{fig:usabilityRDS_UScatter}
        \end{subfigure}
        \begin{subfigure}[b]{0.66\textwidth}
                \includegraphics[width=\textwidth]{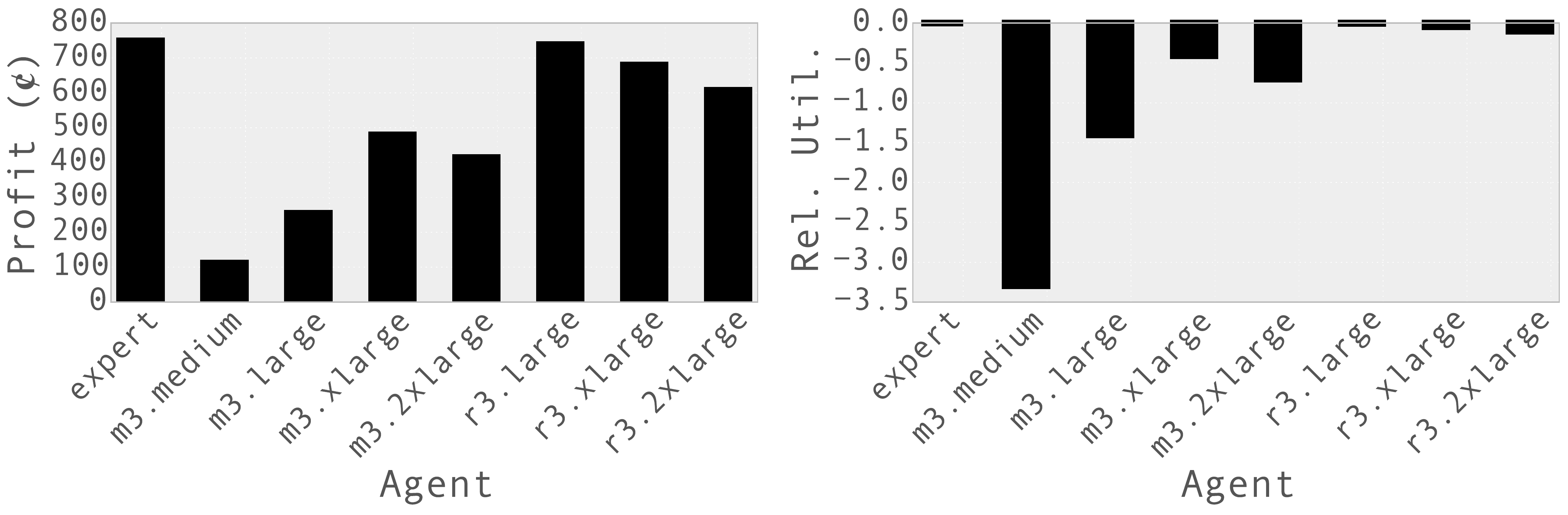}
                \caption{Agent profits and relative utilities (RDS)}
                \label{fig:compRDS}
        \end{subfigure}
        \begin{subfigure}[b]{0.33\textwidth}
                \includegraphics[width=\textwidth]{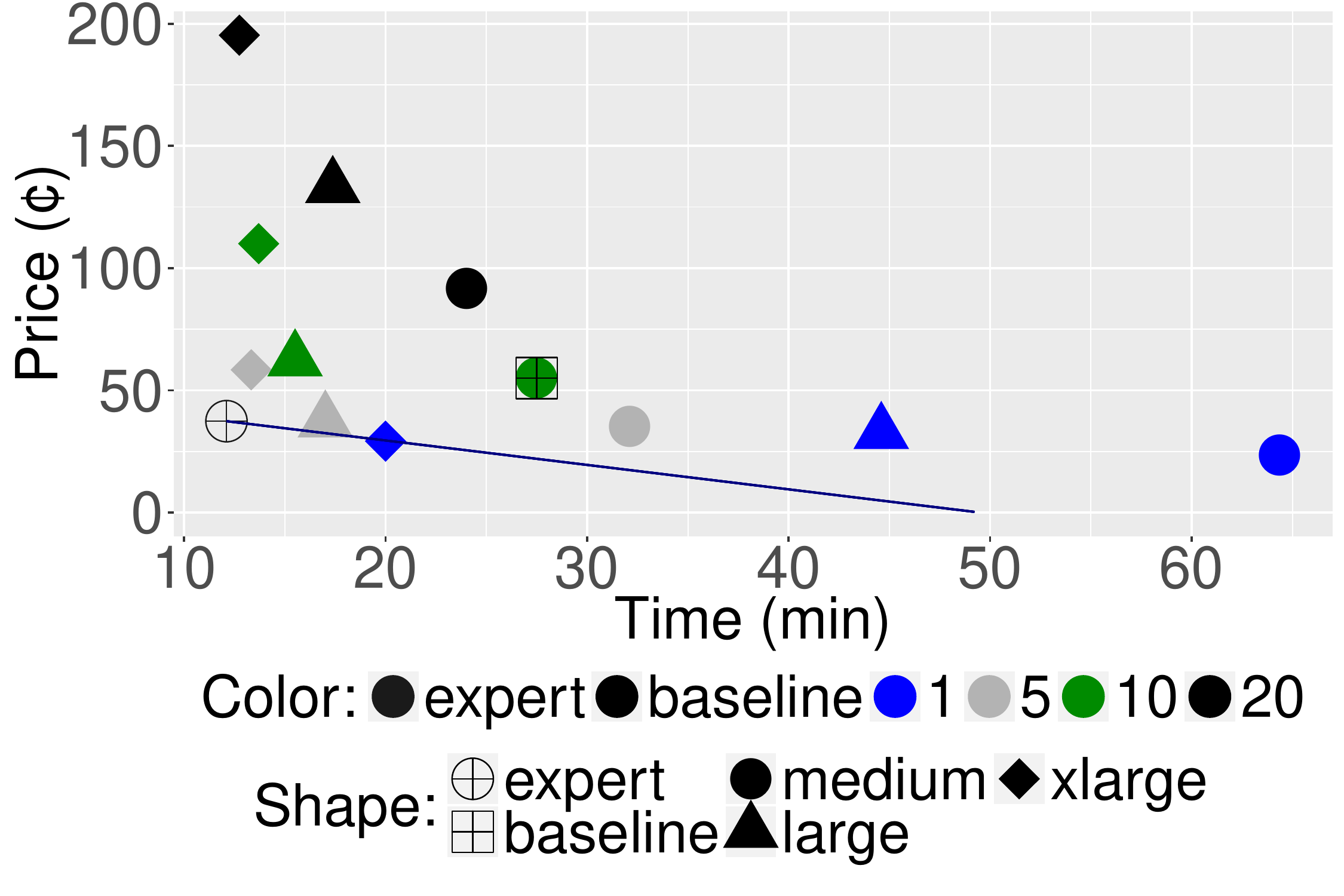}
                \caption{Utility trade-off (EMR)}
                \label{fig:usabilityEMR_UScatter}
        \end{subfigure}
        \hfill
        \begin{subfigure}[b]{0.65\textwidth}
                \includegraphics[width=\textwidth]{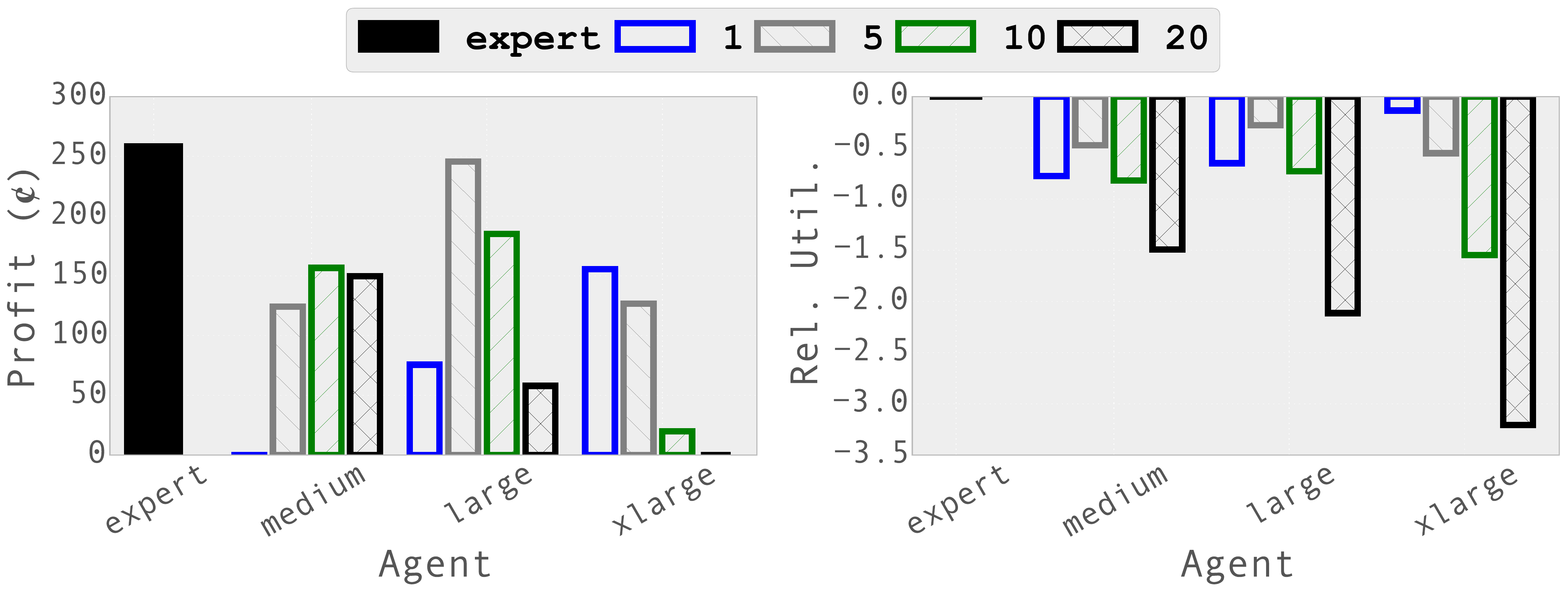}
                \caption{Agent profits and relative utilities (EMR)}
                \label{fig:compEMR}
        \end{subfigure}
        \caption{
        (a, c) Users achieve better utility by using an expert agent, compared to na\"ively selecting a default configuration.  The agent benefits 40\% of the consumers in RDS workloads, and 100\% of the consumers in EMR workloads.
        (b, d) Expert agents always achieve the largest profits.  This means that our market framework gives incentives to agents to find optimal configurations.
        }
\end{figure*}

\subsection{Agent incentives and market properties} 

\looseness -1
In this section we demonstrate that the pricing framework satisfies three
important properties: \emph{competitiveness}, \emph{fairness}, and
\emph{resilience}. These properties incentivize consumers and agents to use
and trust the market by ensuring that (a) the agents will identify efficient
computation plans and provide accurate pricing, and (b) inaccurate estimates
will not pose a great risk to the agents.  A theoretical analysis of the market properties is included in Appendix \ref{apx:linearCase}.

\paragraph{Competitiveness}\smallskip
We run experiments on Amazon RDS and EMR to demonstrate how different
configurations impact profitability in practice. Our goal is to show that, in
our market, well-informed, expert agents can make more profit than na\"ive
agents, thus creating incentives for agents to be competitive and offer
configurations and contracts that benefit the consumers.
In this experiment, a na\"ive
agent selects one configuration to use for all queries. In contrast, the expert agent always selects the optimal configuration for
each query. 
The goal of this experiment is to show the impact of configuration selection.
Thus we control for other parameters, such as the accuracy of the agents'
estimates. So, for now, we assume that all agents know the distributions of
time and cost accurately. We relax this assumption in later experiments.

\begin{figure}[t]
\begin{center}
\begin{tabular}{l l  l }
\toprule
& \textbf{Task} & \textbf{Configuration} \\
\midrule
\multirow{3}{*}{\textbf{RDS}}
& q1--q3, q5--q16, q18, q22  & db.r3.Large \\
& q4, q17, q19, q20 & db.r3.xLarge \\
& q21 & db.r3.2xLarge \\
\midrule
\multirow{3}{*}{\textbf{EMR}}
& WordCount & m1.Medium $\times$ 20 \\
& Sort & m1.Large $\times$ 5  \\
& Join & m1.xLarge $\times$ 1 \\
\bottomrule
\end{tabular}
\end{center}
\vspace{-2mm}
\caption{The expert agents select different configurations for different tasks to maximize profit.}
\label{tbl:bestConf}
\end{figure}%

We generate histograms of time and cost by evaluating each query with each
configuration 100 times. All agents use these histograms to approximate the
distributions and price contracts based on these distributions.
After an agent prices a contract, 
we compute the number of accepted contracts according to the demand function, $\Demand(\Util)$.  Then we randomly select $\Demand$ executions to do trials. The agent receives payments based on whether the execution met or missed the deadline. 

Figure~\ref{fig:compRDS} illustrates the total profit made by each agent
pricing RDS workloads. There are 7 na\"ive agents, each using one of the RDS
configurations from Figure~\ref{tbl:amazonMachines}, and one expert agent, who
always uses the best configuration for each task. Figure~\ref{fig:compEMR}
illustrates the same experiment on EMR workloads. We use one expert agent and
12 na\"ive agents who used the three EMR configurations from
Figure~\ref{tbl:amazonMachines}, each with a cluster size of 1, 5, 10, or 20
nodes. Figure~\ref{tbl:bestConf} lists the configuration chosen by the expert agent for each RDS and each EMR task. In both experiments, the expert agent achieves the \emph{highest} overall
profit.

Figures~\ref{fig:compRDS} and~\ref{fig:compEMR} also show the utilities
offered by the agents for the same contracts. We plot the relative utility of
each na\"ive agent, using the utility of the expert agent as a baseline:
$\frac{\mathit{AgentUtility} -
\mathit{ExpertUtility}}{|\mathit{ExpertUtility}|}$. On both RDS and EMR workloads, the
utility offered by the expert was the best among all agents.

Our experiments on both RDS and EMR demonstrate that expert agents achieve
better utility and profit than all other agents. This verifies empirically
that our market design ensures incentives for agents to improve their
estimation techniques and configuration selection mechanisms. This benefits
both consumers, who get better utility, and agents, who get more profit.

\begin{figure}[t]
        \centering
        \begin{subfigure}[b]{0.315\textwidth}
                \includegraphics[width=1.\columnwidth]{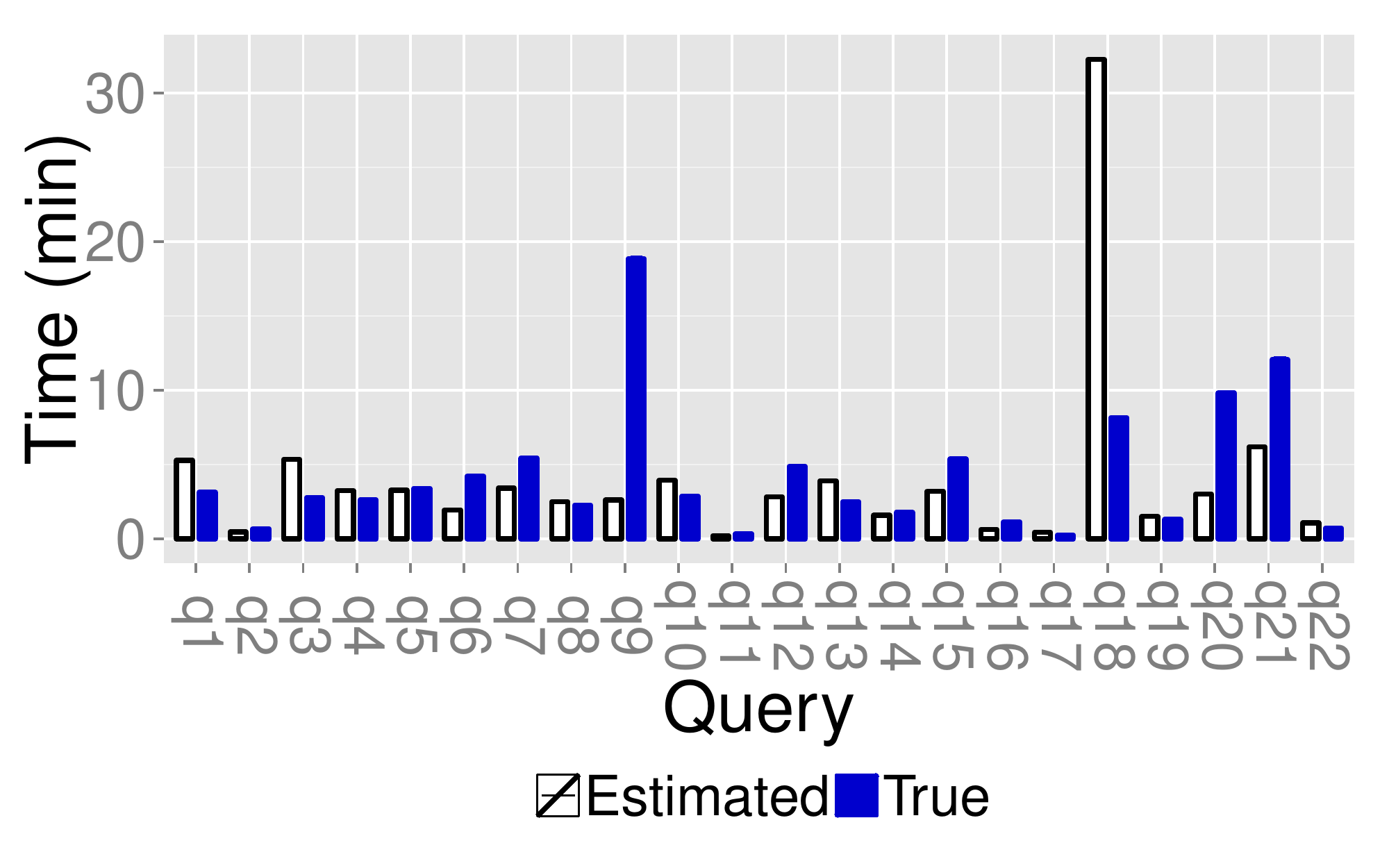}
                \caption{Estimate (RDS)}
                \label{fig:estiRDS}
        \end{subfigure}
        \begin{subfigure}[b]{0.155\textwidth}
                \includegraphics[width=1.\columnwidth]{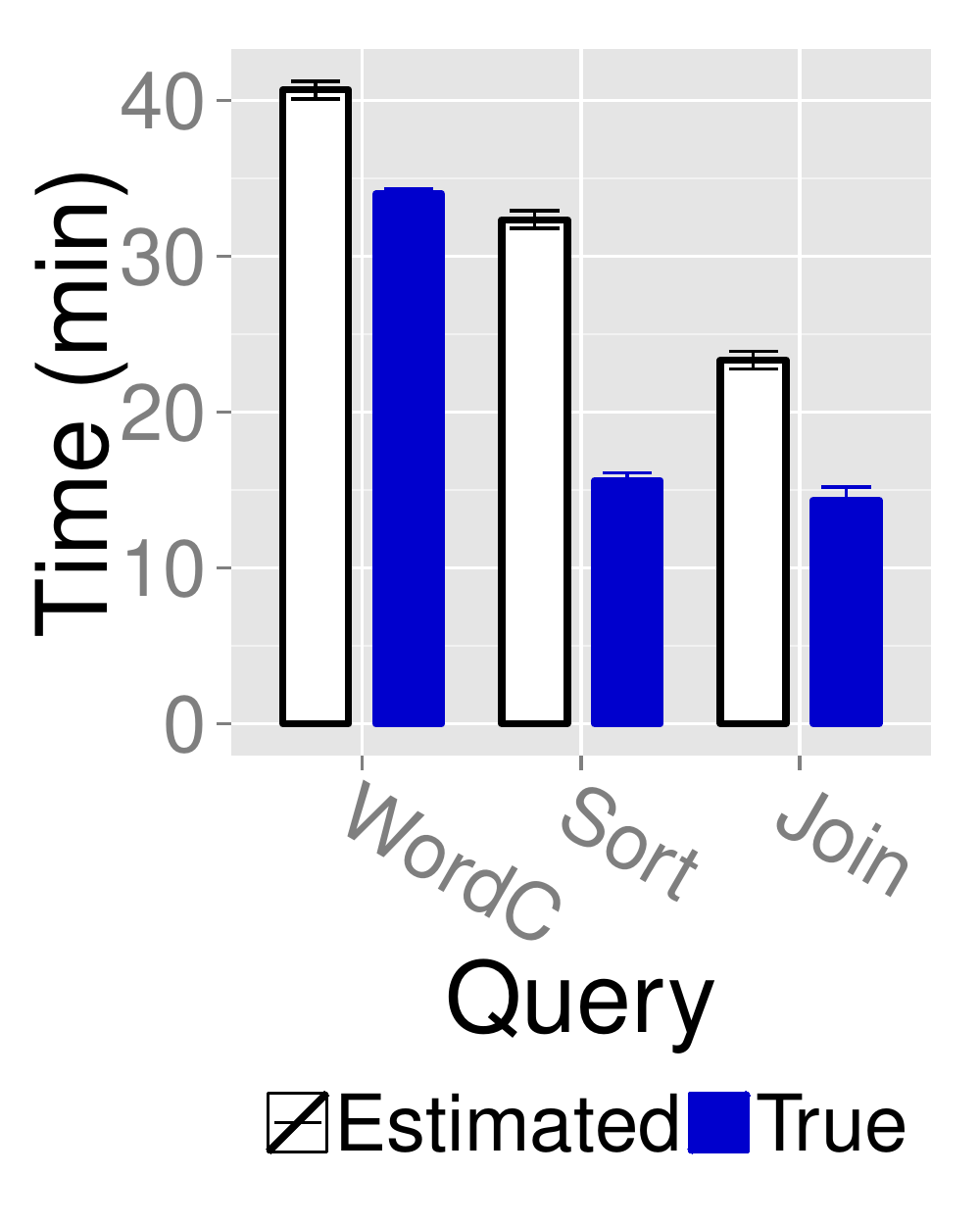}
                \caption{Estimate (EMR)}
                \label{fig:estiEMR}
        \end{subfigure}
        
        \begin{subfigure}[b]{0.23\textwidth}
                \includegraphics[width=1.\columnwidth]{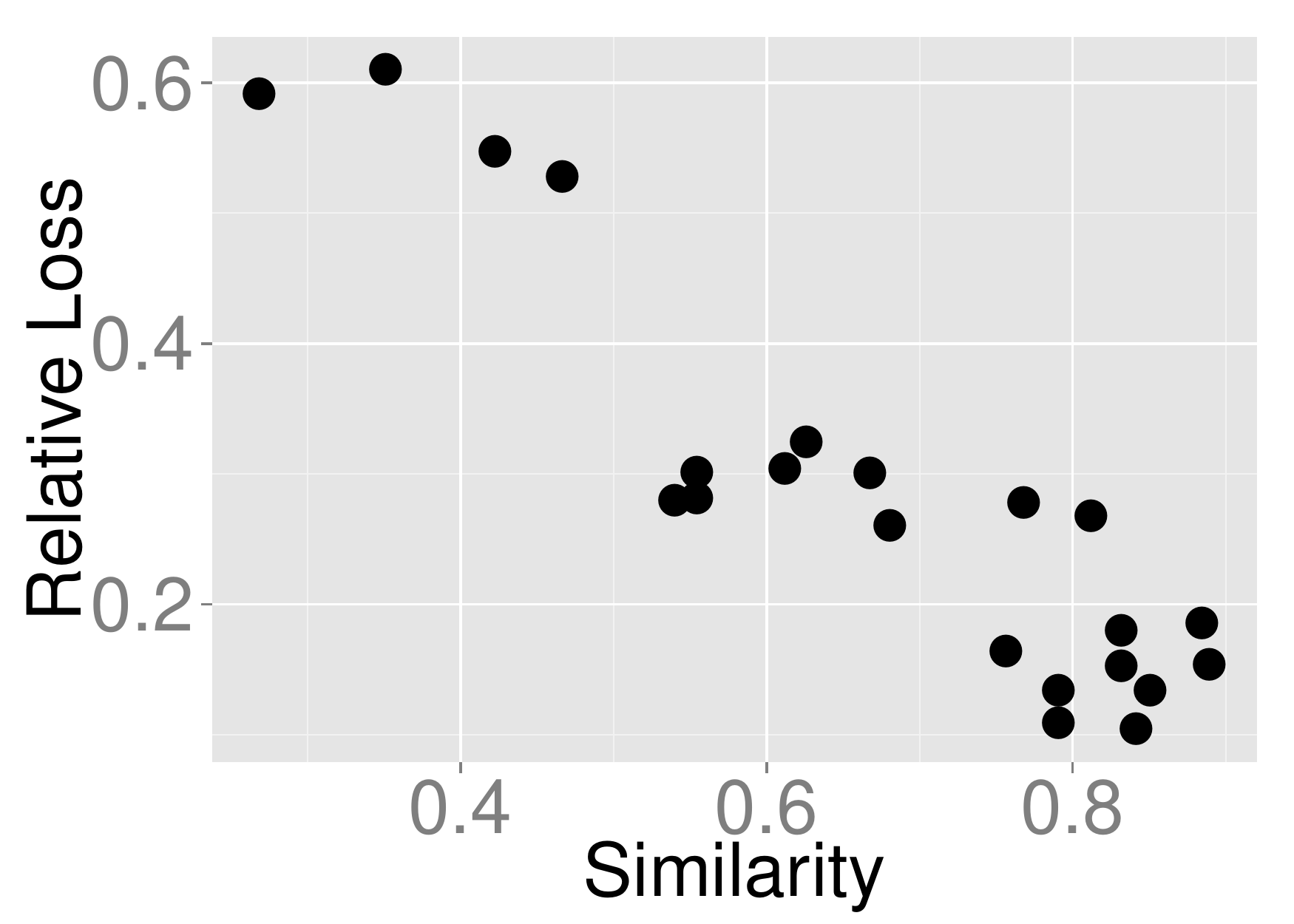}
                \caption{Fairness (RDS)}
                \label{fig:fairRDS}
        \end{subfigure}
        \begin{subfigure}[b]{0.23\textwidth}
                \includegraphics[width=1.\columnwidth]{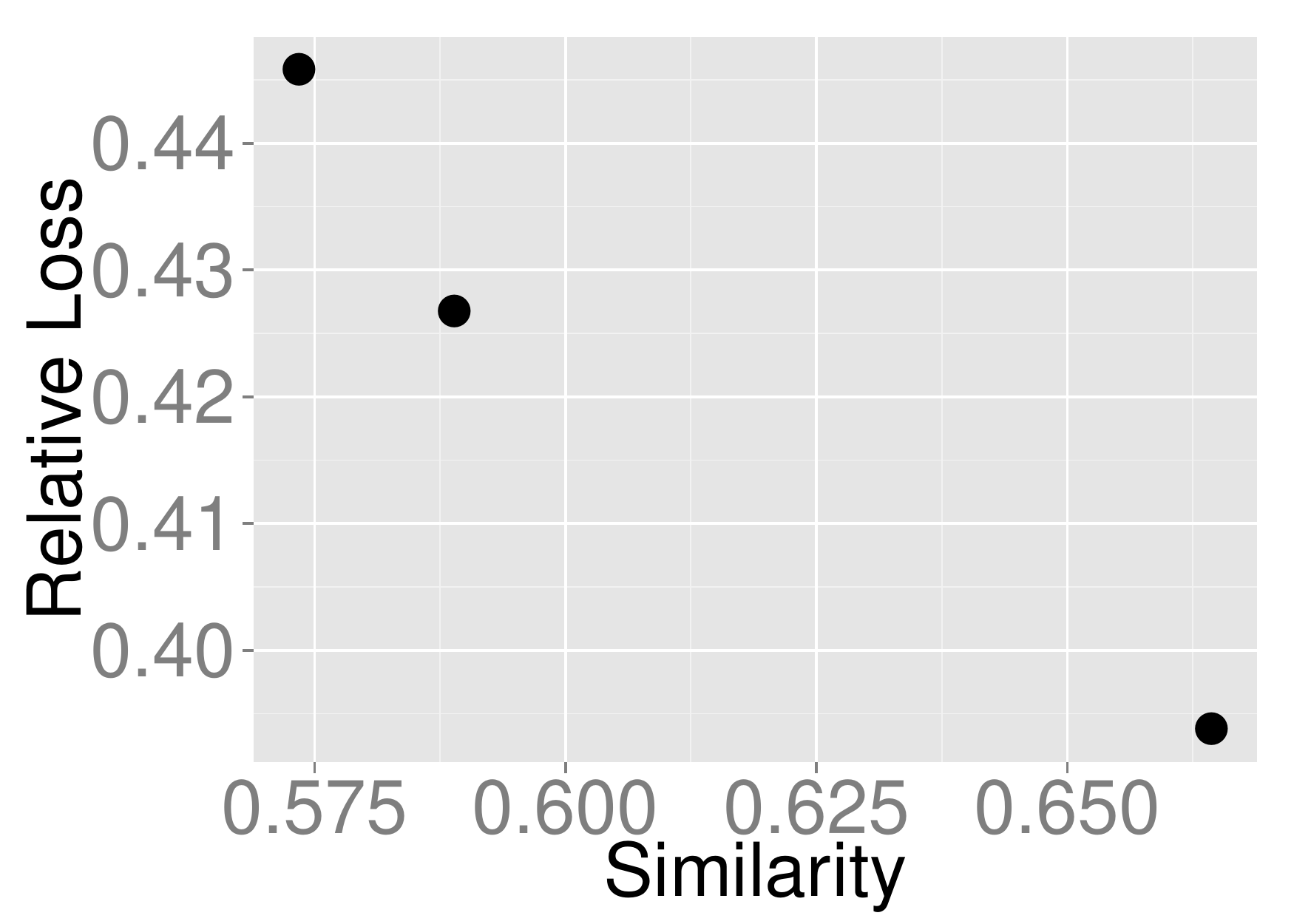}
                \caption{Fairness (EMR)}
                \label{fig:fairEMR}
        \end{subfigure}
        \caption{Agents' estimates are often inaccurate, and such inaccuracies can lead to loss of profit.}
        \label{fig:fairness}
\end{figure}

\paragraph{Fairness}\smallskip 
Fairness guarantees the incentive for agents to present accurate estimates to consumers.
If the agent uses inaccurate estimates, she/he will be penalized with lower profits. Our goal is to show that more accurate estimates lead to greater profit for the agent in practice. 

We consider an agent using db.m3.medium on RDS and PostgreSQL's default query
optimizer to estimate the completion times of queries.
The PostgreSQL optimizer provides an estimate of the expected completion time
and the agent assumes a Gaussian distribution with a mean value equal to the
completion time predicted by the optimizer. We chose $0.05$ for the standard
deviation, which is very close to the actual average standard deviation of the
distributions of the 22 TPC-H queries ($0.04$).

We also consider another agent using m1.medium on EMR, with one master and one slave node. 
The agent estimates the expected completion time by executing queries on a 5\% sample of the data, and assumes a Gaussian distribution around the estimated mean.
The agent uses an empirical standard deviation, $0.55$, which is close to the average true standard deviation of all three EMR job types ($0.56$).

We compare the agents' estimate with the true distributions in
Figures~\ref{fig:estiRDS} and~\ref{fig:estiEMR}. We plot the average
completion time for each TPC-H query and each EMR task. The standard deviation
is very low (under 0.75 min) for all tasks.
 As these plots show, the
agents' estimates can often be far from the actual completion times (e.g.,
$q_{18}$).

Next, we use the similarity between two distributions and relative loss to show the relationship between estimation accuracy and profit.  We compare the true distribution of completion time (which is a histogram) with the agent's estimate (a Gaussian distribution) by turning the agent's estimate into a histogram and computing the cosine similarity between two histograms. The relative loss measures how much profit the agents lose compared to the optimal profit they could have made. We define relative loss of profit as:
\begin{equation}
\label{eq:relativeLoss}
Relative Loss = \frac{Optimal Profit - Actual Profit}{Optimal Profit}
\end{equation}
As Figures~\ref{fig:fairRDS} and~\ref{fig:fairEMR} illustrate, when the agent's estimate is more accurate, the relative loss is smaller.

\begin{figure}
 \centering
 \begin{subfigure}[b]{.46\columnwidth}
     \includegraphics[width=\textwidth]{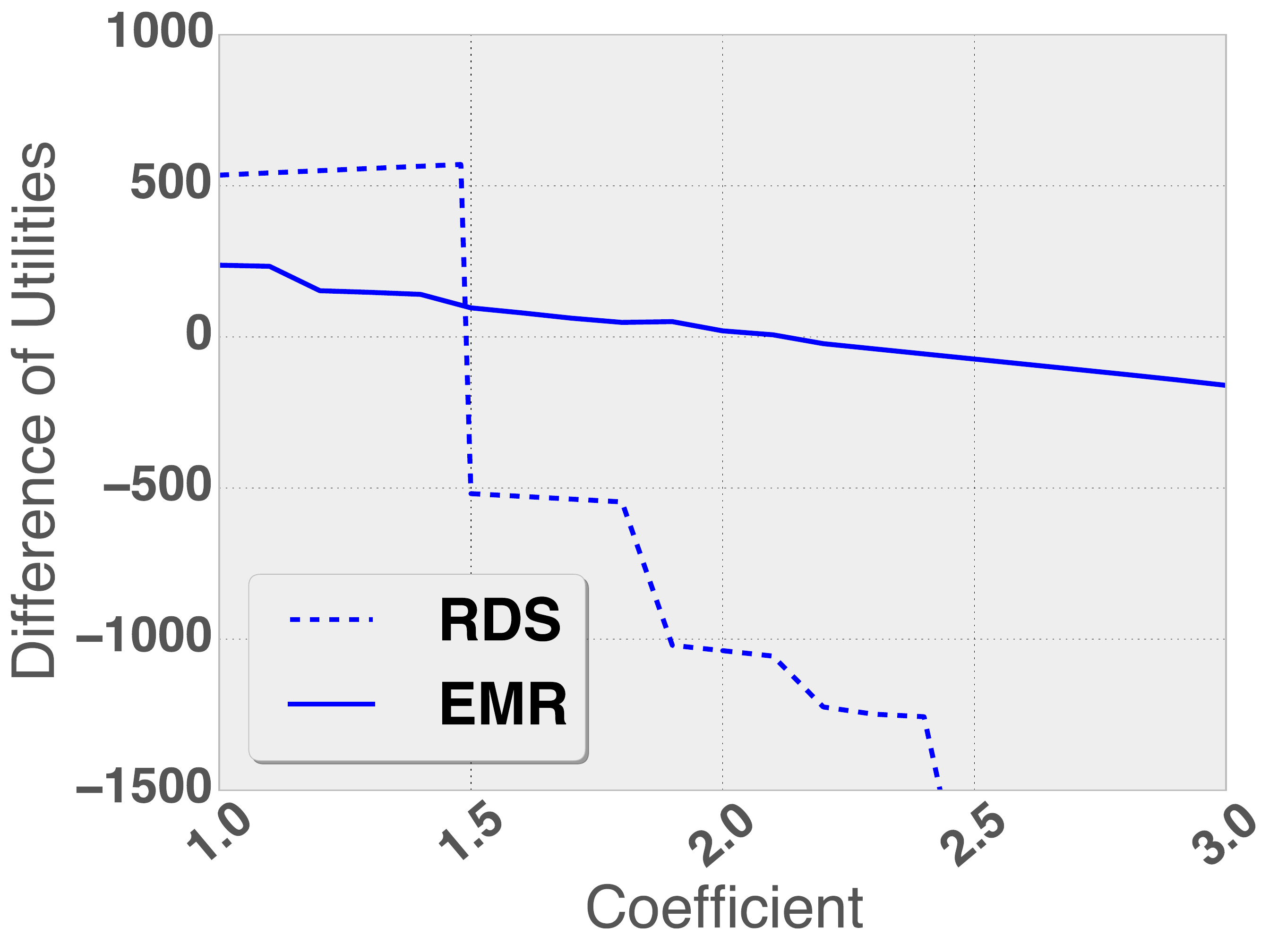}
     \caption{(Agent's utility - consumer's utility) drops in overestimation.}
     \label{fig:overestimate}
 \end{subfigure}
 \begin{subfigure}[b]{.46\columnwidth}
     \includegraphics[width=\textwidth]{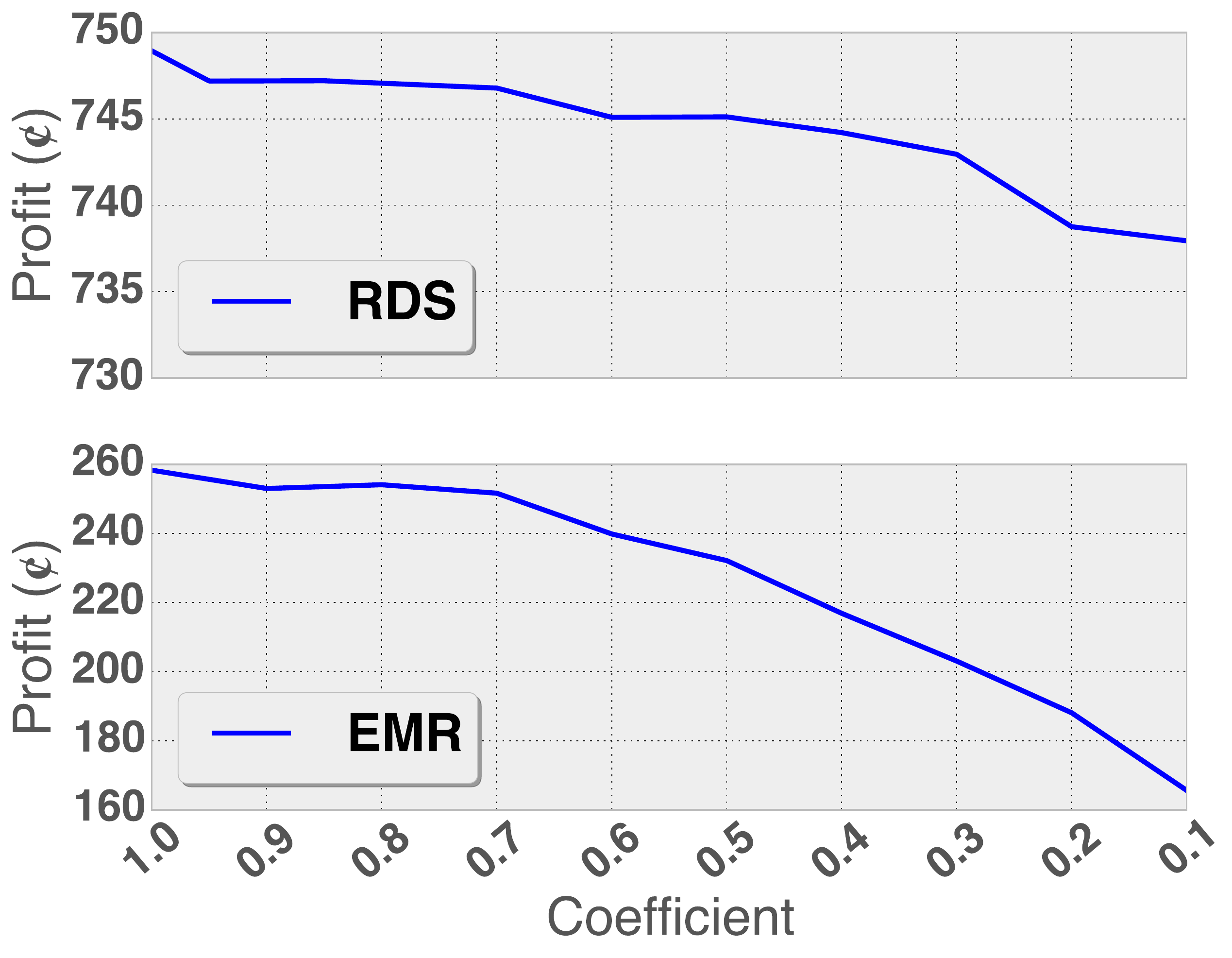}
     \caption{Agent's profit decreases moderately in underestimation.}
     \label{fig:underestimate}
 \end{subfigure}
 \caption{Poor estimation moderately impacts the market.}
 \end{figure}

Our market does not rely on the assumption that the estimates are
accurate, and it can in fact tolerate inaccuracies well. As long as
there exists at least one task for which an agent can produce better
estimates than a consumer, the agent offers utility to the market. In
our experimental evaluation, we showed that this is easy to achieve in
practice: even agents using simple estimation methods (such as using
the PostgreSQL optimizer or sampling), which result in fairly
inaccurate estimates, can provide benefit to non-expert consumers.
Existing research has shown that time and cost estimation is
non-trivial~\cite{Aboulnaga:2009, Herodotou:2011:SOCC, Wu:2014},
and agents using such specialized tools would always produce better
estimates than non-expert consumers.

In addition, we further expanded our evaluation to study an extreme
case: when all agents in the market make worse estimation than all
consumers, for all tasks. We multiply the agents' estimated time and cost by a coefficient.
A $>1$ coefficient means overestimation and a $<1$ coefficient means underestimation.

As depicted in Figure~\ref{fig:overestimate}, overestimation leads agents to post higher
prices lowering the consumers' utilities. However, switching to using
the cloud provider directly becomes preferable (on average) only when
agents overestimate substantially: in our empirical simulation, agents
had to overestimate time and cost by 49\% in RDS workloads, and by
120\% in EMR workloads before a switch was beneficial to consumers on
average.

On the other hand, figure~\ref{fig:underestimate} shows that underestimation of time and cost decreases an agent's
profit by 2\% in RDS and 36\% in EMR if it underestimates time and cost by a factor of 10.
Depending on the agents' profit margins, they may be able to
absorb the difference without losses. To avoid losses, agents can
follow risk-aware pricing strategies (Section~\ref{sec:riskAware}).

\begin{figure}
        \centering
        \begin{subfigure}[b]{0.23\textwidth}
                \includegraphics[width=1.\columnwidth]{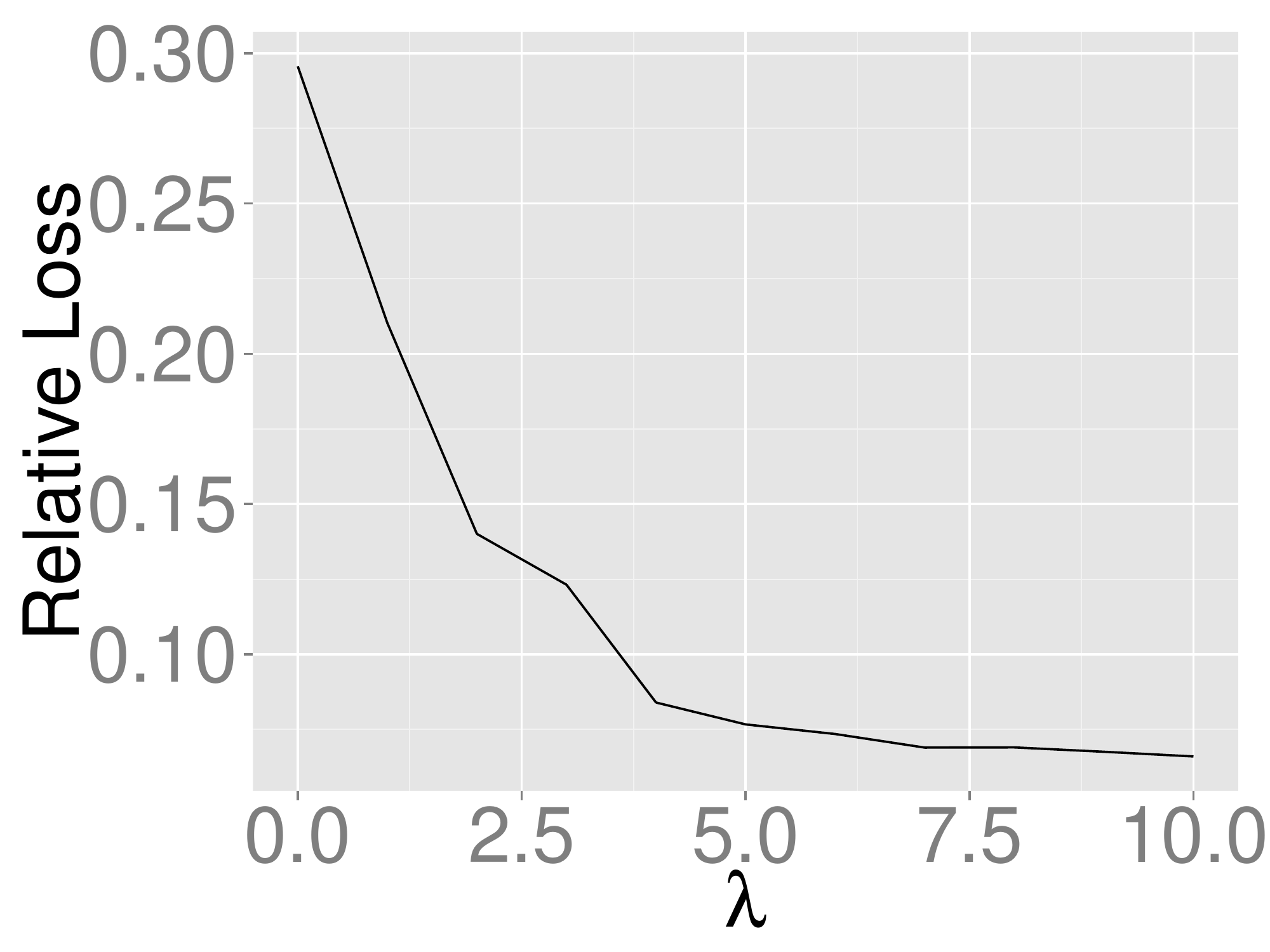}
                \caption{Resilience (RDS)}
                \label{fig:resiRDS}
        \end{subfigure}
	\begin{subfigure}[b]{0.23\textwidth}
                \includegraphics[width=1.\columnwidth]{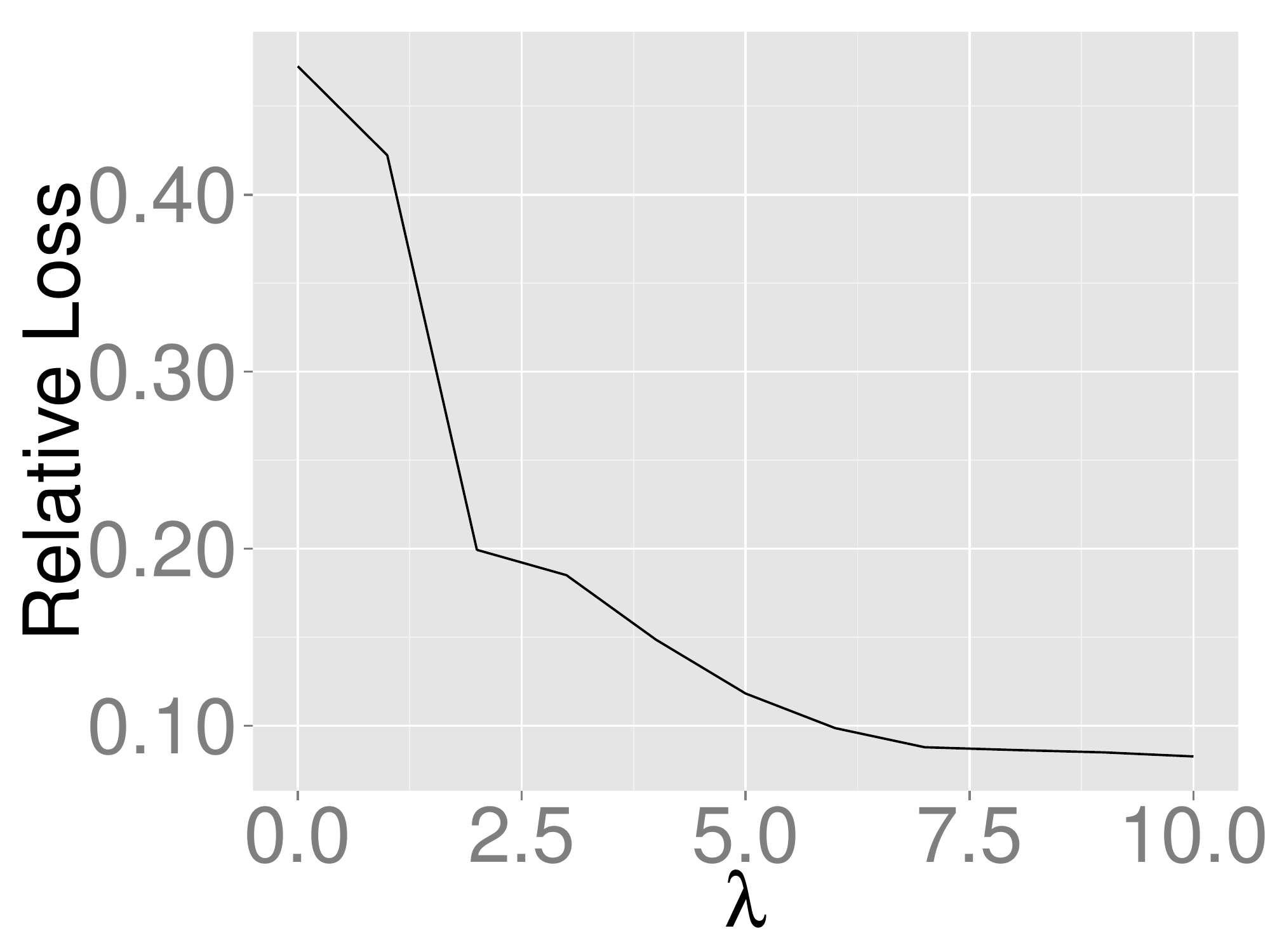}
                \caption{Resilience (EMR)}
                \label{fig:resiEMR}
        \end{subfigure}
        \caption{By adjusting for risk, agents can reduce their losses in case of inaccurate estimates.}
        \label{fig:resilience}
\end{figure}

\begin{figure}
        \centering
        \begin{subfigure}[b]{0.23\textwidth}
                \includegraphics[width=1.\columnwidth]{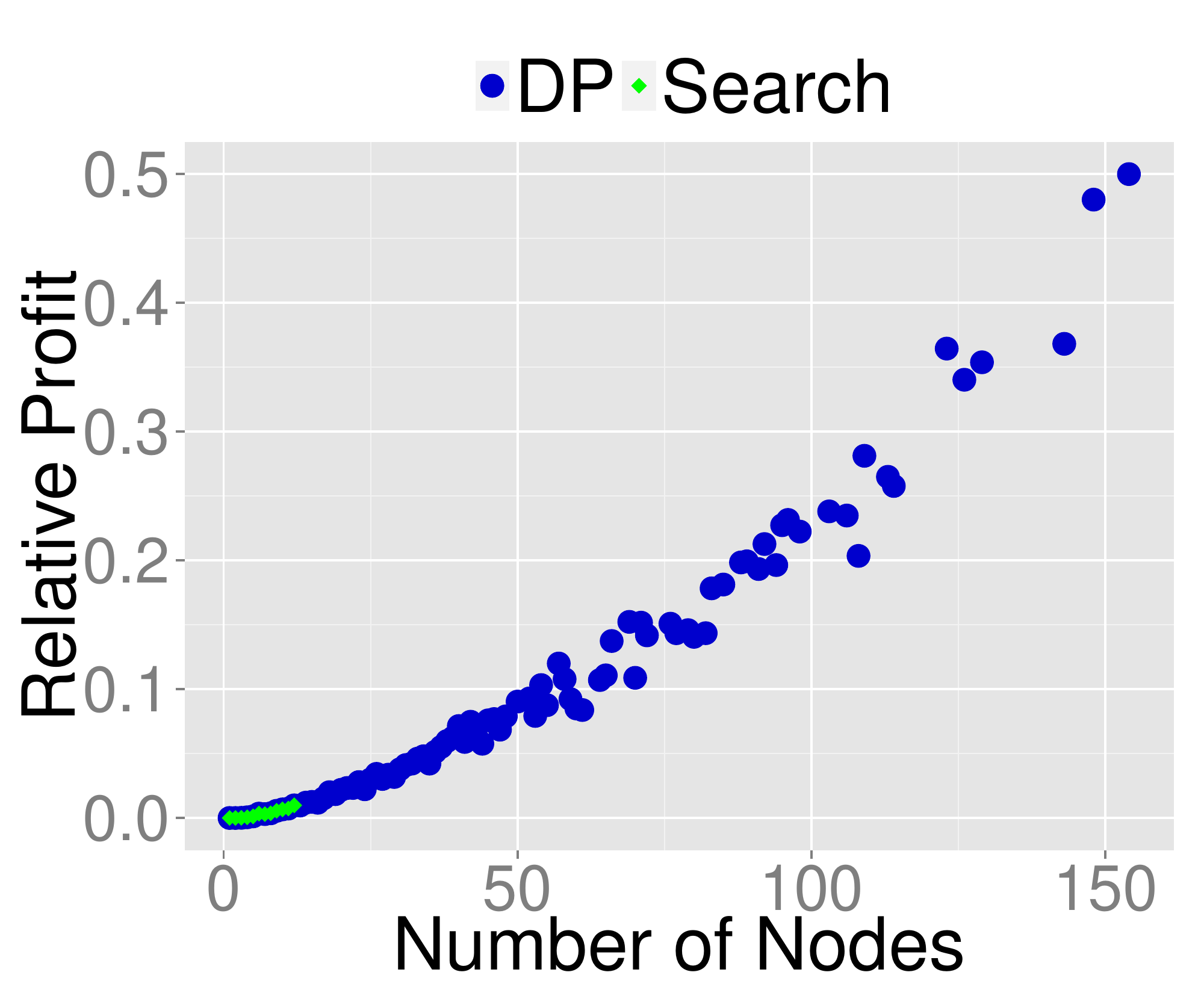}
                \caption{Profit Relative to Greedy}
                \label{fig:dpProfit}
        \end{subfigure}
	\begin{subfigure}[b]{0.23\textwidth}
                \includegraphics[width=1.\columnwidth]{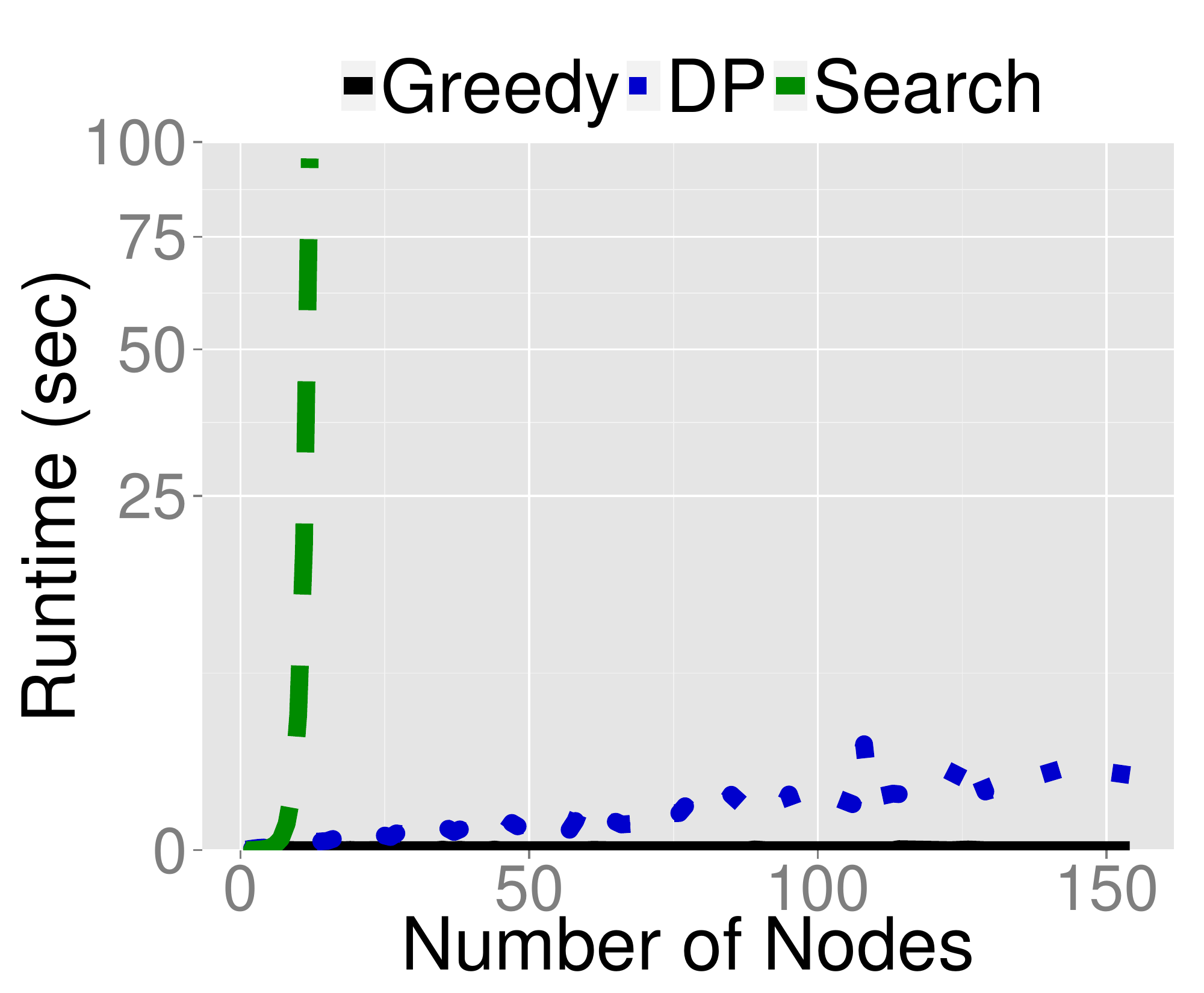}
                \caption{Runtime}
                \label{fig:dpTime}
        \end{subfigure}
        \caption{DP outperforms Greedy and Search}
        \label{fig:dynamicProgramming}
\end{figure}

\paragraph{Resilience}\smallskip
The property of \emph{resilience} provides assurances to the agents, by
ensuring that inaccurate estimates will not pose a significant risk to the
agents' profits.
This property is crucial, as errors in the estimates are
very common~\cite{Akdere:2012, Duggan:2013, Wu:2014}. Our framework ensures
resilience to these inaccuracies by accounting for \emph{risk} (Definition~\ref{def:risk}).
Specifically, the agents can profit by adjusting the risk they
prefer to take. According to Equation~\ref{eq:riskAware}, the risk is part of
the objective and controlled by a parameter $\lambda$. When $\lambda$ is
large, the agent has low confidence in the estimate (conservative). This
setting reduces the loss of profit if the agent's estimate is inaccurate.

We again consider an RDS agent
using db.m3.medium and the default PostgreSQL optimizer, and an EMR agent
using m1.medium and sampling to estimate runtime. We evaluate relative loss
using Equation~\ref{eq:relativeLoss} and plot it for different values of
$\lambda$ (Figure~\ref{fig:resilience}). A value of $\lambda=0$ means that the
agent is confident that their estimate is correct. However, since in this case
the estimates were inaccurate, the relative loss for $\lambda=0$ is high: the
agents' profit is much lower than the optimal profit they could have achieved.
For both agents (EMR and RDS), the relative loss decreases for higher values
of $\lambda$.  
This shows that by adjusting for risk, the agents can reduce
loss of profit.

\subsection{Fine-grained pricing} \label{sec:fineExperiments}
In our final set of experiments, we evaluate fine-grained pricing
(Algorithm~\ref{alg:fineGrained}) against a large dataset of real-world
scientific workflows \cite{DeRoure:2009}. This dataset is well-suited for this experiment, as it
provides diverse computational flows of varied sizes and complexities. The
published workflows do not report real execution information (time and cost),
and we are not aware of any public workflow repositories that provide this
information. Therefore, we augment the real workflow graphs with synthetic time and cost
histograms for each subtask, drawn from random Gaussian distributions with
means in the [1,100] range, and variances in the [0,5] range. 
Each subtask has 5 candidate configurations with different time and cost
histograms.
We compute the profit using utility $\Util(t, \Price) = - t - \Price$ and
demand $\Demand(\Util) = 100 + 0.01 \Util$ (Section
\ref{sec:linearCaseSolution}). We set $\lambda_\Demand$ (the coefficient of $\Util$) to a smaller value than the ones used for RDS and EMR workloads, because the completion times and costs for workflows are much larger.

\begin{figure}
	\centering
        \includegraphics[width=.75\columnwidth]{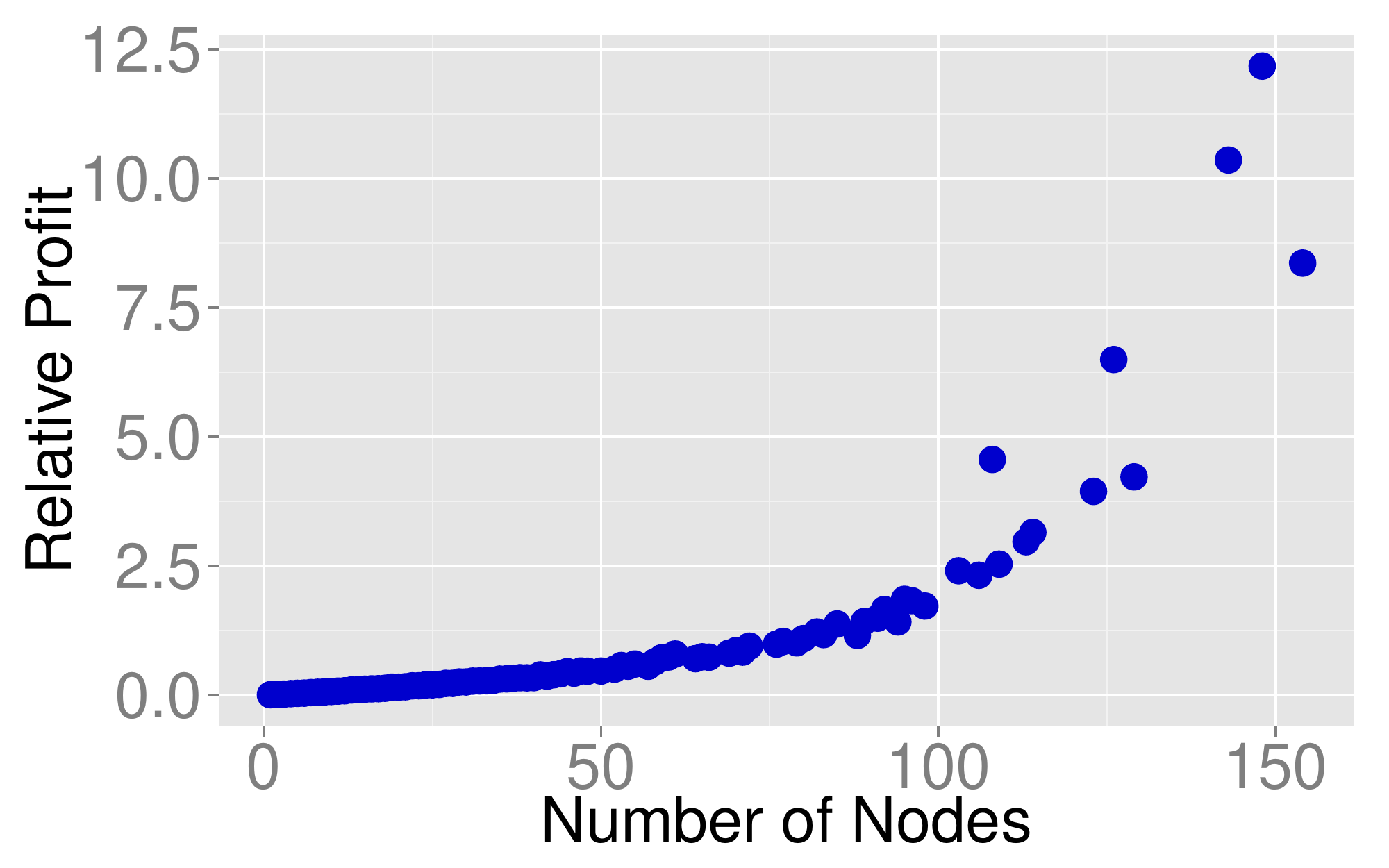}
        \caption{Pricing at finer granularities can vastly increase the agents' profits.} 
        \label{fig:coarse}
\end{figure}

First, we evaluate our Dynamic Programming algorithm
(Algorithm~\ref{alg:fineGrained}) against two baselines: (1) an exhaustive
search strategy (\emph{Search}) that explores all possible configuration
assignments, and (2) a greedy strategy (\emph{Greedy}) that selects the
configuration that leads to the maximum local profit for each subtask. We
perform 10 repetitions for each workflow, using different random time and cost
distributions for each repetition.
Figure~\ref{fig:dpProfit} shows the relative profit achieved by \emph{Search}
and \emph{DP} compared to \emph{Greedy}: $\frac{\ProfitAll -
\ProfitAll_{Greedy}}{\ProfitAll_{Greedy}}$. \emph{DP} achieves better profits
than \emph{Greedy}, and the effect increases for larger workflows: for
workflows with 154 subtasks, \emph{DP} achieves 50.0\% higher
profit than \emph{Greedy}. \emph{Search} provides few data points, as it
cannot scale to larger graphs. For small workflows (up to 12 subtasks)
\emph{Search} and \emph{DP} select equivalent configurations that result in
the same (optimal) profit. Figure~\ref{fig:dpTime} shows the running time of
the three algorithms. As expected, exhaustive search quickly becomes
infeasible, and \emph{Greedy} is faster than \emph{DP}. However, the runtime
of \emph{DP} remains low even for larger workflows. Combined with the profit
gains over \emph{Greedy}, this experiment demonstrates that
Algorithm~\ref{alg:fineGrained} is highly effective for fine-grained pricing.

Second, we evaluate the benefits of fine-grained pricing, compared to
coarse-grained pricing. Figure~\ref{fig:coarse} shows the profit achieved by
\emph{DP}, which assigns a configuration to each subtask, relative to the
optimal single configuration for the entire workflow. In this experiment,
fine-grained pricing doubled the agents' profits for small workflows,
compared to coarse-grained pricing, and the gains increase as workflows grow
larger. For the largest workflows in our dataset, fine-grained pricing
achieved $12.5$ times higher profits.

\subsection{Comparison with Alternative Approaches} \label{sec:extraEval}

\begin{figure}
\centering
    \includegraphics[width=.6\columnwidth]{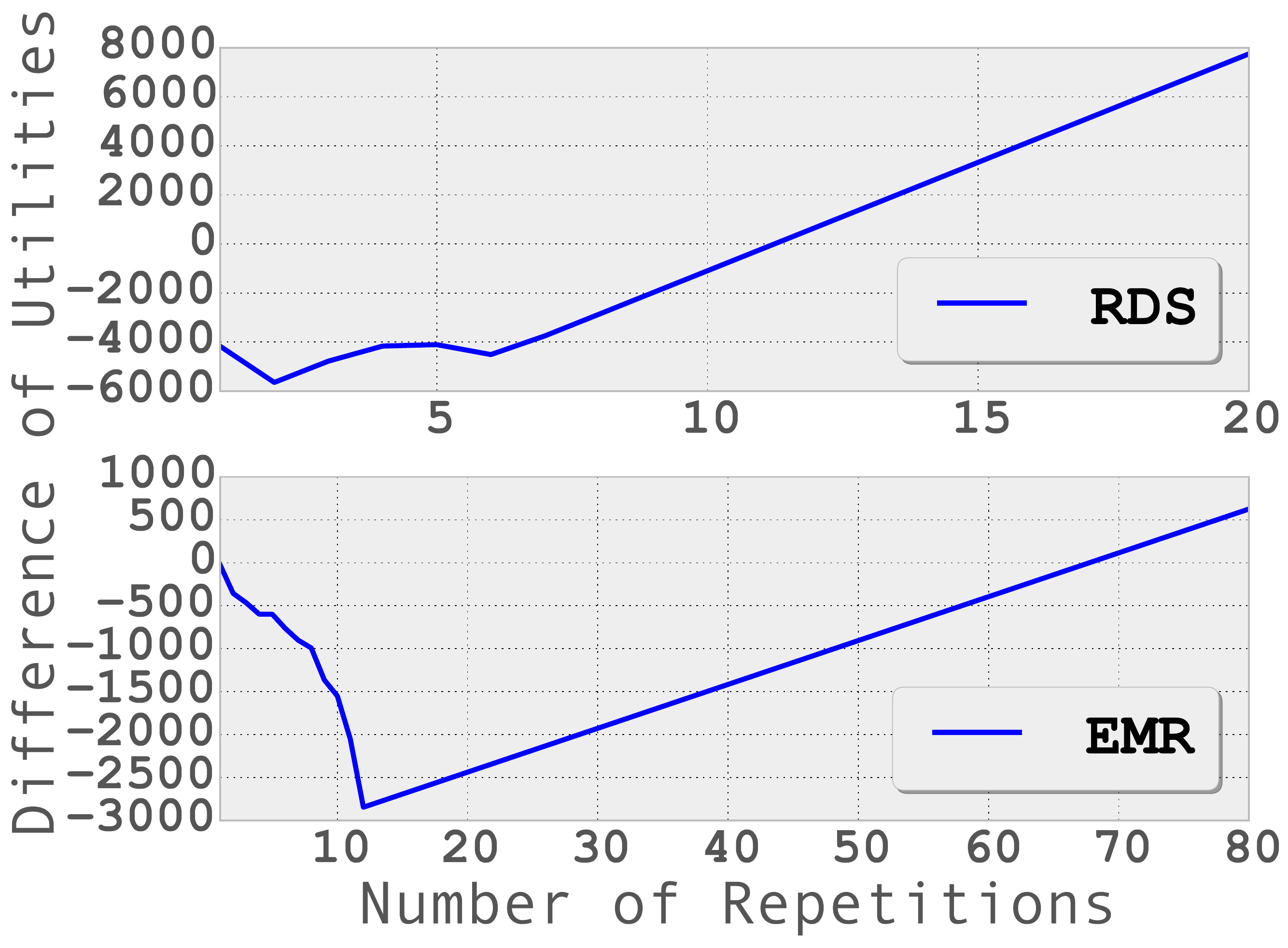}
    \caption{Consumers prefer our approach to benchmark except for highly repetitive workloads}
    \label{fig:benchmark}
\end{figure}

\paragraph*{Benchmark-based Approach} ~

Contrasting our work with Benchmarking as a Service (BaaS)~\cite{Floratou:2011} is
meaningful when workload repetition is significant. We
assume a consumer who repeate RDS and EMR workloads without any
modifications, and with each repetition tested a different
configuration; once all configurations were tested, the consumer would
continue using the best one in subsequent repetitions. For this
experiment, we limited the number of possible configurations to
7 for RDS and 
12 for EMR. This biases the experiment in
favor of benchmarking, as in practice the number of configurations
that the consumer would have to try is much higher. In this simplified
setting, we found that it took 12 repetitions in RDS and 68
repetitions in EMR before the consumer would start benefiting from
benchmarking. In the real world, these numbers are much higher, as
cloud providers offer way more configurations than the ones we
considered here. Cluster size alone causes an explosion in the number
of options, so having an agent with an analytical model, such as in
\cite{Herodotou:2011:SOCC}
, is necessary.

In practice, BaaS has additional challenges: As discussed in~\cite{Floratou:2011}, data growth and changes in the input make
BaaS complicated. Workloads are almost never repeated exactly, as the
input changes between executions, requiring the BaaS provider to
monitor and react to changes. Moreover, cloud providers change machine
types, parameters, and pricing very frequently\,---\,e.g., between 2012 and
2015, AWS introduced on average 2.6 new instances every three months.
When these settings change, resource selection needs to be
re-evaluated, even if a workload stays the same.

\paragraph*{VCG-auction-based Approach} ~

In this experiment, we compare our model with an VCG auction model. In a strategy-proof VCG auction, the agents truthfully reveal their best costs of executing a task. Then the consumer selects the agent with the best utility but pays according to the second best utility. Therefore, given a specific task, only the best and the second best agents' contract together define the price. So we create one agent who is always able to find the best configuration for each task, and another agent who is doing just worse than the best one. We assume a delta $\Delta$ that is the difference between the best utility and the second best utility, and vary this delta to see how much profit the best agent can get.

\begin{figure}
\includegraphics[width=\columnwidth]{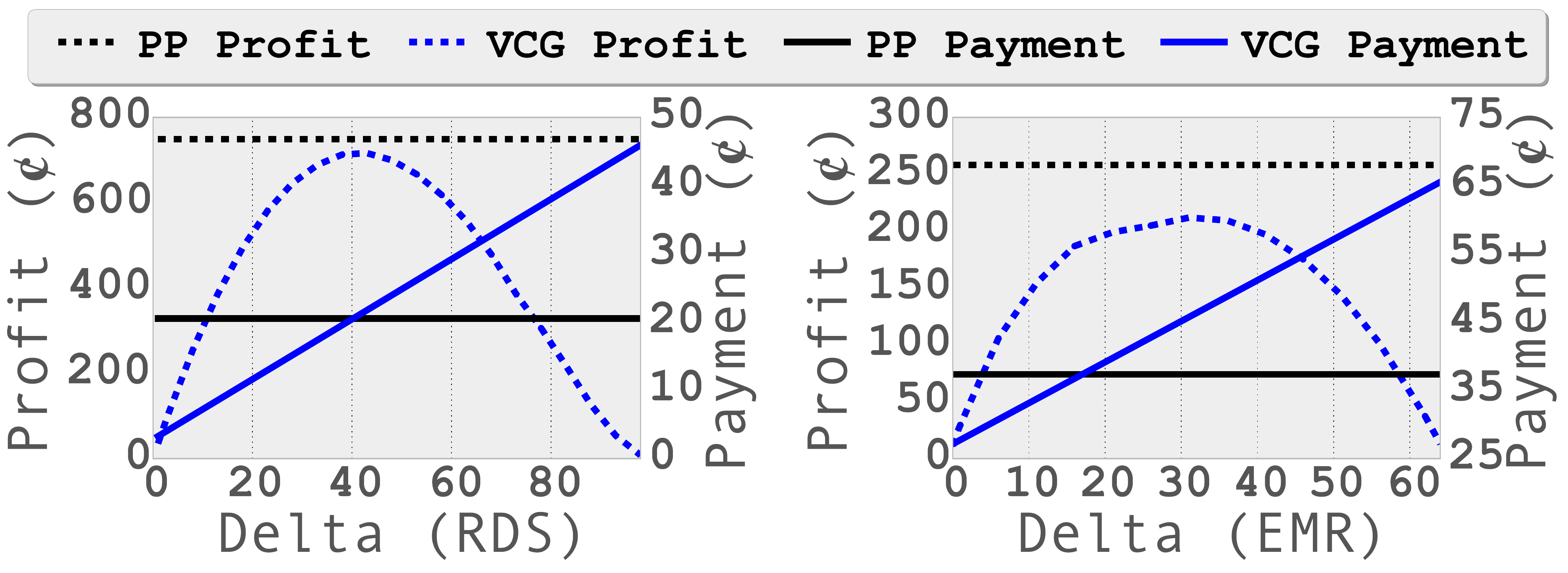}
\caption{VCG auction brings less profit to agents without necessarily reducing consumers' payments. (PP = our posted-price approach)}
\label{fig:vcg}
\end{figure}

As depicted in Figure \ref{fig:vcg}, when $\Delta$ becomes greater, 
the best agent's profit goes up but then drops down. This is because of the demand. Larger $\Delta$ means more profit of each contract but less demand. So the profit reaches maximum at a certain point. Even this maximum VCG profit is less than the profit in our approach. It is because our approach optimizes the profit for each individual task in a workload, while the VCG approach in this experiment applies unified profit. Moreover, when $\Delta$ gets greater, consumers' average payment for each contract is bigger. This is by definition of VCG. The VCG's average payment goes higher than our approach's when $\Delta$ is very small.
In a word, our approach brings more profit to agents without necessarily increasing consumer's payment.

\smallskip


\section{Related Work} \label{sec:related}
In contrast to our market framework, which emphasizes the consumer need for task-level pricing, 
existing work on cloud pricing largely focuses on resource 
usage. One study used game theory to model a pricing
framework where consumers compete with each other to
maximize their utilities~\cite{Daoud:2009, Hadji:2011}. Specifically, each consumer
has a demand on resources, and their utility is a function of
demand and price. A choice of price by a service provider
triggers a change in the consumers' demands to maximize
their utilities, thus affecting the provider's revenue. This
work makes two key assumptions that are not present in our
framework. First, the chosen utility functions indicate that
the quality of service (QoS) degrades when consumers share
resources. While meaningful for resources such as wireless
bandwidth, this assumption has been shown to not always hold in
many types of resources relevant to computation~\cite{Ahmad:2011:VLDB, Ahmad:2011:EDBT}. In
fact, QoS can improve when, for example, consumers share
data and cache, and agents in our framework can take advantage of this to make more profit. Second, it is assumed that the consumers
know each other's demands and strategies, and adjust their
demands accordingly. In contrast, in our framework we consider consumers' tasks separately and use probability distributions to model runtime and financial cost, leading to a
simpler yet practical model.

Variants of pricing mechanisms assume that providers
price dynamically, based on the consumer arrival and departure
rates~\cite{An:2010, Mihailescu:2010:CCGrid, Xu:2012, Xu:2013:CC, Xu:2013:SIGMETRICS}. In turn, prices also guide consumer demand. In a different direction, Ibrahim et al.~\cite{Ibrahim:2011} argue that the interference across virtual machines sharing the same hardware leads to overcharging. They suggest cloud providers to price based on effective virtual machine time. This framework guarantees benefits to consumers and urges providers to improve their system design.

Wong et al.\ \cite{Wong:2012} have compared three different pricing
strategies in terms of fairness and revenue: (1) Bundled
pricing, in which providers sell resource bundles (e.g., virtual machine with CPU, memory, and other resources) to
consumers; (2) Resource pricing, in which providers charge
consumers separate prices for the consumed resources; (3)
Differentiated pricing, in which providers charge consumers
personalized prices. They define fairness as a function of equitability and efficiency of utilities among all consumers and conclude that differentiated pricing provides
the best fairness. They treat consumers' jobs identically
and define fairness based on the number of jobs that are
successfully executed by the cloud provider. They do not
consider the connection between uncertain completion time
and utility.

Economic-based resource allocation has been extensively studied in
grid computing~\cite{Buyya:2005, Buyya:2009, Haque:2011, Qureshi:2014,
Kiefer:2014}. Researchers have developed different economic models in
two main categories: ``commodity markets'' and ``auctions''. In a commodity
market, resources are sold at a posted-price. The price of resources
affects consumers' utility and demand, and therefore impacts the
providers' profit. Finding the equilibrium is the main focus in these
models. Yeo et al. proposed a utility-driven pricing
function~\cite{Yeo:2004, Yeo:2007} and an autonomic pricing
approach~\cite{Yeo:2010}. Stuer et al.~\cite{Stuer:2007} adapted
Smale's method~\cite{Smale:1976} to price resources in grid computing markets.
Bossenbroek et al.~\cite{Bossenbroek:2009} introduced option contracts
into the market and used hedge strategies to reduce consumers' risk of
missing task deadlines. Auction-based pricing in grid computing
contains several different forms. Double auction requires consumers
and providers to publish their requests and offers in a
marketplace~\cite{He:2003, Tan:2007, Wieczorek:2008, Izakian:2010}.
Vickrey auction is a type of sealed-bid auction in which the highest
bidder wins but pays the second-highest bid~\cite{Waldspurger:1992,
Nisan:1998}. Combinatorial auction allows consumers to bid on
combinations of resources~\cite{Chun:2005}.

Posted-price selling and auctions are both
established ways of selling, and it is not clear which one is better.
The key challenge is the uncertainty of the value of the commodity (in
this case, the computational resource) and researchers have developed
different models to compare the two mechanisms under various
assumptions. Computer scientists measure system metrics in these two
mechanisms. Wolski et al.~\cite{Wolski:2001} state that posted-price
brings more price stability, higher task completion rate
and higher resource utilization ratio 
than auctions. Vanmechelen and Broeckhove~\cite{Vanmechelen:2007}
conclude that posted-price results in more stable pricing, while Vickrey
auction results in fewer message passing in dynamic pricing.
Economists have discussed the revenues in these two mechanisms.
Wang~\cite{Wang:1993} compares posted-price selling and auctions where the buyers have independent private values of the commodity. He
finds that posted-price selling brings more profit to the seller when
the buyers' values of the commodity are widely dispersed. Campbell and
Levin~\cite{Campbell:2006} state that auctions perform worse than
posted-price when buyer valuations are interdependent.
Hammond~\cite{Hammond:2010, Hammond:2013} concludes that revenues of
the two mechanisms cannot be statistically distinguished based on his
study on eBay.

In contrast to our framework, this entire body of work focuses on
resource-level pricing, and does not provide a mechanism for consumers to
select resources based on their tasks. Recent work has started shifting the
focus to task-level pricing. 
Floratou et al.~\cite{Floratou:2011} propose a Benchmark as a Service (BaaS) that benchmarks user's workload and suggests the optimal configuration for repetitive execution. As they mention in the paper, changes such as growth of input data make BaaS complicated. In our approach, we do not assume repetitiveness of workloads. Consumers may pay more for extremely repetitive workloads, but are free from benchmarking evolved workloads.

Auction-based models~\cite{Tanaka:2013,
Tanaka:2014} assume that providers bid for service contracts. These
models use the Vickrey-Clarke-Groves auction mechanism, which redefines the payment to the winner and guarantees that all providers report their true cost of providing the service. While this work provides a good model for task-level pricing, it does not consider execution time for tasks. In our framework, we balance the consumers' trade-off of execution time and price through their utility function.

Personalized Service Level Agreements (PSLA)~\cite{Ortiz:2013, Ortiz:2015, Ortiz:2016, Ortiz:2016:arxiv} resemble the
contracts in our framework, and describe a vision of a system that analyzes
consumers' data and suggests to them tiers of service. Each tier describes
three properties: completion time, price per hour, querying capabilities. For
example, a tier on Amazon EMR can be ($<3.5$ minutes, \$0.12/hour, SELECT 1
attribute FROM 1 table WHERE condition). In our framework, consumers do not
subscribe to a tier of service, but rather provide the task they need and the
agent provides a specific price for the task.

When multiple agents find the same best configuration for some tasks, their prices affect each other and finally converge to the Nash Equilibrium in differentiated Bertrand model \cite{rasmusen:2007} in the long run.

The agents in our computation framework derive estimates of cost and time.
Several approaches employ machine learning to predict the execution time of a
query~\cite{Ganapathi:2009, Akdere:2012}. Li et al. \cite{Li:2012} use statistical model to estimate CPU time and other resource comsumption. Duggan et al.~\cite{Duggan:2013, Duggan:2014} introduce special metrics and predict performance based on sampling. Wu et al. analyze
the query execution plan directly to derive runtime
predictions~\cite{Wu:2013:ICDE, Wu:2013:VLDB}, or use probabilistic
models~\cite{Wu:2014}.  
Ye et al.~\cite{Ye:2012} perform service composition given the resource requirement of individual tasks.
Uncertainty of time and cost is an important component in our framework.
Existing work on scheduling SLAs considers uncertainty in the completion time
when contracts specify a price.  
Specifically, scheduling considers 3
possible outcomes: (1) the provider accepts the SLA and returns results before
the deadline, earning some profit; (2) the provider accepts the SLA but misses
the deadline, and pays some penalty to the consumer; (3) the provider rejects
the SLA and pays some penalty immediately. 
Xiong et al.\ \cite{Xiong:2011}
have developed an SLA admission control system that predicts the distribution
of completion time, and accepts or rejects SLAs based on the expected profit.
Chi et al.\ \cite{Chi:2013} assume a stream of SLAs all of which must be accepted.
Their system minimizes loss by determining the execution order of SLAs
based on the uncertain completion time and the penalty of missing the deadline. 
Liu et al.\ \cite{Liu:2013} have proposed an algorithm to solve tenant placement in the cloud given the distribution of completion time and the penalty of SLAs.
Our market works differently in two aspects. First, our contracts consist of multiple target times, which are more flexible than the single deadline implicit in these SLAs. Second, we do not require the consumer to propose an SLA that may be rejected.  Instead, the consumer makes a request that is priced by the agent according to their capabilities.

Fine-grained contract pricing is related to query optimization in distributed databases~\cite{Lanzelotte:1993, Ganguly:1992} as we execute subtasks using different virtual machines. However, contract optimization has two objectives (time and cost), while query optimization has only one (time). These two objectives propagate differently in the task graph, making the problem more difficult.

\smallskip


\section{Discussion} \label{sec:discussion}

Our framework can be easily extended to handle applications with
different QoS parameters. For example, in long-running services,
completion time is not relevant and thus should not be part of the
utility function. In contrast, other factors, such as response time,
are important. These parameters are also uncertain due to unstable
cloud performance~\cite{Lorenz:1998, Wang:2011}. While we did not
experiment with alternative QoS parameters and different application
settings, our market framework is already equipped to handle them with
appropriate changes to the utility function.

A meaningful extension to our work is to augment the market to handle
varying prices. Our current framework assumes fixed prices for
resource configurations. However, fluctuating prices do exist in the
real world. For example, Amazon EC2 allows agents bid spot instances
with much lower price~\cite{Yi:2010, Yehuda:2013}. Agents set a
maximum price threshold when requesting a spot instance. The request
can be fulfilled when the market price of a spot instance is lower
than this threshold. If the market price increases above the
threshold, the spot instance will be terminated. In addition, agents
can rent reserved instances either directly from Amazon EC2 or through
its Reserved Instance Marketplace. In these cases, agents have more
options to execute a task: 1) buy spot instances; 2) use their
previously reserved instances; 3) buy reserved instances from others.
These options introduce two additional factors to the market. First, the market needs to account for a supply function $S(\Rate, t)$. This means there are
$S(\Rate, t)$ instances with rate $\Rate$ and available time $t$,
where $\Rate$ is usually lower than the regular instance rate and $t$
must be a limited period ranging from hours to years. Second, the framework needs to consider the starting time of a task. The starting time is inconsequential when there are
only regular instances with fixed rates: The agent starts regular
instances whenever a consumer accepts the contract. However, the
starting time matters when the rates fluctuate. In this case, agents need to estimate
(a) the supply function at different points in time to ensure enough
machine hours for finishing consumers' tasks, and (b) the demand function
at different points in time to decide how many instances they want to reserve.
This is not a straightforward extension to our work, and will likely lead to a more complex market model.

\section{Conclusions} \label{sec:conclusion}

In this paper, we propose a new marketplace framework that consumers can use
to pay for well-defined database computations in the cloud. In contrast with
existing pricing mechanisms, which are largely resource-centric, our framework
introduces agent services that can leverage a plethora of existing tools for
time, cost estimation, and scheduling, to provide consumers with personalized
cloud-pricing contracts targeting a specific computational task. Agents
price contracts to maximize the utility offered to consumers while
also producing a profit for their services. Our market can operate in
conjunction with existing cloud markets, as it does not alter the way cloud
providers offer and price services. It simplifies cloud use for consumers by
allowing them to compare contracts, rather than choose resources directly. The
market also allows users to extract more value from public cloud resources,
achieving cheaper and faster query processing than naive configurations, while
a portion of this value is earned by the agents as profit for their services.
Our experimental evaluation using the AWS cloud computing platform
demonstrated that our market framework offers incentives to consumers, who can
execute their tasks more cost-effectively, and to agents, who make profit from
providing fair and competitive contracts.

\balance
\bibliographystyle{abbrv}
\bibliography{reference} 


\appendix

\section{Linear case} \label{apx:linearCase}

We provide an analytical solution to the linear case, in which the utility function $U$ and demand function $M$ are linear:

\begin{itemize}[leftmargin=5mm]
\item The consumers specifies a linear utility function $\Util(t, \Price) =
-\ParaUtilT \cdot t - \ParaUtilP \cdot \Price$, which means they are always
willing to pay $\ParaUtilT$ units of cost to save $\ParaUtilP$ units of time.

\item The demand function is linear: $\Demand(\Util) = \ParaDemand +
\ParaDemandU \cdot \Util$, which means that when $\Util$ increased by
$1/\ParaDemandU$, $1$ more contract would be accepted. Since $\Util(t,
\Price)$ is linear, the demand function can be written as $\Demand(\Util) =
\ParaDemand - \ParaDemandT t - \ParaDemandP \Price$.

\end{itemize}

Applying Equations~\ref{eq:discreteProfit} and~\ref{eq:discreteDemand} to Problem~\ref{prob:Pricing}, we compute the overall profit as:
\begin{equation*}
\begin{aligned}
\ProfitAll 
= & \sum_{i=1}^{|I|} (\Price_i - c_i)p_i  \cdot  \sum_{i=1}^{|I|} \Demand \left( \Util(\hat{t}_i, \Price_i)\right) p_i \\
= & (\Price - c)^Tp  \cdot   \left(\ParaDemand - \ParaDemandT \expect^T p - \ParaDemandP \Price^T p \right) \\
= &  -\ParaDemandP \left( \Price^Tp - \frac{\ParaDemand - \ParaDemandT \expect^Tp + \ParaDemandP c^Tp}{2\ParaDemandP} \right)^2 \\
& + \frac{(\ParaDemand - \ParaDemandT \expect^Tp - \ParaDemandP c^Tp)^2}{4 \ParaDemandP} \\
\end{aligned}
\end{equation*}

$\ProfitAll$ is maximized when 
{\small
\begin{equation*}
\Price^Tp =
\left\{
\begin{aligned}
\frac{\ParaDemand - \ParaDemandT \expect^Tp + \ParaDemandP c^Tp}{2\ParaDemandP},\; &  \ParaDemand - \ParaDemandT \expect^Tp - \ParaDemandP c^Tp \geq 0 \\
c^Tp + \epsilon,\; & otherwise
\end{aligned}
\right.
\end{equation*}
}
where $\epsilon$ is a small positive value.

Furthermore, when $\ParaDemand - \ParaDemandT \expect^Tp - \ParaDemandP c^Tp \geq 0$, 
\begin{equation} \label{eq:appendixLinearProfit}
\ProfitAll =  \frac{(\ParaDemand - \ParaDemandT \expect^Tp - \ParaDemandP c^Tp)^2}{4 \ParaDemandP}
\end{equation}

\begin{example} 
\label{ex:oneBucket}
Let $\Itv = {[0, \infty)}$; then $p_1 = 1$, and $\hat{t}_1$ is the expected completion time. 
The only variable is $\Price_1$.  The problem becomes:
\begin{equation*}
\begin{aligned}
\mbox{maximize}: & \\
\ProfitAll = &  (\Price_1 - c_1)  \cdot (\ParaDemand - \ParaDemandT \hat{t}_1 - \ParaDemandP \Price_1) \\
\mbox{subject to}: & \\
E[profit] & = (\Price_1 - c_1) > 0\\ 
\end{aligned}
\end{equation*}
The result for the price $\pi_1$ is:
\begin{equation*}
\begin{aligned}
\Price_1 & = max\left(\frac{\ParaDemand - \ParaDemandT \expect_1 + \ParaDemandP c_1}{2\ParaDemandP}, c_1 + \epsilon \right) \\
\end{aligned}
\end{equation*}
\end{example}

We denote a configuration by its probability, time, and cost tuple: $(p, \expect, c)$. 
We can apply utility function directly to this configuration as $E[\Util(p, \expect, c)] = -\ParaUtilT \expect^Tp - \ParaUtilP \Price^Tp$.
We define that a configuration $(p_1, \expect_1, c_1)$ is better than another configuration $(p_2, \expect_2, c_2)$ when $E[\Util(p_1, \expect_1, c_1)] > E[\Util(p_2, \expect_2, c_2)]$.
If an agent finds a better configuration than another configuration, she/he can provide consumers greater utility (defined in Equation \ref{eq:expectedUtility}) while making more profit (defined in Equation \ref{eq:linearProfit}). Formally:

\begin{thm}
When $E[\Util(p_1, \expect_1, c_1)] > E[\Util(p_2, \expect_2, c_2)]$, consumer's utility $E[\Util_1] > E[\Util_2]$ and overall profit $\ProfitAll_1 > \ProfitAll_2$.
\end{thm}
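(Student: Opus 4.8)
The plan is to reduce each configuration to a pair of scalar summaries and then show that, under the optimal linear-case pricing, both the consumer's expected utility and the agent's overall profit are monotone functions of a single scalar that coincides (up to sign and affine rescaling) with the configuration utility $E[\Util(p,\expect,c)]$ from the hypothesis. Concretely, for configuration $j$ I would write $X_j = \expect_j^T p_j$ for the probability-weighted expected time and $Y_j = c_j^T p_j$ for the probability-weighted expected cost. Every quantity in the linear-case analysis --- the optimal $\Price^T p$, the profit in Equation~\ref{eq:linearProfit}, and the expected utility --- depends on the configuration only through $X_j$ and $Y_j$, so the entire argument lives in the two-dimensional $(X,Y)$ plane.

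The algebraic linchpin is the consistency between the utility and demand coefficients. Because demand is a linear function of utility, $\Demand(\Util)=\ParaDemand+\ParaDemandU\Util$ with $\Util=-\ParaUtilT t-\ParaUtilP\Price$, matching against the form $\Demand=\ParaDemand-\ParaDemandT t-\ParaDemandP\Price$ forces $\ParaDemandT=\ParaDemandU\ParaUtilT$ and $\ParaDemandP=\ParaDemandU\ParaUtilP$. First I would record this, since it lets me rewrite $\ParaDemand-\ParaDemandT X_j-\ParaDemandP Y_j=\ParaDemand-\ParaDemandU W_j$, where $W_j:=\ParaUtilT X_j+\ParaUtilP Y_j$. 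This collapses the genuinely two-dimensional dependence on $(X_j,Y_j)$ onto the single scalar $W_j$. Up to a positive factor and an additive constant, $W_j$ is the negative of the configuration utility, so the hypothesis $E[\Util(p_1,\expect_1,c_1)]>E[\Util(p_2,\expect_2,c_2)]$ is equivalent to $W_1<W_2$ (this holds whether the configuration utility is evaluated at cost or at the optimal price, since both are strictly decreasing affine functions of $W_j$). Without this coefficient alignment, profit and utility would in general order configurations by different linear combinations of time and cost, and the theorem could fail; this is the step that makes the win-win statement possible.

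Next I would substitute the closed forms. Plugging the optimal price $\Price^T p=(\ParaDemand-\ParaDemandT X_j+\ParaDemandP Y_j)/(2\ParaDemandP)$ into the expected-utility expression $E[\Util_j]=-\ParaUtilT X_j-\ParaUtilP\,\Price^T p$ and simplifying with the coefficient identities yields $E[\Util_j]=-\tfrac12 W_j-\ParaDemand/(2\ParaDemandU)$, which is strictly decreasing in $W_j$; likewise Equation~\ref{eq:linearProfit} becomes $\ProfitAll_j=(\ParaDemand-\ParaDemandU W_j)^2/(4\ParaDemandP)$. The consumer-utility conclusion $E[\Util_1]>E[\Util_2]$ is then immediate from $W_1<W_2$ and the monotonicity of the affine map.

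The main obstacle is the profit comparison, since $\ProfitAll_j$ is quadratic in $W_j$ and hence not globally monotone. Here I would invoke that Equation~\ref{eq:linearProfit} is asserted only on the interior branch $\ParaDemand-\ParaDemandT X_j-\ParaDemandP Y_j\geq 0$, i.e.\ $\ParaDemand-\ParaDemandU W_j\geq 0$; on that branch $\ParaDemand-\ParaDemandU W_j$ is nonnegative and strictly decreasing in $W_j$, so squaring preserves the strict order and dividing by $4\ParaDemandP>0$ gives $\ProfitAll_1>\ProfitAll_2$. The one point needing care is confirming both configurations lie in this interior regime; the boundary case, where the profitability constraint binds and the optimal price is $c^T p+\epsilon$, would require a short separate check, but since the theorem cites Equation~\ref{eq:linearProfit} I would state the result under that same assumption and note the boundary case as a remark.
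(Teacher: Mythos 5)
Your proposal is correct and follows essentially the same route as the paper's proof: both substitute the optimal interior-branch price into the expected-utility and profit expressions, use the coefficient identities $\ParaDemandT=\ParaDemandU\ParaUtilT$ and $\ParaDemandP=\ParaDemandU\ParaUtilP$ to collapse everything onto the single scalar configuration utility (your $W_j$ is just its negation), and conclude by monotonicity of the affine map for utility and of the square on the nonnegative branch for profit. Your explicit remark excluding the zero-demand boundary case matches the paper's own ``ignore the corner case'' step, so no gap remains.
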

\begin{proof}
We ignore the corner case in which $\ParaDemand - \ParaDemandT \expect^Tp - \ParaDemandP c^Tp < 0$, which means even the most efficient configuration found by the agent leads to zero demand. When $\ParaDemand - \ParaDemandT \expect^Tp - \ParaDemandP c^Tp \geq 0$, according to Section \ref{sec:linearCaseSolution}, the overall profit $\ProfitAll$ is maximized when $\Price^Tp = \frac{\ParaDemand - \ParaDemandT \expect^Tp + \ParaDemandP c^Tp}{2\ParaDemandP}$. So
\begin{equation*}
\begin{aligned}
E[\Util] = & -\ParaUtilT \expect^Tp - \ParaUtilP \Price^Tp \\
= & -\ParaUtilT \expect^Tp - \ParaUtilP \frac{\ParaDemand - \ParaDemandT \expect^Tp + \ParaDemandP c^Tp}{2\ParaDemandP} \\
= & -\ParaUtilT \expect^Tp - \ParaUtilP \frac{\ParaDemand - \ParaDemandU \ParaUtilT \expect^Tp + \ParaDemandU \ParaUtilP c^Tp}{2  \ParaDemandU \ParaUtilP } \\
= & - \frac{\ParaDemand}{2  \ParaDemandU} - \frac{\ParaUtilT \expect^Tp}{2} - \frac{ \ParaUtilP c^Tp}{2} \\
= & - \frac{\ParaDemand}{2  \ParaDemandU} + \frac{1}{2} E[\Util(p, \expect, c)]
\end{aligned}
\end{equation*}
So $E[\Util_1] > E[\Util_2]$ if $E[\Util(p_1, \expect_1, c_1)] > E[\Util(p_2, \expect_2, c_2)]$.

In addition, 
\begin{equation*}
\begin{aligned}
\ProfitAll = & \frac{(\ParaDemand - \ParaDemandT \expect^Tp - \ParaDemandP c^Tp)^2}{4 \ParaDemandP} \\
= & \frac{(\ParaDemand - \ParaDemandU  \ParaUtilT \expect^Tp - \ParaDemandU  \ParaUtilP c^Tp)^2}{4 \ParaDemandP} \\
= & \frac{(\ParaDemand + \ParaDemandU E[\Util(p, \expect, c)])^2}{4 \ParaDemandP}
\end{aligned}
\end{equation*}
Since $\ParaDemand - \ParaDemandT \expect^Tp - \ParaDemandP c^Tp = \ParaDemand + \ParaDemandU E[\Util(p, \expect, c)] \geq 0$, $\ProfitAll_1 > \ProfitAll_2$ if $E[\Util(p_1, \expect_1, c_1)] > E[\Util(p_2, \expect_2, c_2)]$. 
\end{proof}

\section{Fine-grained pricing} \label{apx:fineGrained}

\begin{thm}
Contract optimization in fine-grained pricing is NP-hard.  
\end{thm}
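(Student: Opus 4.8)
The plan is to reduce the decision version of the $0/1$ (discrete) Knapsack problem to Fine-Grained Contract Pricing (Problem~\ref{prob:Optimize}). Recall a Knapsack instance: items $1,\dots,n$ with integer weights $w_i$ and values $v_i$, a capacity $W$, and a target $V$; we must decide whether some $S\subseteq\{1,\dots,n\}$ satisfies $\sum_{i\in S} w_i \le W$ and $\sum_{i\in S} v_i \ge V$. Given such an instance I would, in polynomial time, build a task graph $\DAG$ that is a single chain $\Query_1 \to \Query_2 \to \dots \to \Query_n$, one node per item. Each $\Query_i$ gets exactly two candidate configurations in $\ConfSet$: an \emph{include} configuration with time $w_i$ and cost $0$, and an \emph{exclude} configuration with time $0$ and cost $v_i$. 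Finally I would fix the profit $\ProfitAll(T,C)$ induced by a \twoBucket consumer utility that is indifferent to completion time before the deadline $\target=W$ (time coefficient $\ParaUtilT=0$, exactly as in the first piece of Example~\ref{ex:utility}) but strictly prefers a lower price, and that assigns a sufficiently poor value to any completion after $W$.

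Because the graph is a chain it has a unique path, so $T_\QSet=\sum_i T_i(\Conf_i)=\sum_{i\in S} w_i$, where $S$ is the set of nodes assigned their \emph{include} configuration, and $C_\QSet=\sum_i C_i(\Conf_i)=\sum_{i\notin S} v_i = \bigl(\sum_i v_i\bigr)-\sum_{i\in S} v_i$. An assignment meets the deadline exactly when $\sum_{i\in S} w_i \le W$. Since the consumer is indifferent to time before the deadline, the within-deadline profit reduces to the single-interval form of Equation~\ref{eq:linearProfit} with zero time coefficient, namely $\ProfitAll=(\ParaDemand-\ParaDemandP C_\QSet)^2/4\ParaDemandP$, which is \emph{strictly decreasing} in $C_\QSet$ once $\ParaDemand$ is chosen large enough to keep demand positive over the relevant cost range. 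Hence among deadline-meeting assignments, maximizing profit is the same as minimizing $C_\QSet$, i.e.\ maximizing the captured value $\sum_{i\in S} v_i$. I would then argue both directions: the Knapsack instance is a yes-instance iff the agent's optimal profit is at least $\bigl(\ParaDemand-\ParaDemandP(\sum_i v_i - V)\bigr)^2/4\ParaDemandP$. The construction uses $n$ nodes, $2n$ configurations, and integer parameters of size polynomial in the input, so the reduction is polynomial.

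The step I would spend the most care on is specifying $\ProfitAll(T,C)$ so that the deadline acts as a genuine \emph{hard} constraint while profit stays strictly monotone in cost below it. I must pick the post-deadline utility negative enough (and the demand $\Demand(\Util)=\ParaDemand+\ParaDemandU\Util$ accordingly) that no late-finishing assignment can ever beat an on-time one, and verify that the induced profit is order-isomorphic to $-C_\QSet$ on the feasible region; the quadratic form of Equation~\ref{eq:linearProfit} makes this monotonicity routine to check but requires confirming we stay in the regime where demand is positive. A secondary point is to confirm that the time discretization used by Algorithm~\ref{alg:fineGrained} is consistent with integer weights, so the deadline test is exact rather than approximate. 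I would also remark that hardness already arises on a chain, which is a tree: this is fully consistent with Algorithm~\ref{alg:fineGrained} being exact yet only pseudo-polynomial on trees, mirroring precisely the pseudo-polynomial dynamic program that solves Knapsack.
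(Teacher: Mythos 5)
Your reduction is correct and shares its skeleton with the paper's proof: both encode $0/1$ Knapsack as a chain of $n$ subtasks with deterministic times and costs, two candidate configurations per node (an include/exclude pair), and a hard deadline $W$, so that feasible deadline-meeting assignments correspond exactly to knapsack selections. The differences lie in the gadget and in how $\ProfitAll$ is specified. The paper gives node $i$ the options $(w_i,\, v_0 - v_i)$ and $(0,\, v_0)$ with $v_0 = \max_i v_i$, and simply \emph{posits} the piecewise profit $\ProfitAll(T,C) = n v_0 - C$ for $T \le W$ and $0$ otherwise, so the optimal profit literally equals the best attainable knapsack value; this is clean but treats the profit as an arbitrary input function. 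You instead use $(w_i, 0)$ and $(0, v_i)$ and insist on realizing $\ProfitAll$ through the paper's own linear utility/demand machinery: a deadline utility with $\ParaUtilT = 0$, a post-deadline utility negative enough that late assignments cannot win, and $\ParaDemand$ chosen so that $\ParaDemand - \ParaDemandP C > 0$ over the whole cost range $\left[0, \sum_i v_i\right]$, after which the on-time profit $(\ParaDemand - \ParaDemandP C_\QSet)^2/(4\ParaDemandP)$ is strictly decreasing in $C_\QSet$ and you decide via a profit threshold. This buys a mildly stronger statement---hardness already for the restricted profit functions that actually arise from linear utility and demand in the market model, rather than for an arbitrary $\ProfitAll(T,C)$---at the price of the monotonicity, positive-demand, and post-deadline checks you correctly flag as the delicate steps; the paper's version buys brevity and an exact profit-equals-value correspondence. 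Your closing remark that hardness already holds on a chain, a tree on which Algorithm~\ref{alg:fineGrained} is exact but only pseudo-polynomial, matches the paper's framing precisely.
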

\begin{proof}
We prove this by reducing a 0-1 Knapsack problem to it.
In a knapsack problem, there are $n$ items, each of which has a value $v_i$ and a weight $w_i$. The maximum weight of a knapsack is $W$. One wants to maximize the sum of values $V$ of selected items while the sum of their weights does not exceed $W$. We construct a contract optimization problem correspondingly. We make $n$ subtasks in a chain. Their time and cost are deterministic. The $i$th subtask has two options $(w_i, v_0 - v_i)$ and $(0, v_0)$, where $v_0=\max_i\{v_i\}$. Then let the overall profit be 
$$
\ProfitAll(T, C) = 
\begin{cases}
n*v_0 - C & T \leq W\\
0 & T > W
\end{cases}$$
So one can achieve value $V$ in the knapsack problem without exceeding weight limit $W$ if and only if the $\ProfitAll = V$ in the contract optimization problem.
\end{proof}

\section{VCG-auction-based approach} \label{apx:vcg}

Recall the definition of $\Omega(t)$: When $\Util(t,\Price)= -\ParaUtilT t - \ParaUtilP \Price$ is linear, let $\Delta=\Util_i- \Util^*$. The inverse function of $\Util(t,\Price)$ is $\Pi(t,u)=(-\ParaUtilT t - u)/\ParaUtilP$. We define $\Omega(t)=\Pi(t,u-\Delta)=\Pi(t,u)+\Delta/\ParaUtilP$.

\begin{thm} \label{thm:vcgPayment}
When $\Util(t,\Price)$ is linear, the above $\Omega(t)$ satisfies:

1) $\Util_i > \Util^* > \Util^C_i \Rightarrow E[\text{\emph{profit}}] < 0$;

2) $\Util^C_i > \Util^* > \Util_i \Rightarrow E[\text{\emph{profit}}] > 0$.
\end{thm}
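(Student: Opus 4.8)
The plan is to collapse both parts of the claim into a single identity, $E[\text{\emph{profit}}]=(\Util^C_i-\Util^*)/\ParaUtilP$, and then read off the two cases from a sign. Here $\Util^C_i$ denotes the expected contract utility that agent $i$ would offer were it to price exactly at its true cost, i.e.\ $\Util^C_i=\sum_k p_k\,\Util(\expect_k,\Cost_i(\expect_k))=\sum_k p_k(-\ParaUtilT\expect_k-\ParaUtilP\Cost_i(\expect_k))$.

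The first and key step is to observe what the VCG payment does to utility. By definition $\Omega(t)=\Pi(t,u)+\Delta/\ParaUtilP$, so $\Omega$ is exactly agent $i$'s own price curve raised by the constant $\Delta/\ParaUtilP$, where $\Delta=\Util_i-\Util^*$ does not depend on $t$. Because the utility $\Util(t,\Price)=-\ParaUtilT t-\ParaUtilP\Price$ is linear in price, raising the price everywhere by $\Delta/\ParaUtilP$ lowers utility everywhere by exactly $\ParaUtilP\cdot(\Delta/\ParaUtilP)=\Delta$. Taking expectations over the intervals, the consumer's realized utility under $\Omega$ is $\Util_i-\Delta=\Util^*$; that is, $\sum_k p_k\,\Util(\expect_k,\omega_k(\expect_k))=\Util^*$.

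Next I would subtract this expression from the one defining $\Util^C_i$. The time terms $-\ParaUtilT\expect_k$ cancel term by term, leaving $\Util^C_i-\Util^*=\ParaUtilP\sum_k p_k\bigl(\omega_k(\expect_k)-\Cost_i(\expect_k)\bigr)=\ParaUtilP\cdot E[\text{\emph{profit}}]$, where the last equality is just the definition of expected profit. Rearranging gives the promised identity $E[\text{\emph{profit}}]=(\Util^C_i-\Util^*)/\ParaUtilP$. Since $\ParaUtilP>0$ (utility strictly decreases in price), the sign of the expected profit matches the sign of $\Util^C_i-\Util^*$: in case (1) the hypothesis $\Util^*>\Util^C_i$ forces $E[\text{\emph{profit}}]<0$, and in case (2) the hypothesis $\Util^C_i>\Util^*$ forces $E[\text{\emph{profit}}]>0$. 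The remaining inequalities $\Util_i>\Util^*$ and $\Util_i<\Util^*$ only record whether agent $i$ is the auction winner and play no role in the sign computation.

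The step I expect to be the main obstacle is the first one: justifying rigorously that the VCG rule delivers expected utility exactly $\Util^*$ rather than some quantity that varies with $t$. This hinges on $\Delta$ being a genuine constant (independent of completion time) so that the uniform vertical shift of the price curve produces a uniform shift of $-\Delta$ in utility across the whole time domain; once that is pinned down, passing the constant shift through the interval weights $p_k$ and the remaining algebra are entirely routine.
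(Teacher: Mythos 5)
Your proposal is correct and matches the paper's own argument in essence: both proofs rest on the same three facts --- $\omega_k(t)=\Price_k(t)+\Delta/\ParaUtilP$, cancellation of the $-\ParaUtilT\expect_k$ terms by linearity of $\Util$, and $\sum_k p_k=1$ to pass the constant $\Delta$ through the expectation --- so that the sign of $E[\text{profit}]$ reduces to the sign of $\Util^C_i-\Util^*$. The only difference is presentation: the paper runs this as a one-directional inequality chain for case (1) and declares case (2) ``similar,'' whereas your exact identity $E[\text{profit}]=(\Util^C_i-\Util^*)/\ParaUtilP$ delivers both cases at once (and incidentally sidesteps a harmless slip in the paper's final line, which drops a factor of $1/\ParaUtilP$ without affecting the sign).
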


\begin{proof} ~

1) $\Delta=\Util_i-\Util^*$ and $\Util_i > \Util^* > \Util_i^C$ \\
$\Rightarrow \Delta<\Util_i-\Util_i^C=\sum_{k=1}^n p_k [\Util(\hat{t_k}, \Price_k) - \Util(\hat{t_k},c_k)]$\\
$\Rightarrow \sum_{k=1}^n p_k [\Util(\hat{t_k}, \Price_k) - \Util(\hat{t_k},c_k)] - \Delta > 0$. \\
$0\leq p_k \leq 1$ and $\sum_{k=1}^n p_k = 1$\\
$\Rightarrow \Delta=\sum_{k=1}^n p_k \Delta$.\\
So $\sum_{k=1}^n p_k [\Util(\hat{t_k}, \Price_k) - \Delta - \Util(\hat{t_k},c_k)] > 0$ \\
$\Rightarrow \sum_{k=1}^n p_k (-\ParaUtilP \Price_k - \Delta + \ParaUtilP c_k) > 0$.\\
So $E[\text{\emph{profit}}] = \sum_{k=1}^n p_k (\omega_k(\hat{t_k}) - \Cost_i(\hat{t_k})) = -\sum_{k=1}^n p_k (-\ParaUtilP \Price_k - \Delta + \ParaUtilP c_k) < 0$.

Therefore $\Util_i > \Util^* > \Util^C_i \Rightarrow E[\text{\emph{profit}}] < 0$;

2) Similar to above, we can prove $\Util^C_i > \Util^* > \Util_i \Rightarrow E[\text{\emph{profit}}] > 0$. 
\end{proof}

\begin{thm} 
Every agent truthfully revealing its cost is a weakly-dominant strategy. 
\end{thm}

\begin{proof}
Truthfully bidding means $\Util_i = \Util^C_i$.

1) The strategy of overbidding, $\Util_i > \Util_i^C$, is dominated by truthfully bidding. 

When $\Util^* > \Util_i > \Util_i^C$, both strategies yield $\text{\emph{payoff}}_i=0$. 

When $\Util_i > \Util_i^C > \Util^*$, both strategies yield $\text{\emph{payoff}}_i=E[\text{\emph{profit}}]>0$. 

However, when $\Util_i > \Util^* > \Util_i^C$, overbidding yields $\text{\emph{payoff}}_i=E[\text{\emph{profit}}]<0$ (Theorem~\ref{thm:vcgPayment}) while truthfully bidding yields $\text{\emph{payoff}}_i=0$. 

So overbidding is dominated by truthfully bidding.

2) The strategy of underbidding, $\Util_i < \Util_i^C$, is dominated by truthfully bidding.

When $\Util^*>\Util^C_i>\Util_i$, both strategies yield $\text{\emph{payoff}}_i=0$. 

When $\Util_i^C > \Util_i > \Util^*$, both strategies yield $\text{\emph{payoff}}_i=E[\text{\emph{profit}}]>0$. 

However, when $\Util_i^C > \Util^* > \Util_i$, underbidding yields $\text{\emph{payoff}}_i=0$ while truthfully bidding yields $\text{\emph{payoff}}_i=E[\text{\emph{profit}}]>0$ (Theorem~\ref{thm:vcgPayment}). 

So underbidding is dominated by truthfully bidding.

So truthfully bidding is a weakly-dominant strategy.
\end{proof}

\section{Bertrand Model} \label{apx:bertrand}

Recall that in the Differentiated Bertrand Model with nonidentical demand functions, the best response functions are given by
\begin{align} 
\mu_1=f_1(\mu_2) =  a_1\mu_2 + b_1,  \\
\mu_2=f_2(\mu_1)= a_2 \mu_1 + b_2,
\end{align}
where
\[
a_i = \frac{\beta_i}{2\alpha_i}, \quad b_i = \frac{\gamma_i}{2\alpha_i},\quad \text{for } i=1,2.
\]
If $a_1 a_2 < 1$, there exists a Nash Equilibrium $(\mu_1^*, \mu_2^*)$, which is the unique solution to the following equations,
\[
\begin{cases}
\mu_1^* = f_1(\mu_2^*),\\
\mu_2^* = f_2(\mu_1^*).
\end{cases}
\]

Suppose that both agents keep updating their prices to the best response to the currently observed price of the other party. We show that their prices eventually converges to the NE $(\mu_1^*, \mu_2^*)$.

We allow for asynchronous updates. Thus at the $x$-th update, there are three possibilities,
\begin{enumerate}
\item[(1)] Only agent 1 updates his prices, in which case
\begin{equation}
(\mu_{1,x+1},\mu_{2,x+1})  = F_1(\mu_{1,x},\mu_{2,x}) = (f_1(\mu_{2,x}), \mu_{2,x}). \label{eq:f1}
\end{equation}
\item[(2)] Only agent 2 updates his prices, in which case
\begin{equation}
(\mu_{1,x+1},\mu_{2,x+1}) = F_2(\mu_{1,x},\mu_{2,x})=(\mu_{1,x}, f_2(\mu_{1,x})). \label{eq:f2}
\end{equation}
\item[(3)]  Both agents update their prices, in which case
\begin{equation}
(\mu_{1,x+1},\mu_{2,x+1})  = F_3(\mu_{1,x},\mu_{2,x}) = (f_1(\mu_{2,x}), f_2(\mu_{1,x})). \label{eq:both}
\end{equation}
\end{enumerate}
Thus $(\mu_{1,x}, \mu_{2,x}) = G_x\circ G_{x-1} \circ\dots\circ G_1(\mu_{1,0}, \mu_{2,0})$, where $\mu_{1,0}$ and $\mu_{2,0}$ are the initial prices, and $G_i \in \{F_1, F_2, F_3\}$ for $i=1,2,\dots,x$.
We assume that both agents keep updating their prices, i.e., 
\begin{equation}\label{eq:diverge}
\lim_{x\to\infty} \sum_{i=1}^x \mathbf{1}\{G_i \neq F_1\} = \lim_{x\to\infty}  \sum_{i=1}^x \mathbf{1}\{G_i \neq F_2\} = \infty. 
\end{equation}

\begin{thm}
Assume $a_1 a_2 < 1$. If both agents keep updating their prices, i.e. \eqref{eq:diverge} holds, then for any initial prices $\mu_{1,0}$ and $\mu_{2,0}$, we have
\[
\lim_{x\to\infty} (\mu_{1,x}, \mu_{2,x}) = (\mu_1^*, \mu_2^*).
\]
\end{thm}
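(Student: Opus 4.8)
The plan is to reduce the asynchronous dynamics to a contraction argument on the error relative to the fixed point, using a \emph{weighted} max-norm so that every single-agent update is non-expansive. First I would set $e_{i,x} = \mu_{i,x} - \mu_i^*$ and subtract the equilibrium identities $\mu_1^* = a_1\mu_2^* + b_1$ and $\mu_2^* = a_2\mu_1^* + b_2$ from the update rules \eqref{eq:f1}--\eqref{eq:both}. The intercepts $b_i$ cancel, so the three maps act linearly and homogeneously on the errors: $F_1$ sends $e_1 \mapsto a_1 e_2$ and fixes $e_2$; $F_2$ sends $e_2 \mapsto a_2 e_1$ and fixes $e_1$; $F_3$ does both. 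Thus the claim is equivalent to $(e_{1,x},e_{2,x}) \to (0,0)$.

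The difficulty is genuinely the asynchrony: \eqref{eq:diverge} only guarantees that each agent updates infinitely often, \emph{not} that the synchronous map $F_3$ (the one obvious contraction) is ever applied; and since only $a_1 a_2 < 1$ is assumed, an individual coefficient may satisfy $a_i > 1$, so a lone $F_1$ or $F_2$ step can enlarge the ordinary max-norm. To remedy this I would use the weighted max-norm $\|(e_1,e_2)\| = \max(|e_1|/w_1,\,|e_2|/w_2)$ with $w_1 = \sqrt{a_1}$, $w_2 = \sqrt{a_2}$, which are positive since the demand parameters $\alpha_i,\beta_i$ are positive. This choice balances the coefficients, $a_1 w_2/w_1 = a_2 w_1/w_2 = \sqrt{a_1 a_2} =: \rho < 1$, and a one-line check then shows each of $F_1,F_2,F_3$ is non-expansive for $\|\cdot\|$, with the key reset property that whenever agent $i$ updates at a step $s$, its own weighted coordinate becomes $\rho$ times the other coordinate's pre-update value, hence at most $\rho\, m_{s-1}$, where $m_x := \|(e_{1,x},e_{2,x})\|$.

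The core estimate I would then derive is a contraction indexed by the \emph{least-recent} of the two most recent updates. Write $a_x = |e_{1,x}|/w_1$, $b_x = |e_{2,x}|/w_2$, and let $s(x),t(x)\le x$ be the most recent times agents $1$ and $2$ respectively updated. Non-expansiveness makes $m_x$ non-increasing. Since $e_1$ is frozen between agent-$1$ updates, $a_x = a_{s(x)} \le \rho\, m_{s(x)-1}$, and likewise $b_x = b_{t(x)} \le \rho\, m_{t(x)-1}$; taking the max and using monotonicity of $m$ yields $m_x \le \rho\, m_{\psi(x)}$ with $\psi(x) = \min(s(x),t(x))-1 < x$.

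Finally I would close with a limit. By \eqref{eq:diverge} both $s(x),t(x)\to\infty$, so $\psi(x)\to\infty$; and $m_x$, being non-increasing and bounded below by $0$, converges to some $m^\star \ge 0$. Letting $x\to\infty$ in $m_x \le \rho\, m_{\psi(x)}$ gives $m^\star \le \rho\, m^\star$, forcing $m^\star = 0$ because $\rho < 1$. Hence $m_x\to 0$, i.e. $(\mu_{1,x},\mu_{2,x})\to(\mu_1^*,\mu_2^*)$ for every initial pair. I expect the main obstacle to be exactly this asynchrony: the weighted norm (which turns each individual best-response step into a non-expansion despite a possibly large $a_i$) together with the bound indexed by $\min(s,t)$ is what substitutes for the missing guarantee that a true contraction step $F_3$ ever occurs.
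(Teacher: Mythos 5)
Your proof is correct, but it takes a genuinely different route from the paper's. The paper proceeds algebraically: using the identities $F_i^{(2)}=F_i$ and $F_3\circ F_i = F_{3-i}\circ F_i$, it rewrites an arbitrary asynchronous update sequence into the canonical form $F_2^{(m_2)}\circ (F_1\circ F_2)^{(m)}\circ F_1^{(m_1)}\circ F_3^{(m_3)}$, then splits into two cases (all updates synchronous, or $m\to\infty$) and concludes from the fact that the composed maps $g_1=f_1\circ f_2$ and $g_2=f_2\circ f_1$ are affine contractions with factor $a_1a_2<1$. Your argument never normalizes the word in $\{F_1,F_2,F_3\}$ at all: the weighted max-norm with $w_i=\sqrt{a_i}$ (well defined since $a_i>0$, as you note) makes every individual best-response step non-expansive even when some $a_i>1$\,---\,exactly the situation in which an unweighted max-norm argument would fail\,---\,and your staleness bound $m_x\le\rho\,m_{\min(s(x),t(x))-1}$ with $\rho=\sqrt{a_1a_2}$ replaces the paper's case analysis with a single inequality, closed by the monotone-limit step $m^\star\le\rho\,m^\star$. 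This is essentially the classical weighted-max-norm technique for asynchronous fixed-point iterations (in the style of Chazan--Miranker and Bertsekas--Tsitsiklis), and what it buys is robustness: it extends verbatim to more than two agents and to arbitrary interleavings, and it gives geometric decay per full round in which both agents have refreshed, whereas the paper's rewriting identities exploit the special two-agent, idempotent structure and would not generalize as readily; what the paper's route buys in exchange is a completely self-contained elementary computation with the explicit rate $(a_1a_2)^k$ along the canonical subsequence. Two minor points to make explicit when writing yours up: the inequality $m_x\le\rho\,m_{\psi(x)}$ is only available once each agent has updated at least once (guaranteed eventually by \eqref{eq:diverge}, which suffices for the limit argument), and in a synchronous step $F_3$ both coordinates contract against the \emph{pre-update} values of the opposite coordinate, which your computation already uses correctly.
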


\begin{proof} For any function $f$, let $f^{(0)} = \text{id}$, the identity function, and $f^{(m)} = f^{(m-1)}\circ f$ for $m \geq 1$. 
Note that $F_i^{(2)} = F_i$ and $F_3 \circ F_i = F_{3-i}\circ F_i$ for $i=1,2$. By repeated applications of these relations, we obtain  
\begin{equation}\label{eq:iteration}
(\mu_{1,x}, \mu_{2,x}) =  F_2^{(m_2)}\circ H_1^{(m)} \circ F_1^{(m_1)} \circ F_3^{(m_3)}(\mu_{1,0}, \mu_{2,0}),
\end{equation}
where $H_1 = F_1\circ F_2$, $m_1\in \{0,1\}$, $m_2 \in \{0,1\}$ and $m_1 + m_2 + m_3 + m = N_x$. Note that $N_x$ is the number of \emph{effective} updates that can potentially change the prices. Without loss of generality, we assume that $N_x= x$, which amounts to discarding those updates that cannot change the price of either agent. 

\itemize{
\item Case I: $m_1 = m_2 = m = 0$

In this case, both agents always synchronize their updates and
\begin{align*}
& (\mu_{1,x}, \mu_{2,x}) = F_3^{(x)} (\mu_{1,0}, \mu_{2,0}) \\
& = \begin{cases}
(g_1^{(\frac{x}{2})}(\mu_{1,0}), g_2^{(\frac{x}{2})} (\mu_{2,0})), & \text{for $x$ even},\\
(g_1^{(\frac{x-1}{2})}\circ f_1(\mu_{2,0}), g_2^{(\frac{x-1}{2})} \circ f_2 (\mu_{1,0})), & \text{for $x$ odd},
\end{cases}
\end{align*}
where $g_1 = f_1\circ f_2$ and $g_2 = f_2\circ f_1$. Note that $g_1(\mu_1^*) = \mu_1^*$, and, for any $z$,
\[
g_1(z) - \mu_1^*  = g_1(z) - g_1(\mu_1^*) = a_1[f_2(z) - f_2(\mu_1^*)] = a_1 a_2 (z-\mu_1^*).
\]
Thus for any $z$,
\[
|g_1^{(k)}(z) - \mu_1^*| = (a_1 a_2)^k |z-\mu_1^*| \to 0, \quad \text{as } k\to \infty.
\]
Similarly, for any $z$,
\begin{equation}\label{eq:g2}
|g_2^{(k)}(z) - \mu_2^*| = (a_1 a_2)^k |z-\mu_2^*| \to 0, \quad \text{as } k\to \infty.
\end{equation}
It follows that
\[
\lim_{x\to\infty} (\mu_{1,x}, \mu_{2,x}) = (\mu_1^*, \mu_2^*).
\]

\item Case II: $m_1 + m_2 + m > 0$

In this case, the agents do not always synchronize their updates and $m\to\infty$ in \eqref{eq:iteration}. By symmetry, we assume $m_1 = 1$; the other case can be dealt with similarly. Note that
\[
H_1^{(m)}\circ F_1 (\mu_1, \mu_2) = (f_1\circ g_2^{(\lfloor \frac{m}{2}\rfloor)}(\mu_2), g_2^{(\lceil\frac{m}{2}\rceil)} (\mu_2)).
\]
Let $(\mu_1,\mu_2) = F_3^{(m_3)}(\mu_{1,0}, \mu_{2,0})$. By \eqref{eq:g2},
\[
\lim_{m\to\infty} H_1^{(m)}\circ F_1 (\mu_1, \mu_2) = (f_1(\mu_2^*), \mu_2^*) = (\mu_1^*, \mu_2^*).
\]
It follows that 
\begin{align*}
& \lim_{x\to\infty} (\mu_{1,n}, \mu_{2,n}) = \lim_{m\to\infty} F_2^{(m_2)} \circ H_1^{(m)}\circ F_1 (\mu_1, \mu_2) \\ 
& = F_2^{(m_2)} (\mu_1^*, \mu_2^*) = (\mu_1^*, \mu_2^*),
\end{align*}
where in the last step we have used the fact that $(\mu_1^*, \mu_2^*)$ is a fixed point of $F_2^{(m_2)}$ for $m_2 = 0,1$.
}
\end{proof}

\end{document}